\newtheorem{theorem}{Theorem}
\theoremstyle{plain}
\newtheorem{lemma}{Lemma}
\newtheorem{notation}{Notation}
\newtheorem{problem}{Problem}
\newtheorem{remark}{Remark}
\numberwithin{equation}{section}
\let\pdfoutput=\undefined\fi
\begin{document}
\title[$p$-Adic Statistical Field Theory and DBNs]{$p$-Adic Statistical Field Theory and Deep Belief Networks}
\author[Z\'{u}\~{n}iga-Galindo]{W. A. Z\'{u}\~{n}iga-Galindo}
\address{University of Texas Rio Grande Valley\\
School of Mathematical \& Statistical Sciences\\
One West University Blvd\\
Brownsville, TX 78520, United States }
\email{wilson.zunigagalindo@utrgv.edu}
\thanks{The author was partially supported by the Lokenath Debnath Endowed Professorship.}

\begin{abstract}
In this work we initiate the study of the correspondence between $p$-adic
statistical field theories (SFTs) and neural networks (NNs). In general
quantum field theories over a $p$-adic spacetime can be formulated in a
rigorous way. Nowadays these theories are considered just mathematical toy
models for understanding the problems of the true theories. In this work we
show these theories are deeply connected with the deep belief networks (DBNs).
Hinton et al. constructed DBNs by stacking several restricted Boltzmann
machines (RBMs). The purpose of this construction is to obtain a network with
a hierarchical structure (a deep learning architecture). An RBM
corresponds\ to a certain spin glass, we argue that a DBN should correspond to
an ultrametric spin glass. A model of such a system can be easily constructed
by using $p$-adic numbers. In our approach, a $p$-adic SFT corresponds to a
$p$-adic continuous DBN, and a discretization of this theory corresponds to a
$p$-adic discrete DBN. We show that these last machines are universal
approximators. In the $p$-adic framework, the correspondence between SFTs and
NNs is not fully developed. We point out several open problems.

\end{abstract}
\maketitle
\tableofcontents

\section{Introduction}

Recently, it has been proposed the existence of a correspondence between
neural networks (NNs) and quantum field theories (QFTs), more precisely with
Euclidean QFTs, see, e.g., \cite{Batchits et al}, \cite{Dyer et al},
\cite{Erbin}, \cite{Halverson et al}, \cite{Maiti}, \cite{Moritz Dahmen},
\cite{Roberts-Yaida}, \cite{Yaida}, see also \cite{Buice}, \cite{Buice2},
\cite{Chow}, and the references therein. This correspondence take different
forms depending on the architecture of the networks involved. This article
aims to initiate the study of the mentioned correspondence in the framework of
the non-Archimedean statistical field theory (SFT), see, e.g.,
\cite{Zunifa-RMP-2022}, \cite{Arroyo-Zuniga}, see also \cite{Abdesselam},
\cite{Kh1}-\cite{Kh2}, \cite{Kochubei et al}, \cite{LM89},
\cite{Mendoza-Zuniga}, \cite{Mis-1}-\cite{Mis-3}, \cite{Zuniga-JFAA}, and the
references therein. In \ this case, the corresponding NNs are new hierarchical
generalizations of the classical restricted Boltzmann machines (RBMs), see,
e.g., \cite{Decelle et al}, \cite{Fischer-Igel}. Here, we focus on the
$p$-adic counterparts of the convolutional deep belief networks (DBNs), see,
e.g., \cite{Dong et al}, \cite{Hinton et al}, \cite{Honglak et al}, and the
references therein.

A fundamental problem is the understanding of the structure of space-time at
the level of the Planck scale. In the 1930s Bronstein showed that general
relativity and quantum mechanics imply that the uncertainty $\Delta x$ of any
length measurement satisfies $\Delta x\geq L_{\text{Planck}}:=\sqrt
{\frac{\hbar G}{c^{3}}}$, where $L_{\text{Planck}}$ is the Planck length
($L_{\text{Planck}}\approx10^{-33}$ $cm$). This inequality implies that
space-time is not an infinitely divisible continuum (mathematically speaking,
the spacetime must be a completely disconnected topological space at the level
of the Planck scale). Bronstein's \ inequality has motivated the development
of several different physical theories. At any rate, this inequality implies
the need of using non-Archimedean mathematics in models dealing with the
Planck scale. In the 1980s, Volovich proposed the conjecture that the
space-time at the Planck scale has a $p$-adic nature, see, e.g.,
\cite{Volovich1}. This conjecture has propelled a wide variety of
investigations in cosmology, quantum mechanics, string theory, QFT, etc., and
the influence of this conjecture is still relevant nowadays, see, e.g.,
\cite{Abdesselam}, \cite{Arroyo-Zuniga}-\cite{Av-5}, \cite{Bocardo-Zuniga-1}%
-\cite{B-F}, \cite{Dra01}-\cite{DD97}, \cite{Fuquen et al}-\cite{GC-Zuniga},
\cite{Gubser}-\cite{Gubser et al-1}, \cite{Harlow et al}, \cite{Kh1}%
-\cite{Khrennikov-Kozyrev}, \cite{Koch}-\cite{Kochubei et al}, \cite{LM89},
\cite{Mendoza-Zuniga}, \cite{Mis-1}-\cite{M-P-V}, \cite{Mukhamedov-1}%
-\cite{R-T-V}, \cite{V-V-Z}-\cite{Volovich1}, \cite{Zabrodin}%
-\cite{Zuniga-JFAA}.

A $p$-adic number is a series of the form%
\begin{equation}
x=x_{-k}p^{-k}+x_{-k+1}p^{-k+1}+\ldots+x_{0}+x_{1}p+\ldots,\text{ with }%
x_{-k}\neq0\text{,} \label{p-adic-number}%
\end{equation}
where $p$ is a fixed prime number, and the $x_{j}$s \ are numbers in the set
$\left\{  0,1,\ldots,p-1\right\}  $. The set of all possible series of the
form (\ref{p-adic-number}) constitutes the field of $p$-adic numbers
$\mathbb{Q}_{p}$. There are natural field operations, sum and multiplication,
on series of the form (\ref{p-adic-number}), see, e.g., \cite{Koblitz}. There
is also a natural norm in $\mathbb{Q}_{p}$ defined as $\left\vert x\right\vert
_{p}=p^{k}$, for a nonzero $p$-adic number of the form (\ref{p-adic-number}).
The field of $p$-adic numbers with the distance induced by $\left\vert
\cdot\right\vert _{p}$ is a complete ultrametric space. The ultrametric (or
non-Archimedean) property refers to the fact that $\left\vert x-y\right\vert
_{p}\leq\max\left\{  \left\vert x-z\right\vert _{p},\left\vert z-y\right\vert
_{p}\right\}  $ for any $x$, $y$, $z\in\mathbb{Q}_{p}$. We denote by
$\mathbb{Z}_{p}$ the unit ball, which consists of all series with expansions
of the form (\ref{p-adic-number}) with $-k\geq0$. The unit ball is an infinite
rooted tree, with valence $p$. The field $\mathbb{Q}_{p}$ has a tree-like
(hierarchical) structure. We extend the $p-$adic norm to $\mathbb{Q}_{p}^{N}$
by taking $||x||_{p}=\max_{1\leq i\leq N}|x_{i}|_{p}$, for $x=(x_{1}%
,\dots,x_{N})\in\mathbb{Q}_{p}^{N}$.

The space $\mathbb{Q}_{p}^{N}$ has a very rich mathematical structure. The
axiomatic quantum field \ theory can be extended to $\mathbb{Q}_{p}^{N}$, see,
e.g., \cite{Glimm-Jaffe}, \cite{Kleinert et al}, \cite{Simon-0}\ for the
classical theory. In \cite{Mendoza-Zuniga}, a family of quantum scalar fields
over a $p-$adic spacetime which satisfy $p-$adic analogues of the
G\aa rding--Wightman axioms was constructed. In \cite{Arroyo-Zuniga}, a large
class of interacting Euclidean quantum field theories was constructed by using
white noise calculus. These quantum fields fulfill all the
Osterwalder-Schrader axioms, except the reflection positivity. In
\cite{Zunifa-RMP-2022}, the author constructs, in a rigorous mathematical way,
interacting Euclidean quantum field theories on a $p$-adic spacetime. The main
result is the construction of a measure on a function space which allows a
rigorous definition of the partition function. The advantage of the approach
presented is that all the perturbation calculations can be carried out in the
standard way using functional derivatives, but in a mathematically rigorous
way. In \cite{Abdesselam} Abdesselam et al. present the construction of scale
invariant non-Gaussian generalized stochastic processes over three dimensional
$p$-adic space. The construction includes that of the associated squared
field, this field has a dynamically generated anomalous dimension which
rigorously confirms a prediction made more than forty years ago by K. G.
Wilson. Traditionally the $p$-adic QFTs has been considered just mathematical
toy models. In this article, we show that these theories are deeply connected
with hierarchical versions of RBMs, and then with deep learning.

An Euclidean quantum field theory is a probability measure of the form%
\[
d\mathbb{P}\left(  \varphi\right)  =\frac{e^{-E\left(  \varphi\right)
}d\mathbb{P}_{0}\left(  \varphi\right)  }{\int_{H}e^{-E\left(  \varphi\right)
}d\mathbb{P}_{0}\left(  \varphi\right)  }%
\]
on a space $H$ of functions $\varphi:\mathbb{Q}_{p}^{N}\rightarrow\mathbb{R}$,
where $\mathbb{P}_{0}$\ is a Gaussian measure on $H$. For the sake of
simplicity, along this article we assume that $N=1$. By a discretization
process, which consists in finding the restriction of $\mathbb{P}$ to a
suitable finite dimensional vector subspace $H_{l}$ of $H$, one obtains a
discrete energy functional $E_{l}$ and a finite dimensional Boltzmann
distribution $\mathbb{P}_{l}$, for $l\geq l_{0}$, such that $\mathbb{P}%
_{l}\rightarrow\mathbb{P}$ in some sense, see \cite{Zunifa-RMP-2022}\ and the
references therein. The discrete energy functional of a $\phi^{4}$-theory has
the form%
\[
E_{l}\left(  \phi\right)  =%
{\displaystyle\sum\limits_{i,j\in G_{l}}}
\phi_{i}w_{i,j}^{l}\phi_{j}+%
{\displaystyle\sum\limits_{i\in G_{l}}}
a_{i}^{l}\phi_{i}+%
{\displaystyle\sum\limits_{i\in G_{l}}}
b_{i}^{l}\phi_{i}^{2}+%
{\displaystyle\sum\limits_{i\in G_{l}}}
b_{i}^{l}\phi_{i}^{4},
\]
where $G_{l}=\mathbb{Z}_{p}/p^{l}\mathbb{Z}_{p}\simeq\mathbb{Z}/p^{l}%
\mathbb{Z}$ is the additive group of the integers modulo $p^{l}$, and
$\phi=\left[  \phi_{i}\right]  _{i\in G_{l}}$.

We identify the elements of $G_{l}$ with integers of the form $i\boldsymbol{=}%
i_{0}+i_{1}p+\ldots+i_{l-1}p^{l-1}$. The restriction of $\left\vert
\cdot\right\vert _{p}$ to $G_{l}$ induces a norm, and thus $G_{l}$ is a finite
ultrametric space. In addition, $G_{l}$ can be identified with the set of
branches (vertices at the top level) of a rooted tree with $l+1$ levels (or
layers) and $p^{l}$ branches.

\textit{ A }$p$\textit{-adic discrete deep belief network} is a discrete
Euclidean QFT defined by an energy functional of the form%
\[
E_{l}\left(  \boldsymbol{v}_{l},\boldsymbol{h}_{l};\boldsymbol{\theta}%
_{l}\right)  =-%
{\displaystyle\sum\limits_{j\in G_{l}}}
\text{ }%
{\displaystyle\sum\limits_{k\in G_{l}}}
w_{k,j}^{l}v_{k}^{l}h_{j}^{l}-%
{\displaystyle\sum\limits_{j\in G_{l}}}
a_{j}^{l}v_{j}^{l}-%
{\displaystyle\sum\limits_{j\in G_{l}}}
b_{j}^{l}h_{j}^{l}\text{,}%
\]
where $\boldsymbol{v}_{l}\boldsymbol{=}\left[  v_{k}^{l}\right]  _{k\in G_{l}%
}$ is the state of the visible field, $\boldsymbol{h}_{l}\boldsymbol{=}\left[
h_{k}^{l}\right]  _{k\in G_{l}}$ is the state of the hidden field, and
$\boldsymbol{\theta}=\left(  w_{k,j},a_{j},b_{j}\right)  $. We assume that the
fields $\boldsymbol{v}_{l}$, $\boldsymbol{h}_{l}$\ are binary-valued, i.e.,
$\boldsymbol{v}_{l}$, $\boldsymbol{h}_{l}\in\left\{  0,1\right\}  ^{\#G_{l}}$,
where $\#G_{l}$ is the cardinality of $G_{l}$.

The corresponding Boltzmann distribution is given by%
\[
\boldsymbol{P}_{l}(\boldsymbol{v}_{l},\boldsymbol{h}_{l})=\frac{\exp\left(
-E_{l}\left(  \boldsymbol{v}_{l},\boldsymbol{h}_{l}\right)  \right)  }{%
{\displaystyle\sum\limits_{\boldsymbol{v}_{l}^{\prime},\boldsymbol{h}%
_{l}^{\prime}}}
\exp\left(  -E_{l}\left(  \boldsymbol{v}_{l}^{\prime},\boldsymbol{h}%
_{l}^{\prime}\right)  \right)  }.
\]
If the entries of the matrix $\left[  w_{k,j}^{l}\right]  $ do not depend on
the topology of $G_{l}$ neither on the group structure of $G_{l}$, then
$E_{l}\left(  \boldsymbol{v}_{l},\boldsymbol{h}_{l};\boldsymbol{\theta}%
_{l}\right)  $ defines a standard RBM. If the entries of the matrix $\left[
w_{k,j}\right]  $ depend on topology of $G_{l}$ and on the group structure of
$G_{l}$, then $\left[  w_{k,j}\right]  $ is a Parisi-type matrix, see e.g.
\cite{DKVK-Parisi} and the references therein. In this article we assume that
$w_{k,j}^{l}=w\left(  k-j;l\right)  $, in this case $E_{l}\left(
\boldsymbol{v}_{l},\boldsymbol{h}_{l};\boldsymbol{\theta}_{l}\right)  $ is the
energy functional of a new non-Archimedean convolutional deep belief network
(DBN), see e.g. \cite{Dong et al}, \cite{Honglak et al}. Since $G_{l}$ is an
additive group, the weight $w_{k,j}^{l}$ depends on one parameter $k-j\in
G_{l}$. More generally, $E_{l}$ has a translational symmetry: $k\rightarrow
k+i_{0}$, $j\rightarrow j+i_{0}$. We denote by $DBN(p,l,\boldsymbol{\theta
}_{l})$ the $p$-adic discrete deep belief network attached to $E_{l}\left(
\boldsymbol{v}_{l},\boldsymbol{h}_{l};\boldsymbol{\theta}_{l}\right)  $.

The $p$-adic convolutional DBNs are a particular type of DBNs. The
translational invariance implies that the $p$-adic DBNs have less parameters
than the standard DBN, see Section \ref{Section p-adic DBN}\ for further
details. Two fundamental questions come up immediately: can \ the $p$-adic
convolutional DBNs perform computations?; does the computational power of the
$p$-adic convolutional\ DBNs increase as the number of levels of the $G_{l}$
tree increases?. The answers to both questions is yes, see Theorems
\ref{Theorem_2}, \ref{Theorem_3}. These theorems constitute the main results
of this article.

We identify $G_{l}$ with a subset of $G_{l+1}$. Given an
$DBN(p,l,\boldsymbol{\theta}_{l})$ we construct another
$DBN(p,l+1,\boldsymbol{\theta}_{l},\boldsymbol{w}_{l+1},b_{j_{0}}^{l+1})$,
here $\boldsymbol{w}_{l+1}\in\mathbb{R}^{\#G_{l}}$, $b_{j_{0}}^{l+1}%
\in\mathbb{R}$, with an extra layer and an extra hidden unit, and with the
same visible units, whose energy functional $E_{l+1}(\boldsymbol{v}%
_{l+1},\boldsymbol{h}_{l+1};\boldsymbol{\theta}_{l},\boldsymbol{w}%
_{l+1},b_{j_{0}}^{l+1})=E_{l+1}(\boldsymbol{v}_{l},\boldsymbol{h}%
_{l+1};\boldsymbol{\theta}_{l},\boldsymbol{w}_{l+1},b_{j_{0}}^{l+1})$ is an
extension of $E_{l}\left(  \boldsymbol{v}_{l},\boldsymbol{h}_{l}%
;\boldsymbol{\theta}_{l}\right)  $, here $\boldsymbol{h}_{l+1}=\left[
\begin{array}
[c]{c}%
\boldsymbol{h}_{l}\\
h_{j_{0}}^{l+1}%
\end{array}
\right]  $, and $h_{j_{0}}^{l+1}$ is the extra hidden unit, see Section
\ref{SEction-Key-Construction}. In Theorem \ref{Theorem_2}, by adapting the
mathematical techniques introduced by Le Roux and Bengio in \cite{Le roux et
al 1}, we show that if $KL(\boldsymbol{Q}\left(  \boldsymbol{v}_{l}\right)
\mid\boldsymbol{P}_{l}\left(  \boldsymbol{v}_{l};\boldsymbol{\theta}%
_{l}\right)  )>0$, where $KL$ denotes the relative entropy between an
arbitrary probability distribution $\boldsymbol{Q}(\boldsymbol{v})$ on
$\left\{  0,1\right\}  ^{m}$ and $\boldsymbol{P}_{l}\left(  \boldsymbol{v}%
_{l};\boldsymbol{\theta}_{l}\right)  $, then there exists an
$DBN(p,l+1,\boldsymbol{\theta}_{l},\boldsymbol{w}_{l+1},b_{j_{0}}^{l+1})$
constructed from $DBN(p,l,\boldsymbol{\theta}_{l})$ by adding one layer with
marginal probability distribution $\boldsymbol{P}_{l+1}\left(  \boldsymbol{v}%
_{l};\boldsymbol{\theta}_{l},\boldsymbol{w}_{l+1},b_{j_{0}}^{l+1}\right)  $
satisfying%
\[
KL(\boldsymbol{Q}\left(  \boldsymbol{v}_{l}\right)  \mid\boldsymbol{P}%
_{l+1}\left(  \boldsymbol{v}_{l};\boldsymbol{\theta}_{l},\boldsymbol{w}%
_{l+1},b_{j_{0}}^{l+1}\right)  )<KL(\boldsymbol{Q}\left(  \boldsymbol{v}%
_{l}\right)  \mid\boldsymbol{P}_{l}\left(  \boldsymbol{v}_{l}%
;\boldsymbol{\theta}_{l}\right)  ),
\]
for some $\boldsymbol{\theta}_{l},\boldsymbol{w}_{l+1},b_{j_{0}}^{l+1}$. This
inequality implies that the $p$-adic discrete DBNs are universal
approximators. In Theorem \ref{Theorem_3}, we show that $\boldsymbol{Q}\left(
\boldsymbol{v}_{l}\right)  $ can be approximated arbitrarily well, in the
sense of the $KL$ distance, by an $DBN(p,l_{0}+k,\boldsymbol{\theta}_{l_{0}%
+k})$, where $k$ is the number of input vectors whose probability in not zero.

In Section \ref{Section2}, we identify a $p$-adic continuos DBN with a SFT
with energy functional $E\left(  \boldsymbol{v},\boldsymbol{h}\right)  $,
where $\boldsymbol{v},\boldsymbol{h}$ are general functions on the unit ball
$\mathbb{Z}_{p}$. We show that there is a dicretization method to obtain a
discrete energy functional $E_{l}\left(  \boldsymbol{v}_{l},\boldsymbol{h}%
_{l}\right)  $ on $G_{l}$, and thus a $p$-adic discrete DBN. The $p$-adic
discrete DBNs, include the classical ones, as well as the $p$-adic
convolutional\ DBNs considered here. Not all these networks can be considered
useful for engineering purposes, particularly those that do not have the
property of being universal approximators should be discarded. Then, our
Theorems \ref{Theorem_2}, \ref{Theorem_3} show that the study of the
correspondence between $p$-adic STFs and NNs\ is a promising field. Here, we
propose a research program that aims to provide a precise formulation of the
mentioned correspondence, see Section \ref{Section4}. In this work, we do not
consider applications or simulations of the $p$-adic discrete DBNs. This will
be discussed in a forthcoming publication \cite{Zuniga et al}. Neither,
perturbative calculations of correlation functions via Feynman diagrams. We
plan to discuss these matters in a future publication.

In the middle of the 80s the idea of using ultrametric spaces to describe the
states of complex systems, which naturally possess a hierarchical structure,
emerged in the works of Parisi, Frauenfelder, Stein, among others, see, e.g.,
\cite{DKVK-Parisi}, \cite{D-K-K-V}, \cite{Fraunfelder et al}, \cite{M-P-V},
\cite{R-T-V}. A central paradigm in physics of complex systems (for instance
proteins) asserts that the dynamics of such systems can be modeled as a random
walk in the energy landscape of the system, see, e.g., \cite{Fraunfelder et
al}, \cite{Kozyrev SV}, and the references therein.\ In protein physics, it is
regarded as one of the most profound ideas put forward to explain the nature
of distinctive life attributes. Typically these landscapes have a huge number
of local minima. It is clear that a description of the dynamics on such
landscapes requires an adequate approximation. By using, interbasin kinetics
methods, an energy landscape is approximated by an ultrametric space (a finite
rooted tree called a disconnectivity graph) and a function on this space
describing the distribution of the activation barriers, see, e.g.,
\cite{Becker et al}.

An ultrametric space $(\mathcal{M},d)$ is a metric space $\mathcal{M}$ with a
distance satisfying the strong triangle inequality $d(A,B)\leq\max\left\{
d\left(  A,C\right)  ,d\left(  B,C\right)  \right\}  $ for any three points
$A$, $B$, $C$ in $\mathcal{M}$. The field of $p$-adic numbers $\mathbb{Q}_{p}$
constitutes a central example of an ultrametric space. We use the term
`ultrametricity' to mean the emergence of ultrametric spaces in physical
models, and the term `hierarchical system' to mean a system whose states are
organized in a tree-like structure. A large class of these systems can be
modeled using ultrametric spaces. Ultrametric models have been applied in many
areas, including, brain and mental states models, relaxation of complex
systems, spin glasses, evolutionary dynamics, among other areas, see, e.g.,
\cite{Av-4}-\cite{Av-5}, \cite{D-K-K-V}, \cite{Khrenikov2A}%
-\cite{Khrennikov-Kozyrev}, \cite{Mukhamedov-1}-\cite{R-T-V}, \cite{V-V-Z},
\cite{Zunifa-RMP-2022}-\cite{Zuniga-LNM-2016}, and the references therein.

The Ising models over ultrametric spaces have been studied intensively, see,
e.g., \cite{DysonFreeman}, \cite{Gubser et al-1}, \cite{Khrennikov et al},
\cite{LM89}, \cite{Mis-4}, \cite{Mukhamedov-1}-\cite{Parisi-Sourlas},
\cite{Sinai}, \cite{Zuniga-JMP-2020}-\cite{Zuniga-JMP-2020-1} and the
references therein. An important motivation comes from the hierarchical Ising
model introduced in \cite{DysonFreeman}. The hierarchical Hamiltonian
introduced by Dyson in \cite{DysonFreeman} can be naturally studied in
$p$-adic spaces, see, e.g., \cite{LM89}, \cite{Gubser et al-1}. In
\cite{Parisi-Sourlas}, see also \cite{Khrennikov-Kozyrev}, Parisi and Sourlas
presented a $p$-adic formulation of replica symmetry breaking. Here, it is
important to say that the ultrametricity due to the replica breaking an the
ultrametricity of the $p$-adic field are different notions.

In \cite{Hinton et al}, see also \cite{Honglak et al},\ \cite{Dong et al},
Hinton et al. introduced the deep belief networks (DBNs), which are multilayer
hierarchical generative models constructed by stacking RBMs. The purpose of
this construction is to create a network whose neurons form a large tree-like
structure (a deep architecture). Since a binary RBM is a spin glass, we argue
that an DBN\ must correspond to a hierarchical spin glass. Models of such
systems can be constructed in an easy way by using $p$-adic numbers. In our
view, this is a new approach to understanding deep learning architectures.

The article is organized as follows.\ In Section \ref{Section1}, we review the
basic aspects of the $p$-adic analysis. In Section \ref{Section2}, we
introduce the $p$-adic RBMs and their discretizations. In Section
\ref{Section3}, we show that the discrete $p$-adic RBMs are universal
approximators. Finally, in the last section, we present a discussion of our
results compared with other related work, also we propose several open problems.

\section{\label{Section1} Basic facts on $p$-adic analysis}

In this section we fix the notation and collect some basic results on $p$-adic
analysis that we will use through the article. For a detailed exposition on
$p$-adic analysis the reader may consult \cite{A-K-S}, \cite{Taibleson},
\cite{V-V-Z}. For a quick review of $p$-adic analysis the reader may consult
\cite{Bocardo-Zuniga-2}, \cite{Leon-Zuniga}.

\subsection{The field of $p$-adic numbers}

Throughout this article $p$ will denote a prime number. The field of $p-$adic
numbers $\mathbb{Q}_{p}$ is defined as the completion of the field of rational
numbers $\mathbb{Q}$ with respect to the $p-$adic norm $|\cdot|_{p}$, which is
defined as
\[
|x|_{p}=%
\begin{cases}
0 & \text{if }x=0\\
p^{-\gamma} & \text{if }x=p^{\gamma}\dfrac{a}{b},
\end{cases}
\]
where $a$ and $b$ are integers coprime with $p$. The integer $\gamma
=ord_{p}(x)$ with $ord_{p}(0):=+\infty$, is called the\textit{\ }%
$p-$\textit{adic order of} $x$. The metric space $\left(  \mathbb{Q}%
_{p},\left\vert \cdot\right\vert _{p}\right)  $ is a complete ultrametric
space. Ultrametric means that $\left\vert x+y\right\vert _{p}\leq\max\left\{
\left\vert x\right\vert _{p},\left\vert y\right\vert _{p}\right\}  $. As a
topological space $\mathbb{Q}_{p}$\ is homeomorphic to a Cantor-like subset of
the real line, see e.g. \cite{A-K-S}, \cite{V-V-Z}.

Any $p-$adic number $x\neq0$ has a unique expansion of the form
\begin{equation}
x=p^{ord_{p}(x)}\sum_{j=0}^{\infty}x_{j}p^{j}, \label{expansion}%
\end{equation}
where $x_{j}\in\{0,1,2,\dots,p-1\}$ and $x_{0}\neq0$. It follows from
(\ref{expansion}), that any $x\in\mathbb{Q}_{p}\smallsetminus\left\{
0\right\}  $ can be represented uniquely as $x=p^{ord_{p}(x)}u\left(
x\right)  $ and $\left\vert x\right\vert _{p}=p^{-ord_{p}(x)}$.

\subsection{Topology of $\mathbb{Q}_{p}$}

For $r\in\mathbb{Z}$, denote by $B_{r}(a)=\{x\in\mathbb{Q}_{p};\left\vert
x-a\right\vert _{p}\leq p^{r}\}$ \textit{the ball of radius }$p^{r}$
\textit{with center at} $a\in\mathbb{Q}_{p}$, and take $B_{r}(0):=B_{r}$. The
ball $B_{0}$ equals \textit{the ring of }$p-$\textit{adic integers
}$\mathbb{Z}_{p} $. We also denote by $S_{r}(a)=\{x\in\mathbb{Q}%
_{p};\left\vert x-a\right\vert _{p}=p^{r}\}$ \textit{the sphere of radius
}$p^{r}$ \textit{with center at} $a\in\mathbb{Q}_{p}$, and take $S_{r}%
(0):=S_{r}$. We notice that $S_{0}^{1}=\mathbb{Z}_{p}^{\times}$ (the group of
units of $\mathbb{Z}_{p}$). The balls and spheres are both open and closed
subsets in $\mathbb{Q}_{p}$. In addition, two balls in $\mathbb{Q}_{p}$ are
either disjoint or one is contained in the other.

As a topological space $\left(  \mathbb{Q}_{p},\left\vert \cdot\right\vert
_{p}\right)  $ is totally disconnected, i.e. the only connected \ subsets of
$\mathbb{Q}_{p}$ are the empty set and the points. A subset of $\mathbb{Q}%
_{p}$ is compact if and only if it is closed and bounded in $\mathbb{Q}_{p}$,
see e.g. \cite[Section 1.3]{V-V-Z}, or \cite[Section 1.8]{A-K-S}. The balls
and spheres are compact subsets. Thus $\left(  \mathbb{Q}_{p},\left\vert
\cdot\right\vert _{p}\right)  $ is a locally compact topological space.

Since $(\mathbb{Q}_{p},+)$ is a locally compact topological group, there
exists a Haar measure $dx$, which is invariant under translations, i.e.
$d(x+a)=dx$. If we normalize this measure by the condition $\int
_{\mathbb{Z}_{p}}dx=1$, then $dx$ is unique. In a few occasions we use the
two-dimensional Haar measure $dxdy$ of the additive group $(\mathbb{Q}%
_{p}\times\mathbb{Q}_{p},+)$ normalized by the condition $\int_{\mathbb{Z}%
_{p}}\int_{\mathbb{Z}_{p}}dxdy=1$. For a quick review of the integration in
the $p$-adic framework the reader may consult \cite{Bocardo-Zuniga-2},
\cite{Leon-Zuniga} and the references therein.

\begin{notation}
We will use $\Omega\left(  p^{-r}\left\vert x-a\right\vert _{p}\right)  $ to
denote the characteristic function of the ball $B_{r}(a)$.
\end{notation}

\subsection{The Bruhat-Schwartz space}

A real-valued function $\varphi$ defined on $\mathbb{Q}_{p}$ is \textit{called
locally constant} if for any $x\in\mathbb{Q}_{p}$ there exist an integer
$l(x)\in\mathbb{Z}$ such that%
\begin{equation}
\varphi(x+x^{\prime})=\varphi(x)\text{ for any }x^{\prime}\in B_{l(x)}.
\label{local_constancy}%
\end{equation}
A function $\varphi:\mathbb{Q}_{p}\rightarrow\mathbb{C}$ is called a
\textit{Bruhat-Schwartz function (or a test function)} if it is locally
constant with compact support. Any test function can be represented as a
linear combination, with real coefficients, of characteristic functions of
balls. The $\mathbb{R}$-vector space of Bruhat-Schwartz functions is denoted
by $\mathcal{D}(\mathbb{Q}_{p})$. For $\varphi\in\mathcal{D}(\mathbb{Q}_{p})$,
the largest number $l=l(\varphi)$ satisfying (\ref{local_constancy}) is called
\textit{the exponent of local constancy (or the parameter of constancy) of}
$\varphi$. Let $U$ be an open subset of $%
\mathbb{Q}
_{p}$, we denote by $\mathcal{D}(U)$ the $\mathbb{R}$-vector space of all test
functions with support in $U$. For instance $\mathcal{D}(\mathbb{Z}_{p})$ is
the $\mathbb{R}$-vector space of all test functions with supported in $\ $the
unit ball $\mathbb{Z}_{p}$. A function $\varphi$ in $\mathcal{D}%
(\mathbb{Z}_{p})$ can be written as%
\[
\varphi\left(  x\right)  =%
{\displaystyle\sum\limits_{j=1}^{M}}
\varphi\left(  \widetilde{x}_{j}\right)  \Omega\left(  p^{r_{j}}\left\vert
x-\widetilde{x}_{j}\right\vert _{p}\right)  ,
\]
where the $\widetilde{x}_{j}$, $j=1,\ldots,M$, are points in $\mathbb{Z}_{p}$,
the $r_{j}$, $j=1,\ldots,M$, are integers, and $\Omega\left(  p^{r_{j}%
}\left\vert x-\widetilde{x}_{j}\right\vert _{p}\right)  $ denotes the
characteristic function of the ball $B_{-r_{j}}(\widetilde{x}_{j}%
)=\widetilde{x}_{j}+p^{r_{j}}\mathbb{Z}_{p}$.

\section{\label{Section2}A class of non-Archimedean statistical field theories
and their discretizations}

In this section we introduce a family of $p$-adic SFTs that we require in this
article. A central difference between the classical SFTs and the
non-Archimedean counterparts is that the discretization process of the
non-Archimedean ones is very simple, and the convergence of the discrete
theories to continuous theories can be formulated in rigorous mathematical
way, in a large number of cases, see e.g. \cite{Zunifa-RMP-2022} and the
references therein. For a mathematical exposition of the non-Archimedean
$\phi^{4}$-QFTs, the reader may consult \cite{Zunifa-RMP-2022}, see also
\cite{Arroyo-Zuniga}, and the references therein.

\subsection{A class of non-Archimedean statistical field theories}

We fix $a\left(  x\right)  $, $b(x)\in\mathcal{D}(\mathbb{Z}_{p})$, and a
function $w(x,y):\mathbb{Z}_{p}\times\mathbb{Z}_{p}\rightarrow\mathbb{R}$. In
this section, $w(x,y)$ is a test function of two variables. But the results
can be easily extended to translational invariant kernels of type
$w(x,y)=w\left(  x-y\right)  $. A $p$\textit{-adic continuous deep belief
network (or a }$p$\textit{-adic continuous DBN)} is a statistical field
theory$\ \left\{  \boldsymbol{v},\boldsymbol{h}\right\}  $ in $\mathcal{D}%
(\mathbb{Z}_{p})$. The function $\boldsymbol{v}(x)\in\mathcal{D}%
(\mathbb{Z}_{p})$ is called the \textit{visible field} and the function
$\boldsymbol{h}(x)\in\mathcal{D}(\mathbb{Z}_{p})$ is called the \textit{hidden
field}. The field $\left\{  \boldsymbol{v},\boldsymbol{h}\right\}  $ performs
thermal fluctuations, assuming that the expectation value of the field is
zero, the fluctuations take place around zero. The size of the fluctuations is
controlled by an energy functional (or action) of the form%
\begin{equation}
E\left(  \boldsymbol{v},\boldsymbol{h}\right)  =-%
{\displaystyle\iint\limits_{\mathbb{Z}_{p}\times\mathbb{Z}_{p}}}
\boldsymbol{h}\left(  y\right)  w\left(  x,y\right)  \boldsymbol{v}\left(
x\right)  dxdy-%
{\displaystyle\int\limits_{\mathbb{Z}_{p}}}
a(x)\boldsymbol{v}\left(  x\right)  dx-%
{\displaystyle\int\limits_{\mathbb{Z}_{p}}}
b(x)\boldsymbol{h}\left(  x\right)  dx. \label{Energy_Functional}%
\end{equation}
Along the article, we assume that the fields $\boldsymbol{v}$, $\boldsymbol{h}%
$ are binary valued functions, i.e. $\boldsymbol{v}$, $\boldsymbol{h}:$
$\mathbb{Z}_{p}\rightarrow\left\{  0,1\right\}  \subset\mathbb{R}$, then
$E\left(  \boldsymbol{v},\boldsymbol{h}\right)  $ is the energy functional of
a $p$-adic continuous spin glass.

All thermodynamic properties of the system are described\ by the partition
function of the fluctuating field, which is defined as
\[
Z^{\text{phys}}=%
{\displaystyle\int}
d\boldsymbol{v}d\boldsymbol{h}\text{ }e^{-\frac{E(\boldsymbol{v}%
,\boldsymbol{h})}{K_{B}T}},
\]
where $K_{B}$ is the Boltzmann constant and $T$ is the temperature. We
normalize in such a way that $K_{B}T=1$. The measure $d\boldsymbol{v}%
d\boldsymbol{h}$ is ill-defined. It is expected that such measure can be
defined rigorously\ by a limit process.

The statistical field theory corresponding to the energy functional
(\ref{Energy_Functional}) is the ill-defined probability measure%
\[
\boldsymbol{P}^{\text{phys}}(\boldsymbol{v},\boldsymbol{h})=d\boldsymbol{v}%
d\boldsymbol{h}\frac{\exp\left(  -E(\boldsymbol{v},\boldsymbol{h})\right)
}{Z^{\text{phys}}}%
\]
on the space of functions $\mathcal{D}(\mathbb{Z}_{p})\times\mathcal{D}%
(\mathbb{Z}_{p})$.

The information about the local properties of the system is contained in the
\textit{correlation functions }$G_{\mathbb{I},\mathbb{K}}^{\left(  n\right)
}\left(  x_{1},\ldots,x_{n}\right)  $ of the field $\left\{  \boldsymbol{v}%
,\boldsymbol{h}\right\}  $: for $n\geq1$, and two disjoint subsets
$\mathbb{I}$, $\mathbb{K}\subset\left\{  1,2,\ldots,n\right\}  $, with
$\mathbb{I}%
{\textstyle\coprod}
\mathbb{K}=\left\{  1,2,\ldots,n\right\}  $, we set
\begin{align*}
G_{\mathbb{I},\mathbb{K}}^{\left(  n\right)  }\left(  x_{1},\ldots
,x_{n}\right)   &  =\left\langle
{\displaystyle\prod\limits_{i\in\mathbb{I}}}
\boldsymbol{v}\left(  x_{i}\right)  \text{ }%
{\displaystyle\prod\limits_{j\in\mathbb{K}}}
\boldsymbol{h}\left(  x_{j}\right)  \right\rangle \\
&  =\frac{1}{Z^{\text{phys}}}%
{\displaystyle\int}
d\boldsymbol{v}d\boldsymbol{h}\text{ }%
{\displaystyle\prod\limits_{i\in\mathbb{I}}}
\boldsymbol{v}\left(  x_{i}\right)  \text{ }%
{\displaystyle\prod\limits_{j\in\mathbb{K}}}
\boldsymbol{h}\left(  x_{j}\right)  \text{ }e^{-E(\boldsymbol{v}%
,\boldsymbol{h})}.
\end{align*}
These functions are also called the $n$-point Green functions. To study of
these functions, one introduces two auxiliary external fields $J_{0}(x),$
$J_{1}(x)\in\mathcal{D}(\mathbb{Z}_{p})$ called \textit{currents, }and adds to
the energy functional $E$ as a linear interaction energy of these currents
with the field $\left\{  \boldsymbol{v},\boldsymbol{h}\right\}  $,%
\[
E_{\text{source}}(\boldsymbol{v},\boldsymbol{h},J_{0},J_{1})=-%
{\displaystyle\int\limits_{\mathbb{Z}_{p}}}
J_{0}(x)\boldsymbol{v}\left(  x\right)  dx-%
{\displaystyle\int\limits_{\mathbb{Z}_{p}}}
J_{1}(x)\boldsymbol{h}\left(  x\right)  dx,
\]
and the energy functional is $E(\boldsymbol{v},\boldsymbol{h},J_{0}%
,J_{1})=E\left(  \boldsymbol{v},\boldsymbol{h}\right)  +E_{\text{source}%
}(\boldsymbol{v},\boldsymbol{h},J_{0},J_{1})$. The partition function formed
with this energy is%
\[
Z(J_{0},J_{1})=\frac{1}{Z^{\text{phys}}}%
{\displaystyle\int}
d\boldsymbol{v}d\boldsymbol{h}\text{ }e^{-E(\boldsymbol{v},\boldsymbol{h}%
,J_{0},J_{1})}.
\]
The functional derivatives of $Z(J_{0},J_{1})$ with respect to $J_{0}(x)$,
$J_{1}(x)$ evaluated at $J_{0}=0$, $J_{1}=0$\ give the correlation functions
of the system:%
\[
G_{\mathbb{I},\mathbb{K}}^{\left(  n\right)  }\left(  x_{1},\ldots
,x_{n}\right)  =\left[
{\textstyle\prod\limits_{i\in\mathbb{I}}}
\frac{\delta}{\delta J_{0}\left(  x_{i}\right)  }\text{ }%
{\textstyle\prod\limits_{j\in\mathbb{K}}}
\frac{\delta}{\delta J_{1}\left(  x_{j}\right)  }Z(J_{0},J_{1})\right]
_{\substack{J_{0}=0\\J_{1}=0}}.
\]
The functional $Z(J_{0},J_{1})$ is called the \textit{generating functional}
of the theory.

\subsection{Discretization of the energy functional}

For $l\geq1$, we \ set $G_{l}:=\mathbb{Z}_{p}/p^{l}\mathbb{Z}_{p}$. We use a
fixed system of representatives of the form
\[
i\boldsymbol{=}i_{0}+i_{1}p+\ldots+i_{l-1}p^{l-1},
\]
where the $i_{k}$s\ are $p$-adic digits, for the elements of $G_{l}$. We
denote by $\mathcal{D}^{l}(\mathbb{Z}_{p})$ the $\mathbb{R}$-vector space of
all test functions of the form%
\begin{equation}
\varphi\left(  x\right)  =%
{\textstyle\sum\limits_{i\in G_{l}}}
\varphi\left(  i\right)  \Omega\left(  p^{l}\left\vert x-i\right\vert
_{p}\right)  \text{, \ }\varphi\left(  i\right)  \in\mathbb{R}\text{,}
\label{Eq_repre}%
\end{equation}
where $\Omega\left(  p^{l}\left\vert x-i\right\vert _{p}\right)  $ denotes the
characteristic function of the ball $i+p^{l}\mathbb{Z}_{p}$. Notice that
$\varphi$ is supported on $\mathbb{Z}_{p}$ and that $\mathcal{D}%
^{l}(\mathbb{Z}_{p})$ is a finite dimensional vector space spanned by the
basis
\begin{equation}
\left\{  \Omega\left(  p^{l}\left\vert x-i\right\vert _{p}\right)  \right\}
_{i\in G_{l}}. \label{Basis}%
\end{equation}
By identifying $\varphi\in\mathcal{D}^{l}(\mathbb{Z}_{p})$ with the column
vector $\left[  \varphi\left(  i\right)  \right]  _{\boldsymbol{i}\in G_{l}%
}\in\mathbb{R}^{\#G_{l}}$, we get that $\mathcal{D}^{l}(\mathbb{Z}_{p})$ is
isomorphic to $\mathbb{R}^{\#G_{l}}$ endowed with the norm $\left\Vert \left[
\varphi\left(  i\right)  \right]  _{\boldsymbol{i}\in G_{l}^{N}}\right\Vert
=\max_{i\in G_{l}}\left\vert \varphi\left(  i\right)  \right\vert $.
Furthermore,
\[
\mathcal{D}^{l}(\mathbb{Z}_{p})\hookrightarrow\mathcal{D}^{l+1}(\mathbb{Z}%
_{p})\hookrightarrow\mathcal{D}(\mathbb{Z}_{p}),
\]
where $\hookrightarrow$ denotes a continuous embedding.

A discretization $E_{l}$ of the energy functional $E$ is obtained by
restricting $\boldsymbol{v},\boldsymbol{h}$ to $\mathcal{D}^{l}(\mathbb{Z}%
_{p})$, i.e. by taking
\[
\boldsymbol{v}\left(  x\right)  =%
{\textstyle\sum\limits_{i\in G_{l}}}
\boldsymbol{v}\left(  i\right)  \Omega\left(  p^{l}\left\vert x-i\right\vert
_{p}\right)  \text{, \ }\boldsymbol{h}\left(  x\right)  =%
{\textstyle\sum\limits_{i\in G_{l}}}
\boldsymbol{h}\left(  i\right)  \Omega\left(  p^{l}\left\vert x-i\right\vert
_{p}\right)  .
\]
When $\boldsymbol{v},\boldsymbol{h}\in\mathcal{D}^{l}(\mathbb{Z}_{p})$, we use
use the following identifications:%
\[
\boldsymbol{v}_{l}\boldsymbol{=}\left[  \boldsymbol{v}\left(  i\right)
\right]  _{i\in G_{l}}\text{, }\boldsymbol{h}_{l}\boldsymbol{=}\left[
\boldsymbol{h}\left(  i\right)  \right]  _{i\in G_{l}}.
\]
There are two different types of discrete functionals $E_{l}$ according if
$w(x,y)$ is a test function \ of two variables, or if $w(x,y)=w\left(
x-y\right)  $ is a translational invariant test function of one variable.

\subsection{Standard restricted Boltzmann machines}

Assume that $w(x,y)$ is a test function. Since $w(x,y)$ is locally constant,
\[
w(x,y)\Omega\left(  p^{l}\left\vert x-i\right\vert _{p}\right)  \Omega\left(
p^{l}\left\vert x-j\right\vert _{p}\right)  =w(i,j)\Omega\left(
p^{l}\left\vert x-i\right\vert _{p}\right)  \Omega\left(  p^{l}\left\vert
x-j\right\vert _{p}\right)  ,
\]
for $l$ sufficiently large, and%
\begin{align*}
&
{\displaystyle\iint\limits_{\mathbb{Z}_{p}\times\mathbb{Z}_{p}}}
w\left(  x,y\right)  \Omega\left(  p^{l}\left\vert x-i\right\vert _{p}\right)
\Omega\left(  p^{l}\left\vert y-j\right\vert _{p}\right)  dxdy\\
&  =w(i,j)\left(  \text{ }%
{\displaystyle\int\limits_{i+p^{l}\mathbb{Z}_{p}}}
dx\right)  \left(  \text{ }%
{\displaystyle\int\limits_{j+p^{l}\mathbb{Z}_{p}}}
dy\right)  =p^{-2l}w(i,j).
\end{align*}
By a similar argument, for $l$ sufficiently large, we have%
\begin{align*}%
{\displaystyle\int\limits_{\mathbb{Z}_{p}}}
a(x)\boldsymbol{v}(x)dx  &  =p^{-l}%
{\displaystyle\sum\limits_{\boldsymbol{i}\in G_{l}}}
a(i)\boldsymbol{v}\left(  i\right)  \text{,}\\%
{\displaystyle\int\limits_{\mathbb{Z}_{p}}}
b(x)\boldsymbol{h}(x)dx  &  =p^{-l}%
{\displaystyle\sum\limits_{\boldsymbol{i}\in G_{l}}}
b(i)\boldsymbol{h}\left(  i\right)  .
\end{align*}
Therefore, for $l$ sufficiently large,%
\[
E_{l}\left(  \boldsymbol{v}_{l},\boldsymbol{h}_{l}\right)  =-p^{-2l}%
{\displaystyle\sum\limits_{i,\text{ }j\in G_{l}}}
\boldsymbol{v}\left(  i\right)  w\left(  i,j\right)  \boldsymbol{h}\left(
j\right)  -p^{-l}%
{\displaystyle\sum\limits_{i\in G_{l}}}
a(i)\boldsymbol{v}\left(  i\right)  -p^{-l}%
{\displaystyle\sum\limits_{\boldsymbol{i}\in G_{l}}}
b(i)\boldsymbol{h}\left(  i\right)  .
\]
By taking
\[
v_{i}^{l}:=\boldsymbol{v}\left(  i\right)  \text{, }h_{i}^{l}:=\boldsymbol{h}%
\left(  i\right)  \text{, }w_{i,j}^{l}:=p^{-2l}w\left(  i,j\right)  \text{,
\ }a_{i}^{l}:=p^{-l}a(i)\text{, }b_{i}^{l}:=p^{-l}b(i),
\]%
\[
\boldsymbol{v}_{l}\boldsymbol{=}\left[  v_{i}^{l}\right]  _{i\in G_{l}}\text{,
}\boldsymbol{h}_{l}\boldsymbol{=}\left[  h_{i}^{l}\right]  _{i\in G_{l}%
}\text{, }\boldsymbol{w}_{l}=\left[  w_{i,j}^{l}\right]  _{i,j\in G_{l}%
}\text{, }\boldsymbol{a}_{l}=\left[  a_{i}^{l}\right]  _{\boldsymbol{i}\in
G_{l}}\text{, }\boldsymbol{b}_{l}=\left[  b_{i}^{l}\right]  _{i\in G_{l}}%
\]
we have%
\begin{equation}
E_{l}\left(  \boldsymbol{v}_{l},\boldsymbol{h}_{l}\right)  =-%
{\displaystyle\sum\limits_{i,\text{ }j\in G_{l}}}
v_{i}^{l}w_{i,j}^{l}h_{j}^{l}-%
{\displaystyle\sum\limits_{i\in G_{l}}}
a_{i}^{l}v_{i}^{l}-%
{\displaystyle\sum\limits_{i\in G_{l}}}
b_{i}^{l}h_{i}^{l}, \label{Energy_l_2}%
\end{equation}
which is the energy functional of a standard restricted Boltzman machine. Here
it is very relevant to notice that the energy functional $E_{l}\left(
\boldsymbol{v},\boldsymbol{h}\right)  $ does not depend on the topology of the
metric space $\left(  G_{l},\left\vert \cdot\right\vert _{p}\right)  $ neither
on the group structure $\left(  G_{l},+\right)  $. The Boltzmann distribution
is given by%
\[
\boldsymbol{P}_{l}\left(  \boldsymbol{v}_{l},\boldsymbol{h}_{l}\right)
=\frac{\exp\left(  -E_{l}\left(  \boldsymbol{v}_{l},\boldsymbol{h}_{l}\right)
\right)  }{Z_{l}},
\]
where $Z_{l}=\sum_{\boldsymbol{v}_{l},\boldsymbol{h}_{l}}\exp\left(
-E_{l}\left(  \boldsymbol{v}_{l},\boldsymbol{h}_{l}\right)  \right)  $.

Now, any standard RBM with $n$ visible nodes $v_{i}$, $i=1,\ldots,n$ and $m$
hidden nodes $h_{\boldsymbol{i}}$, $i=1,\ldots,m$ can be realized as $p$-adic
discrete RBM of type (\ref{Energy_l_2}), by choosing $p$ and $l$ satisfying
$n\leq p^{l}$, $m\leq p^{l}$ and taking $v_{i}^{l}=v_{i}$ for $1\leq i\leq n$,
$v_{i}^{l}=0$ for $n+1\leq i\leq p^{l}$, and $h_{i}^{l}=h_{i}$ or $1\leq i\leq
m$, $h_{i}^{l}=0$ for $m+1\leq i\leq p^{l}$.

The number of the $w_{i,j}^{l}$ parameters is $\left(  \#G_{l}\right)  ^{2}$,
the number of the $a_{i}$ parameters is $\#G_{l}$, and the number of the
$h_{i}$ parameters is $\#G_{l}$, and \ consequently the total number of
parameters is
\[
\left(  \#G_{l}\right)  ^{2}+2\left(  \#G_{l}\right)  ,
\]
which is quadratic \ in the cardinality of $G_{l}$.

\subsection{\label{Section p-adic DBN}$p$-adic discrete deep belief networks}

We now consider the case in which $w\left(  x,y\right)  =w\left(  x-y\right)
$ is test function of one variable. In this case the energy functional
$E_{l}\left(  \boldsymbol{v}_{l},\boldsymbol{h}_{l}\right)  $ depends on the
topology of the metric space $\left(  G_{l},\left\vert \cdot\right\vert
_{p}\right)  $ and on the group structure $\left(  G_{l},+\right)  $.

We first notice that%
\begin{align}%
{\displaystyle\iint\limits_{\mathbb{Z}_{p}\times\mathbb{Z}_{p}}}
&  w\left(  x-y_{p}\right)  \Omega\left(  p^{l}\left\vert x-i\right\vert
_{p}\right)  \Omega\left(  p^{l}\left\vert y-j\right\vert _{p}\right)
dxdy\label{Formula_I_w}\\
&  =\left\{
\begin{array}
[c]{ll}%
p^{-2l}w\left(  i-j\right)  & \text{if \ }i\neq j\\
& \\
p^{-2l}w(0) & \text{if }i=j,
\end{array}
\right. \nonumber
\end{align}
for $l$ sufficiently large.

Now, by using that $a\left(  x\right)  $, $b(x)$ are test functions supported
in the unit ball, and taking $l$ sufficiently large, we have
\begin{align*}
a\left(  x\right)  \Omega\left(  p^{l}\left\vert x-i\right\vert _{p}\right)
&  =a\left(  i\right)  \Omega\left(  p^{l}\left\vert x-i\right\vert
_{p}\right)  \text{, }\\
b(x)\Omega\left(  p^{l}\left\vert x-i\right\vert _{p}\right)   &
=b(i)\Omega\left(  p^{l}\left\vert x-i\right\vert _{p}\right)  \text{,}%
\end{align*}
and consequently
\begin{gather*}
E_{l}\left(  \boldsymbol{v}_{l},\boldsymbol{h}_{l}\right)  =-p^{-2l}%
{\displaystyle\sum\limits_{\substack{i,\text{ }j\in G_{l}\\i\neq j}}}
\boldsymbol{v}\left(  i\right)  w\left(  i-j\right)  \boldsymbol{h}\left(
j\right) \\
-p^{-2l}w(0)%
{\displaystyle\sum\limits_{i\in G_{l}}}
\boldsymbol{v}\left(  i\right)  \boldsymbol{h}\left(  i\right)  -p^{-l}%
{\displaystyle\sum\limits_{i\in G_{l}}}
a\left(  i\right)  \boldsymbol{v}\left(  i\right)  -p^{-l}%
{\displaystyle\sum\limits_{i\in G_{l}}}
b\left(  i\right)  \boldsymbol{h}\left(  i\right)  .
\end{gather*}
By taking $v_{i}^{l}=\boldsymbol{v}\left(  i\right)  $, $h_{i}^{l}%
=\boldsymbol{h}\left(  i\right)  $,
\[
w_{i-j}^{l}=\left\{
\begin{array}
[c]{ll}%
p^{-2l}w\left(  i-j\right)  & \text{if \ }i\neq j\\
& \\
p^{-2l}w(0) & \text{if }i=j.
\end{array}
\right.
\]
$a_{i}^{l}=p^{-l}a\left(  i\right)  $, $b_{i}^{l}=p^{-l}b\left(  i\right)  $,
for $i$, $j\in G_{l}$, and $\boldsymbol{\theta}_{l}=\left\{  w_{i,j}^{l}%
,a_{i}^{l},b_{i}^{l}\right\}  $. Then, for $l$ sufficiently large,%
\begin{equation}
E_{l}\left(  \boldsymbol{v}_{l},\boldsymbol{h}_{l};\boldsymbol{\theta}%
_{l}\right)  =-%
{\displaystyle\sum\limits_{i,\text{ }j\in G_{l}}}
w_{i-j}^{l}v_{i}^{l}h_{j}^{l}-%
{\displaystyle\sum\limits_{i\in G_{l}}}
a_{i}^{l}v_{i}^{l}-%
{\displaystyle\sum\limits_{i\in G_{l}}}
b_{i}^{l}h_{i}^{l} \label{Energy_l}%
\end{equation}
Since $\left(  G_{l},+\right)  $ is an additive group, for $l$ sufficiently
large, we have%
\begin{equation}
E_{l}\left(  \boldsymbol{v}_{l},\boldsymbol{h}_{l};\boldsymbol{\theta}%
_{l}\right)  =-%
{\displaystyle\sum\limits_{j\in G_{l}}}
\text{ }%
{\displaystyle\sum\limits_{k\in G_{l}}}
w_{k}^{l}v_{j+k}^{l}h_{j}^{l}-%
{\displaystyle\sum\limits_{j\in G_{l}}}
a_{j}^{l}\boldsymbol{v}_{j}^{l}-%
{\displaystyle\sum\limits_{j\in G_{l}}}
b_{j}^{l}h_{j}^{l}\text{.} \label{Energy_lI}%
\end{equation}
The total number of parameters of this type of networks is%
\[
3\left(  \#G_{l}\right)  ,
\]
which is linear in the cardinality of $G_{l}$. In this type of network, the
visible and hidden states are functions on $G_{l}$. Only the vertices at the
top level of the tree $G_{l}$ are allowed to have states. The rest of the
vertices in the tree codify the hierarchical relations between the states. On
the other hand, in a standard deep belief network on $G_{l}$ all the vertices
can have states.

\begin{remark}
If $w\left(  x,y\right)  =w\left(  \left\vert x-y\right\vert _{p}\right)  $ is
a radial function, the energy functional takes the following form:
\begin{gather*}
E_{l}\left(  \boldsymbol{v}_{l},\boldsymbol{h}_{l}\right)  =-p^{-2l}%
{\displaystyle\sum\limits_{\substack{i,\text{ }j\in G_{l}\\i\neq j}}}
\boldsymbol{v}\left(  i\right)  w\left(  \left\vert i-j\right\vert
_{p}\right)  \boldsymbol{h}\left(  j\right) \\
-p^{-l}\left(  \text{ }%
{\displaystyle\int\limits_{p^{l}\mathbb{Z}_{p}}}
w\left(  \left\vert z\right\vert _{p}\right)  dz\right)
{\displaystyle\sum\limits_{i\in G_{l}}}
\boldsymbol{v}\left(  i\right)  \boldsymbol{h}\left(  i\right)  -p^{-l}%
{\displaystyle\sum\limits_{i\in G_{l}}}
a\left(  i\right)  \boldsymbol{v}\left(  i\right)  -p^{-l}%
{\displaystyle\sum\limits_{i\in G_{l}}}
b\left(  i\right)  \boldsymbol{h}\left(  i\right)  .
\end{gather*}
In this case the network depends on $l+1+2\left(  \#G_{l}\right)  $
parameters. From now on, we focus on networks having an energy functional \ of
the form (\ref{Energy_lI}).
\end{remark}

\subsubsection{Boltzmann probability distributions}

From now on, we set $\boldsymbol{v}_{l}=\left[  v_{i}^{l}\right]  _{i\in
G_{l}}$, $\boldsymbol{h}_{l}=\left[  h_{i}^{l}\right]  _{i\in G_{l}}$,
$\boldsymbol{w}_{l}=\left[  w_{i}^{l}\right]  _{i\in G_{l}}$, $\boldsymbol{a}%
_{l}=\left[  a_{i}^{l}\right]  _{i\in G_{l}}$, $\boldsymbol{b}_{l}=\left[
b_{i}^{l}\right]  _{i\in G_{l}}$, $\boldsymbol{\theta}_{l}=(\boldsymbol{w}%
_{l},\boldsymbol{a}_{l},\boldsymbol{b}_{l})$. We warn the reader that, for the
sake of simplicity, the dependence of the $\boldsymbol{\theta}_{l}$ parameters
is omitted in most of the formulas. We associate to $E_{l}\left(
\boldsymbol{v}_{l},\boldsymbol{h}_{l};\boldsymbol{\theta}_{l}\right)  $ the
Boltzmann probability distribution%
\begin{equation}
\boldsymbol{P}_{l}(\boldsymbol{v}_{l},\boldsymbol{h}_{l})=\frac{\exp\left(
-E_{l}\left(  \boldsymbol{v}_{l},\boldsymbol{h}_{l}\right)  \right)  }{Z_{l}},
\label{Joint_Probability}%
\end{equation}
where%
\[
Z_{l}=%
{\displaystyle\sum\limits_{\boldsymbol{v}_{l},\boldsymbol{h}_{l}}}
\exp\left(  -E_{l}\left(  \boldsymbol{v}_{l},\boldsymbol{h}_{l}\right)
\right)  =%
{\displaystyle\sum\limits_{i,j\in G_{l}}}
\exp\left(  -E_{l}\left(  \left[  v_{i}^{l}\right]  _{i\in G_{l}}\text{,
}\left[  h_{j}^{l}\right]  _{j\in G_{l}}\right)  \right)  .
\]
It is expected that the limit%
\[
\frac{e^{-E(\boldsymbol{v},\boldsymbol{h})}}{Z^{\text{phys}}}d\boldsymbol{v}%
d\boldsymbol{h}\overset{\text{def}}{=}\lim_{l\rightarrow\infty}\boldsymbol{P}%
_{l}(\boldsymbol{v}_{l},\boldsymbol{h}_{l})\text{ }d^{\#G_{l}}\boldsymbol{v}%
\text{ }d^{\#G_{l}}\boldsymbol{h}%
\]
exists in some sense.

The marginal probability distributions are given by%
\begin{equation}
\boldsymbol{P}_{l}(\boldsymbol{v}_{l})=%
{\displaystyle\sum\limits_{\boldsymbol{h}_{l}}}
\boldsymbol{P}_{l}(\boldsymbol{v}_{l},\boldsymbol{h}_{l})=\frac{%
{\displaystyle\sum\limits_{\boldsymbol{h}_{l}}}
\exp\left(  -E_{l}\left(  \boldsymbol{v}_{l},\boldsymbol{h}_{l}\right)
\right)  }{%
{\displaystyle\sum\limits_{\boldsymbol{v}_{l},\boldsymbol{h}_{l}}}
\exp\left(  -E_{l}\left(  \boldsymbol{v}_{l},\boldsymbol{h}_{l}\right)
\right)  }, \label{Probability_P_l_v}%
\end{equation}%
\[
\boldsymbol{P}_{l}(\boldsymbol{h}_{l})=%
{\displaystyle\sum\limits_{\boldsymbol{v}_{l}}}
\boldsymbol{P}_{l}(\boldsymbol{v}_{l},\boldsymbol{h}_{l})=\frac{%
{\displaystyle\sum\limits_{\boldsymbol{v}_{l}}}
\exp\left(  -E_{l}\left(  \boldsymbol{v}_{l},\boldsymbol{h}_{l}\right)
\right)  }{%
{\displaystyle\sum\limits_{\boldsymbol{v}_{l},\boldsymbol{h}_{l}}}
\exp\left(  -E_{l}\left(  \boldsymbol{v}_{l},\boldsymbol{h}_{l}\right)
\right)  }.
\]

\subsubsection{Tree-like structures, $p$-adic numbers and DBNs}

The restriction of $\left\vert \cdot\right\vert _{p}$ to $G_{l}$ induces an
absolute value and $\left\vert G_{l}\right\vert _{p}=\left\{  0,p^{-\left(
l-1\right)  },\cdots,p^{-1},1\right\}  $. We endow $G_{l}$ with the metric
induced by $\left\vert \cdot\right\vert _{p}$, and thus $G_{l}$ becomes a
finite ultrametric space. In addition, $G_{l}$ can be identified with the set
of branches (vertices at the top level) of a rooted tree with $l+1$ levels and
$p^{l}$ branches. By definition the root of the tree is the only vertex at
level $0$. There are exactly $p$ vertices at level $1$, which correspond with
the possible values of the digit $i_{0}$ in the $p$-adic expansion of
$i\boldsymbol{=}i_{0}+i_{1}p+\ldots+i_{l-1}p^{l-1}$. Each of these vertices is
connected to the root by a non-directed edge. At level $k$, with $1<k\leq
l+1$, there are exactly $p^{k}$ vertices, \ each vertex corresponds to a
truncated expansion of $i$ of the form $i_{0}+\cdots+i_{k-1}p^{k-1}$. The
vertex corresponding to $i_{0}+\cdots+i_{k-1}p^{k-1}$ is connected to a vertex
$i_{0}^{\prime}+\cdots+i_{k-2}^{\prime}p^{k-2}$ at the level $k-1$ if and only
if $\left(  i_{0}+\cdots+i_{k-1}p^{k-1}\right)  -\left(  i_{0}^{\prime}%
+\cdots+i_{k-2}^{\prime}p^{k-2}\right)  $ is divisible by $p^{k-1}$. The unit
ball $\mathbb{Z}_{p}$ is \ an infinite rooted tree.

\begin{figure}[ptb]
\begin{center}
\includegraphics[height=1.8585in,width=4.5161in,angle=0]{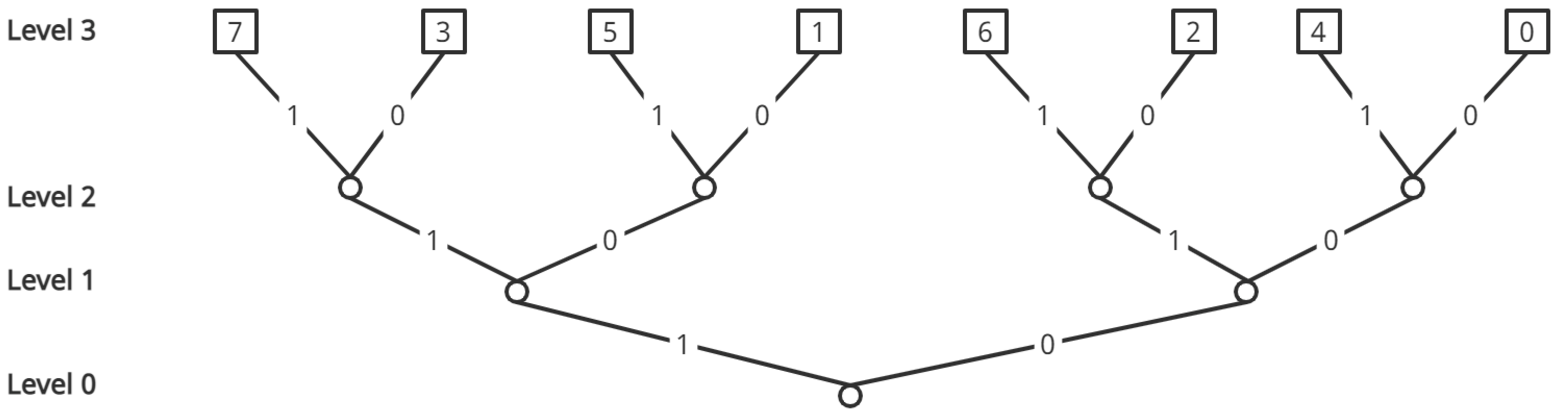}
\end{center}
\caption{{\small \textsl{The rooted tree associated with the group
$\mathbb{Z}_{2}/2^{3}\mathbb{Z}_{2}$. We identify the elements of
$\mathbb{Z}_{2}/2^{3}\mathbb{Z}_{2}$ with the set of integers $\left\{
0,\ldots,7\right\}  $ \ with binary representation $\boldsymbol{i}%
=\boldsymbol{i}_{0}+\boldsymbol{i}_{1}2+\boldsymbol{i}_{3}2^{2}%
,\;\;\;\boldsymbol{i}_{0},\boldsymbol{i}_{1},\boldsymbol{i}_{2}\in\{0,1\}$.
Two leaves $\boldsymbol{i},\boldsymbol{j}\in\mathbb{Z}_{2}/2^{3}\mathbb{Z}%
_{2}$ have a common ancestor at level $2$ if and only if $\boldsymbol{i}%
\equiv\boldsymbol{j}$ $\operatorname{mod}$ $2^{2}$, i.e., $\boldsymbol{i}%
=\boldsymbol{a}_{0}+\boldsymbol{a}_{1}2+\boldsymbol{i}_{2}2^{2}$ and
$\boldsymbol{j}=\boldsymbol{a}_{0}+\boldsymbol{a}_{1}2+\boldsymbol{j}_{2}%
2^{2}$ with $\boldsymbol{i}_{2},\boldsymbol{j}_{2}\in\{0,1\}$. Now, for
$\boldsymbol{i},\boldsymbol{j\in}\mathbb{Z}_{2}/2^{3}\mathbb{Z}_{2}$ have a
common ancestor at level $1$ if and only if $\boldsymbol{i}\equiv
\boldsymbol{j}$ $\operatorname{mod}$ $2$. Notice that that the $p$-adic
distance satisfies $-\log_{2}\left\vert \boldsymbol{i}-\boldsymbol{j}%
\right\vert _{2}=-\left(  \text{level of the first common ancestor of
}\boldsymbol{i}\text{, }\boldsymbol{j}\right)  $. Reprinted from
\cite{Zambrano-Zuniga}. }}}%
\label{fig:Figure 1}%
\end{figure}

We denote by $DBN(p,l,\boldsymbol{\theta}_{l})$ the $p$-adic discrete DBN with
energy functional $E_{l}\left(  \boldsymbol{v}_{l},\boldsymbol{h}%
_{l};\boldsymbol{\theta}_{l}\right)  $, see (\ref{Energy_lI}) and marginal
distribution $\boldsymbol{P}_{l}(\boldsymbol{v}_{l};\boldsymbol{\theta}_{l})$,
see (\ref{Probability_P_l_v}).

We now identify $G_{l}$ with the set of branches (vertices at the top level)
of a rooted tree with $l+1$ levels and $p^{l}$ branches. Attached to each
branch $i\in G_{l}$ there are two states : $v_{i}^{l}$, $h_{i}^{l}$.
The\textit{ visible field} is $\boldsymbol{v}_{l}=\left[  v_{i}^{l}\right]
_{i\in G_{l}}\in\left\{  0,1\right\}  ^{\#G_{l}}$ and \textit{the hidden field
is }$\boldsymbol{h}_{l}=\left[  h_{i}^{l}\right]  _{i\in G_{l}}\in\left\{
0,1\right\}  ^{\#G_{l}}$. These values are realizations of two random vectors,
that we also called the visible and hidden fields. The values $v_{i}^{l}$,
$v_{j}^{l}$ , respectively $h_{i}^{l}$, $h_{j}^{l}$, are statistically
independent for $i\neq j$. The $DBN(p,l,\boldsymbol{\theta}_{l})$ is a
$p-$adic analogue of the convolutional deep belief networks studied in
\cite{Honglak et al}. However, there are several important differences. We
discuss these matters in the last section of this article. We denote by
$DBN(p,\infty,\boldsymbol{\theta})$ the $p$-adic deep belief network
associated with the energy function $E\left(  \boldsymbol{v},\boldsymbol{h}%
\right)  $, see (\ref{Energy_Functional}). The $DBN(p,l,\boldsymbol{\theta
}_{l})$ for $l\geq L$ is a discretization of $DBN(p,\infty,\boldsymbol{\theta
})$, and \ for $l^{\prime}>l\geq L$, $DBN(p,l^{\prime},\boldsymbol{\theta
}_{l^{\prime}})$ is a larger scaled version of $DBN(p,l,\boldsymbol{\theta
}_{l})$. In general, the action $E_{l}\left(  \boldsymbol{v}_{l}%
,\boldsymbol{h}_{l};\boldsymbol{\theta}_{l}\right)  $ is non local, which
means that $w_{i-j}^{l}\neq0$ for any $i$, $j\in G_{l}$.

\section{\label{Section3}The $p$-adic DBNs are universal approximators}

We denote by$\boldsymbol{\ Q}(\boldsymbol{v})$ an arbitrary probability
distribution on a finite set
\[
\left\{  \boldsymbol{v}_{1},\ldots,\boldsymbol{v}_{2^{m}}\right\}  =\left\{
0,1\right\}  ^{m}%
\]
with $2^{m}$ elements. We fix $p$, a prime number, and $l_{0}$ a positive
integer such that $2^{m}\leq p^{l_{0}}$, and extend $\boldsymbol{Q}%
(\boldsymbol{v})$ to the set $\left\{  \boldsymbol{v}_{1},\ldots
,\boldsymbol{v}_{2^{m}},\boldsymbol{v}_{2^{m}+1},\ldots,\boldsymbol{v}%
_{p^{l_{0}}}\right\}  $ by taking $\boldsymbol{Q}(\boldsymbol{v}_{k})=0$ for
$2^{m}+1\leq k\leq p^{l_{0}}$. This observation allows us to extend
$\boldsymbol{Q}(\boldsymbol{v})$ to any finite set with cardinality $p^{l}$,
for any $l\geq l_{0}$. By identifying $\boldsymbol{v}$ with $\boldsymbol{v}%
_{l}=\left[  v_{j}^{l}\right]  _{j\in G_{l}}$, we can interpret
$\boldsymbol{Q}(\boldsymbol{v}_{l})$ a probability distributions on the
$\boldsymbol{v}_{l}$s.

In this section, we consider the problem of approximating $\boldsymbol{Q}%
(\boldsymbol{v})$ by the marginal distribution $\boldsymbol{P}_{l}%
(\boldsymbol{v})$ of an $DBN(p,l,\boldsymbol{\theta})$. To measure the
\textquotedblleft distance\textquotedblright\ between $\boldsymbol{Q}%
(\boldsymbol{v})$\ and $\boldsymbol{P}_{l}(\boldsymbol{v})$ we use the
Kullback-Leibler (KL) divergence:%
\[
KL(\boldsymbol{Q}\mid\boldsymbol{P}_{l})=%
{\displaystyle\sum\limits_{\boldsymbol{v}}}
\boldsymbol{Q}(\boldsymbol{v})\ln\frac{\boldsymbol{Q}(\boldsymbol{v}%
)}{\boldsymbol{P}_{l}(\boldsymbol{v})}=-H(\boldsymbol{Q})-\frac{1}{\#G_{l}}%
{\displaystyle\sum\limits_{j\in G_{l}}}
\ln\boldsymbol{P}_{l}(\boldsymbol{v}_{j}),
\]
where $H(\boldsymbol{Q})$ is the entropy of $\boldsymbol{Q}$. We recall that
$KL(\boldsymbol{Q}\mid\boldsymbol{P}_{l})=0$ if and only if $\boldsymbol{Q}%
=\boldsymbol{P}_{l}$. We construct an improved version of
$DBN(p,l,\boldsymbol{\theta}_{l})$ by increasing the the number of levels (or
layers) $l$, and consequently, the number of hidden variables (units), but
keeping the number of visible variables fixed.

\subsection{\label{SEction-Key-Construction}The key construction}

\begin{remark}
Given a positive integer $m$, and $a$, $b$ integers, we write%
\[
a\equiv b\text{ }\operatorname{mod}m
\]
to mean that $m$ divides $a-b$ in $\mathbb{Z}$, i.e., there exists
$n\in\mathbb{Z}$ such that $a-b=nm$.
\end{remark}

We first recall that $G_{l}=\mathbb{Z}_{p}/p^{l}\mathbb{Z}_{p}$ is isomorphic
to $\mathbb{Z}/p^{l}\mathbb{Z}$ as Abelian groups. We identify $i_{0}%
+i_{1}p+\ldots+i_{l-1}p^{l-1}\in G_{l}$ with an integer written in base $p$,
and the addition in $G_{l}$ with the sum of integers $\operatorname{mod}$
$p^{l}$. There is a natural homomorphism of Abelian groups:%
\[%
\begin{array}
[c]{ccc}%
G_{l+1} & \rightarrow & G_{l}\\
i & \rightarrow & i\text{ }\operatorname{mod}p^{l}.
\end{array}
\]
But, there are no natural homomorphisms from $G_{l}$ into $G_{l+1}$.

We identify $G_{l}$ with the \textit{subset} of $G_{l+1}$consisting of
integers having the form $i_{0}+i_{1}p+\ldots+i_{l-1}p^{l-1}$, where the
$i_{k}$s are $p$-adic digits. However, with this identification $G_{l}$ is not
a subgroup of the additive group $G_{l+1}$, because $G_{l}$ is not closed
under the addition in $G_{l+1}$. Indeed, $ap^{l-1}$, $\left(  p-a\right)
p^{l-1}\in G_{l}$ for any $a\in\left\{  1,\ldots,p-1\right\}  $, but
$ap^{l-1}+\left(  p-a\right)  p^{l-1}=p^{l}\notin G_{l}$.

We set
\[
T_{l+1}^{\ast}=\left\{  ap^{l};a\in\left\{  1,\ldots,p-1\right\}  \right\}
\subset G_{l+1},
\]
and $T_{l+1}=T_{l+1}^{\ast}\cup\left\{  0\right\}  $. Then $T_{l+1}$ is an
additive subgroup\ of $G_{l+1}$. Furthermore, as sets, it verifies that%
\[
G_{l+1}=%
{\displaystyle\bigsqcup\limits_{k\in T_{l+1}}}
\left(  G_{l}+k\right)  ,
\]
where $%
{\textstyle\bigsqcup}
$\ denotes the disjoint union and "$+$" denotes the addition in the group
$\left(  G_{l+1},+\right)  $.

We now construct a copy $\boldsymbol{v}_{l+1}=\left[  v_{i}^{l+1}\right]
_{i\in G_{l+1}}$ in $G_{l+1}$ of the visible field $\boldsymbol{v}_{l}=\left[
v_{i}^{l}\right]  _{i\in G_{l}}$. We set%
\[
\boldsymbol{v}_{j}^{l+1}=\boldsymbol{v}_{k}^{l}\text{ where }j\equiv k\text{
}\operatorname{mod}p^{l}\text{, for }j\in G_{l+1}\text{, }k\in G_{l}\text{.}%
\]
This construction is illustrated in Figure 2. 

\begin{figure}[ptb]
\begin{center}
\includegraphics[height=2.2693in,width=2.5642in,angle=0]{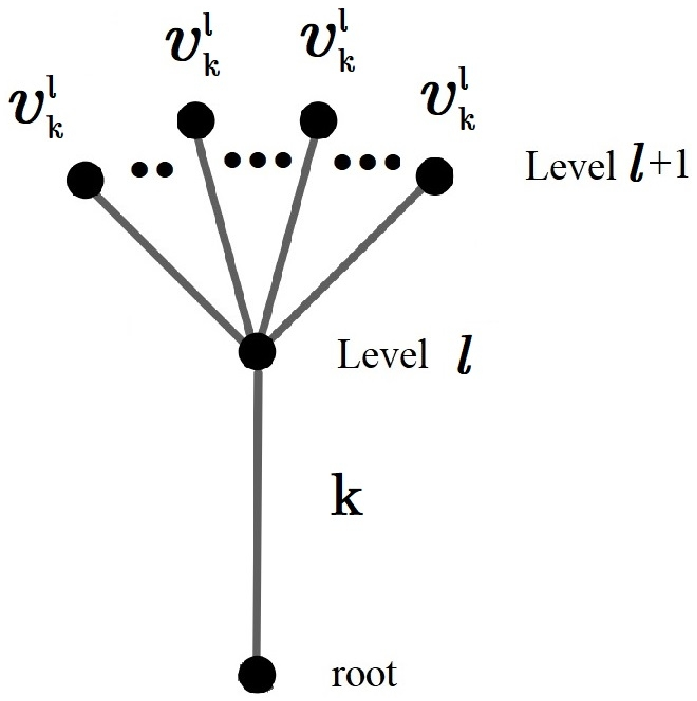}
\end{center}
\caption{Construction of the field $\boldsymbol{v}_{l+1}\boldsymbol{=}\left[
\boldsymbol{v}_{\boldsymbol{k}}^{l+1}\right]  _{\boldsymbol{k}\in G_{l+1}}$.}%
\label{fig:Figura 2}%
\end{figure}

We fix $j_{0}=\alpha p^{l}\in T_{l+1}^{\ast}$, $\alpha\in\left\{
1,\ldots,p-1\right\}  $, and set
\[
h_{j}^{l+1}=\left\{
\begin{array}
[c]{lll}%
0 & \text{if} & j\in G_{l+1}\smallsetminus\left(  G_{l}%
{\textstyle\bigsqcup}
\left\{  j_{0}\right\}  \right)  \\
&  & \\
h_{j_{0}}^{l+1} & \text{if} & j=j_{0}\in T_{l+1}^{\ast}\\
&  & \\
h_{j}^{l} & \text{if } & j\in G_{l}.
\end{array}
\right.
\]
With this construction the hidden field $\boldsymbol{h}_{l+1}=\left[
h_{i}^{l+1}\right]  _{i\in G_{l+1}}$ of the new $DBN(p,l+1,\boldsymbol{\theta
}_{l+1})$ consists of the hidden field $\boldsymbol{h}_{l}=\left[  h_{i}%
^{l}\right]  _{i\in G_{l}}$ of $DBN(p,l,\boldsymbol{\theta}_{l})$ plus an
extra hidden unit $h_{j_{0}}^{l+1}$. This construction is illustrated in
Figure 3.
\begin{figure}[ptb]
\begin{center}
\includegraphics[height=2.968in,width=5.0246in,angle=0]{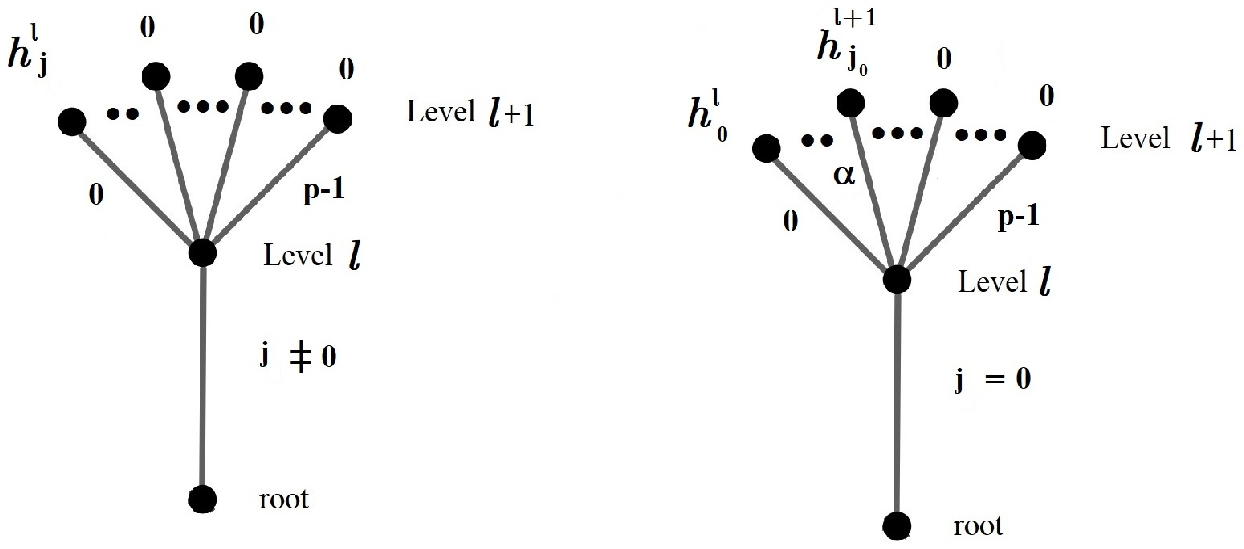}
\end{center}
\caption{Construction of the field $\boldsymbol{h}_{l+1}=\left[  h_{j}%
^{l+1}\right]  _{j\in G_{l+1}}$.}%
\label{fig:Figura 3}%
\end{figure}

We also set
\[
\text{\ }a_{j}^{l+1}=\left\{
\begin{array}
[c]{lll}%
0 & \text{if} & j\in G_{l+1}\smallsetminus G_{l}\\
&  & \\
a_{j}^{l} & \text{if } & j\in G_{l,}%
\end{array}
\right.  \text{ \ \ \ \ }b_{j}^{l+1}=\left\{
\begin{array}
[c]{lll}%
0 & \text{if} & j\in G_{l+1}\smallsetminus\left(  G_{l}%
{\textstyle\bigsqcup}
T_{l+1}^{\ast}\right) \\
&  & \\
b_{j_{0}}^{l+1} & \text{if} & j=j_{0}\in T_{l+1}^{\ast}\\
&  & \\
0 & \text{if} & j\in T_{l+1}^{\ast}\smallsetminus\left\{  j_{0}\right\} \\
&  & \\
b_{j}^{l} & \text{if } & j\in G_{l,}%
\end{array}
\right.
\]
and $\boldsymbol{a}_{l+1}=\left[  a_{j}^{l+1}\right]  _{j\in G_{l+1}}$,
$\boldsymbol{b}_{l+1}=\left[  b_{j}^{l+1}\right]  _{j\in G_{l+1}}$.

We fix $\beta\in\left\{  1,\ldots,p-1\right\}  $ and construct a copy
$-G_{l}+\beta p^{l}$ of $G_{l}$ in $G_{l+1}$. Here the minus sign denotes the
inverse in the group $G_{l}$, and thus the set $-G_{l}$ is a permutation of
$G_{l}$. We now set%
\[
w_{j}^{l+1}=\left\{
\begin{array}
[c]{lll}%
0 & \text{if} & j\in G_{l+1}\smallsetminus\left(  G_{l}%
{\textstyle\bigsqcup}
\left(  -G_{l}+\beta p^{l}\right)  \right) \\
&  & \\
w_{j}^{l+1} & \text{if} & j\in-G_{l}+\beta p^{l}\\
&  & \\
w_{j}^{l} & \text{if } & j\in G_{l}.
\end{array}
\right.
\]
Then, the vector
\[
\boldsymbol{w}_{l+1}=\left[  w_{j}^{l+1}\right]  _{j\in G_{l+1}}=\left[
\begin{array}
[c]{c}%
\boldsymbol{w}_{l}\\
\boldsymbol{w}_{l+1}^{\ast}%
\end{array}
\right]
\]
consists\ of the weigh vector $\boldsymbol{w}_{l}$ of
$DBN(p,l,\boldsymbol{\theta}_{l})$ and a new vector $\boldsymbol{w}%
_{l+1}^{\ast}=\left[  w_{j}^{l+1}\right]  _{j\in-G_{l}+\beta p^{l}}$.

Given an $DBN(p,l,\boldsymbol{\theta}_{l})$, the key construction \ allows us
to construct a new $DBN(p,l+1,\boldsymbol{\theta}_{l+1})$, $\boldsymbol{\theta
}_{l+1}=\left(  \boldsymbol{\theta}_{l},\boldsymbol{w}_{l+1}^{\ast},b_{j_{0}%
}^{l+1}\right)  $, with an extra layer, and extra hidden unit $h_{j_{0}}%
^{l+1}\in\left\{  0,1\right\}  $, and $1+\#G_{l}$\ new parameters: $b_{j_{0}%
}^{l+1}\in\mathbb{R}$, $\boldsymbol{w}_{l+1}^{\ast}\in\mathbb{R}^{\#G_{l}}$.
The energy functional of the new $DBN(p,l+1,\boldsymbol{\theta}_{l+1})$ is
given by%
\begin{align*}
E_{l+1}\left(  \boldsymbol{v}_{l+1},\boldsymbol{h}_{l+1},\boldsymbol{\theta
}_{l+1}\right)   &  =%
{\displaystyle\sum\limits_{j\in G_{l+1}}}
\text{ }%
{\displaystyle\sum\limits_{k\in G_{l+1}}}
w_{k}^{l+1}v_{j}^{l+1}h_{j+k}^{l+1}+%
{\displaystyle\sum\limits_{j\in G_{l+1}}}
a_{j}^{l+1}v_{j}^{l+1}\\
&  +%
{\displaystyle\sum\limits_{j\in G_{l+1}}}
b_{j}^{l+1}h_{j}^{l+1},
\end{align*}
where $\boldsymbol{\theta}_{l+1}=\left(  \boldsymbol{w}_{l+1},\boldsymbol{a}%
_{l+1},\boldsymbol{b}_{l+1}\right)  $.

\begin{lemma}
\label{Lemma_1}With the above notation, the following formulas holds true. For
$a$, $b\in\left\{  0,\ldots,p-1\right\}  $, we set
\[
S(a,b)=%
{\displaystyle\sum\limits_{j\in G_{l}+ap^{l}}}
\text{ }%
{\displaystyle\sum\limits_{k\in G_{l}+bp^{l}}}
w_{k}^{l+1}v_{j}^{l+1}h_{j+k}^{l+1},
\]
where \textquotedblleft$+$\textquotedblright\ denotes the sum in $G_{l+1}$.

\noindent(i) If $\ a+b\not \equiv 0$ $\operatorname{mod}p$, then%
\[
S(a,b)=\left\{
\begin{array}
[c]{ll}%
h_{j_{0}}^{l+1}%
{\displaystyle\sum\limits_{j\in G_{l}}}
w_{-j+\beta p^{l}}^{l+1}v_{j}^{l} & \text{if }\left(  a+b\right)
p^{l}+j+k=\alpha p^{l}\text{, for some }j,k\in G_{l}\\
& \\
0 & \text{otherwise.}%
\end{array}
\right.
\]

\noindent(ii) If $\ a+b\not \equiv 0$ $\operatorname{mod}p$, $\ $ then
\[%
{\displaystyle\sum\limits_{a+b\not \equiv 0\text{ }\operatorname{mod}p}}
S(a,b)=ph_{j_{0}}^{l+1}%
{\displaystyle\sum\limits_{j\in G_{l}}}
w_{-j+\beta p^{l}}^{l+1}v_{j}^{l}.
\]

\noindent(iii) If $a+b\equiv0$ $\operatorname{mod}p$, then%
\[
S(a,b)=\left\{
\begin{array}
[c]{ccc}%
{\displaystyle\sum\limits_{j\in G_{l}}}
\text{ }%
{\displaystyle\sum\limits_{k\in G_{l}}}
w_{k}^{l}v_{j}^{l}h_{j+k}^{l} &  & \text{if }a\equiv0\operatorname{mod}%
p,b\equiv0\operatorname{mod}p\\
&  & \\
0 &  & \text{otherwise.}%
\end{array}
\right.
\]

\noindent(iv)
\begin{align*}
E_{l+1}\left(  \boldsymbol{v}_{l+1},\boldsymbol{h}_{l+1};\boldsymbol{\theta
}_{l+1}\right)   &  =E_{l+1}\left(  \boldsymbol{v}_{l},\boldsymbol{h}%
_{l};\boldsymbol{\theta}_{l+1}\right) \\
&  =E_{l}\left(  \boldsymbol{v}_{l},\boldsymbol{h}_{l};\boldsymbol{\theta}%
_{l}\right)  +ph_{j_{0}}^{l+1}%
{\displaystyle\sum\limits_{j\in G_{l}}}
w_{-j+\beta p^{l}}^{l+1}v_{j}^{l}+b_{j_{0}}^{l+1}h_{j_{0}}^{l+1}.
\end{align*}

\end{lemma}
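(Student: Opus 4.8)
The plan is to prove (i)--(iii) by direct computation, exploiting the set-theoretic partition $G_{l+1}=\bigsqcup_{a=0}^{p-1}(G_l+ap^l)$ recorded before the lemma, and then to read off (iv) by combining (ii), (iii) with the elementary evaluation of the two linear terms of $E_{l+1}$. Decomposing both $G_{l+1}$-sums in the interaction part of $E_{l+1}$ according to this partition gives $\sum_{j,k\in G_{l+1}}w_k^{l+1}v_j^{l+1}h_{j+k}^{l+1}=\sum_{a,b}S(a,b)$, so everything reduces to understanding each $S(a,b)$. The arithmetic fact I would isolate first is this: if $j=j'+ap^l$, $k=k'+bp^l$ with $j',k'\in G_l$, and $j'+k'=r+\varepsilon p^l$ with $r\in G_l$ and $\varepsilon\in\{0,1\}$ the carry, then in $G_{l+1}$ the element $j+k$ has ``$G_l$-part'' $r$ and digit $(a+b+\varepsilon)\bmod p$ in position $l$. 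Hence $j+k\in G_l$ exactly when $a+b+\varepsilon\equiv 0\pmod p$, and $j+k=j_0=\alpha p^l$ exactly when $r=0$ (i.e.\ $k'=-j'$ in $G_l$) and $a+b+\varepsilon\equiv\alpha\pmod p$. Together with the support facts coming straight from the construction, namely $w_k^{l+1}\neq 0\Rightarrow k\in G_l\sqcup(-G_l+\beta p^l)$ and $h_m^{l+1}\neq 0\Rightarrow m\in G_l\sqcup\{j_0\}$, this splits every $S(a,b)$ into at most an ``old'' piece (terms with $j+k\in G_l$, contributing $h^l$-values) and a ``new'' piece (terms with $j+k=j_0$, contributing $h_{j_0}^{l+1}$).

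For (iii), $a+b\equiv 0\pmod p$ forces $a=b=0$ in order to have any ``new'' piece, and for $(a,b)=(0,0)$ the substitutions $w^{l+1}|_{G_l}=w^l$, $v^{l+1}|_{G_l}=v^l$, $h^{l+1}|_{G_l}=h^l$ reproduce the interaction part of $E_l$; for the remaining pairs $(a,p-a)$ with $a\geq 1$ one checks that the support conditions kill every surviving term, so $S(a,p-a)=0$. For (i), when $a+b\not\equiv 0\pmod p$ the ``old'' piece is empty (the digit condition $a+b+\varepsilon\equiv 0$ is incompatible with the $w^{l+1}$-support), so only the ``new'' piece can occur; evaluating it via the parametrization $k=-j+\beta p^l$, $j\in G_l$, of $-G_l+\beta p^l$ together with the periodicity $v_{j+ap^l}^{l+1}=v_j^l$ gives exactly $h_{j_0}^{l+1}\sum_{j\in G_l}w_{-j+\beta p^l}^{l+1}v_j^l$, while the stated side condition ``$(a+b)p^l+j+k=\alpha p^l$ for some $j,k\in G_l$'' is precisely the criterion for that piece to be nonempty, the alternative being $S(a,b)=0$. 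Statement (ii) is then the sum of (i) over the coset pairs satisfying that side condition: each contributes the same value $h_{j_0}^{l+1}\sum_{j\in G_l}w_{-j+\beta p^l}^{l+1}v_j^l$, and counting them produces the prefactor $p$.

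Finally, in (iv) the first equality is the tautology that $\boldsymbol{v}_{l+1}$ and $\boldsymbol{h}_{l+1}$ are completely determined by $\boldsymbol{v}_l$, $\boldsymbol{h}_l$ and the single extra unit $h_{j_0}^{l+1}$. For the second equality I would write $E_{l+1}=\sum_{a,b}S(a,b)+\sum_{j\in G_{l+1}}a_j^{l+1}v_j^{l+1}+\sum_{j\in G_{l+1}}b_j^{l+1}h_j^{l+1}$; by (ii) and (iii) the first sum equals the interaction part of $E_l$ plus $p\,h_{j_0}^{l+1}\sum_{j\in G_l}w_{-j+\beta p^l}^{l+1}v_j^l$, while the definitions of $a_j^{l+1}$ and $b_j^{l+1}$ (zero off $G_l$, except $b_{j_0}^{l+1}$ at $j_0$) collapse the two linear sums to the linear part of $E_l$ plus $b_{j_0}^{l+1}h_{j_0}^{l+1}$; adding the pieces yields the formula. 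The main obstacle is the bookkeeping in the first two paragraphs: the carry $\varepsilon$ in $j'+k'$ and the coset shift $(a+b)p^l$ mix modulo $p^{l+1}$, so one must carefully separate the ``wrap-around'' terms of the level-$l$ convolution from the terms that feed $h_{j_0}^{l+1}$. Extracting the position-$l$ digit of $j+k$ as $(a+b+\varepsilon)\bmod p$ once and for all, and then running the case split on $(a+b)\bmod p$, is what keeps this under control; everything after that is substitution.
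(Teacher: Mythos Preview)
Your approach is essentially the paper's: partition $G_{l+1}=\bigsqcup_a(G_l+ap^l)$, study each $S(a,b)$ via the supports of $w^{l+1}$ and $h^{l+1}$, and reassemble. Your explicit carry bookkeeping (the digit of $j+k$ is $(a+b+\varepsilon)\bmod p$) is a genuine improvement in clarity over the paper, which simply writes $h^{l+1}_{(a+b)p^l+j+k}$ and never isolates $\varepsilon$.

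That said, one step is not justified as written. In your argument for (i) you assert that when $a+b\not\equiv 0\pmod p$ the ``old'' piece (terms with $j+k\in G_l$) is empty because the digit condition $a+b+\varepsilon\equiv 0$ is ``incompatible with the $w^{l+1}$-support''. Take $b=0$, $a=p-1$, and any $j',k'\in G_l$ with $j'+k'\ge p^l$ (so $\varepsilon=1$): then $k=k'\in G_l$ lies in the $w^{l+1}$-support with $w^{l+1}_{k}=w^{l}_{k}$, the digit of $j+k$ is $(p-1+0+1)\bmod p=0$, hence $j+k\in G_l$ and $h^{l+1}_{j+k}=h^l_{r}$ with $r=j'+k'-p^l$. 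These terms are nonzero in general, so $S(p-1,0)$ has a nonvanishing ``old'' piece. Symmetrically, your claim in (iii) that $S(0,0)$ ``reproduces the interaction part of $E_l$'' is off by exactly the same terms: in $S(0,0)$ the index $j+k$ is computed in $G_{l+1}$, so the wrap-around summands of the $G_l$-convolution (those with $j'+k'\ge p^l$) are missing there. The point is that these two discrepancies cancel when you sum over all $(a,b)$: the wrap-around terms absent from $S(0,0)$ reappear in $S(p-1,0)$ (and, depending on $\beta$, analogous redistributions occur among the ``new'' pieces). So your derivation of (iv) survives, but the individual formulas (i) and (iii) as you argue them need this redistribution spelled out. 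The paper's own proof has the same blind spot---it asserts in (i) that ``$h^{l+1}_{(a+b)p^l+j+k}\neq 0$ iff $(a+b)p^l+j+k=\alpha p^l$'' without treating the possibility of landing in $G_l$ via a carry---so you are in good company, but your carry framework is precisely the tool that lets you make this explicit; use it.
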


\begin{proof}
(i) Notice that
\[
S(a,b)=%
{\displaystyle\sum\limits_{j\in G_{l}}}
\text{ }%
{\displaystyle\sum\limits_{k\in G_{l}}}
w_{k+bp^{l}}^{l+1}v_{j+ap^{l}}^{l+1}h_{\left(  a+b\right)  p^{l}+j+k}^{l+1}.
\]
We now use that $h_{\left(  a+b\right)  p^{l}+j+k}^{l+1}\neq0$ if and only if
$\left(  a+b\right)  p^{l}+j+k=\alpha p^{l}$ in $G_{l+1}$, i.e., if and only
if
\begin{equation}
\left(  a+b\right)  p^{l}+j+k\equiv\alpha p^{l}\text{ }\operatorname{mod}%
p^{l+1}. \label{Condition_1}%
\end{equation}
The condition (\ref{Condition_1}) implies that%
\[
j+k\equiv0\text{ }\operatorname{mod}p^{s}\text{ for }1\leq s\leq l,
\]
which in turn implies that $j+k=0$ in $G_{l}$, and thus (\ref{Condition_1})
becomes%
\begin{equation}
a+b\equiv\alpha\text{ }\operatorname{mod}p. \label{Condition_1A}%
\end{equation}
This last congruence has solutions since $a+b\not \equiv 0$
$\operatorname{mod}p$ . If (\ref{Condition_1}) is satisfied, then $k=-j$ in
$G_{l}$ and since $w_{-j+bp^{l}}^{l+1}\neq0\Leftrightarrow b=\beta$, and
$v_{j+ap^{l}}^{l+1}=v_{j}^{l}$, we have
\[
S(a,b)=h_{j_{0}}^{l+1}%
{\displaystyle\sum\limits_{j\in G_{l}}}
w_{-j+\beta p^{l}}^{l+1}v_{j}^{l},
\]
otherwise $S(a,b)=0$.

(ii) It follows form the first part by using that%
\begin{multline*}%
{\displaystyle\sum\limits_{a+b\not \equiv 0\operatorname{mod}p}}
S(a,b)=%
{\displaystyle\sum\limits_{a\not \equiv 0\operatorname{mod}p\ }}
{\displaystyle\sum\limits_{\text{ }b\not \equiv p-a\operatorname{mod}p}}
S(a,b)+%
{\displaystyle\sum\limits_{a\equiv0\operatorname{mod}p\ }}
{\displaystyle\sum\limits_{\text{ }b\not \equiv 0\operatorname{mod}p}}
S(a,b)\\
+%
{\displaystyle\sum\limits_{a\not \equiv 0\operatorname{mod}p\ }}
{\displaystyle\sum\limits_{\text{ }b\equiv0\operatorname{mod}p}}
S(a,b).
\end{multline*}

(iii)\ Notice that $b\equiv\left(  p-a\right)  $ $\operatorname{mod}p$, then
\[
S(a,b)=%
{\displaystyle\sum\limits_{j\in G_{l}}}
\text{ }%
{\displaystyle\sum\limits_{k\in G_{l}}}
w_{k+\left(  p-a\right)  p^{l}}^{l+1}v_{j+ap^{l}}^{l+1}h_{j+k}^{l+1}.
\]
If $a\not \equiv 0$ $\operatorname{mod}p$, then $v_{j+ap^{l}}^{l+1}=0$, and
$S(a,b)=0$. If $a\equiv0$ $\operatorname{mod}p$, then $b\equiv0$
$\operatorname{mod}p$, and\
\begin{equation}
S(a,b)=%
{\displaystyle\sum\limits_{j\in G_{l}}}
{\displaystyle\sum\limits_{k\in G_{l}}}
w_{k}^{l}v_{j}^{l}h_{j+k}^{l+1}=%
{\displaystyle\sum\limits_{\substack{j,k\in G_{l}\\j+k\in G_{l}}}}
w_{k}^{l}v_{j}^{l}h_{j+k}^{l+1}+%
{\displaystyle\sum\limits_{\substack{j,k\in G_{l}\\j+k=j_{0}}}}
w_{k}^{l}v_{j}^{l}h_{j+k}^{l+1}. \label{Last_sum}%
\end{equation}
A simple inductive argument on $l$ shows that $j+k\equiv0$ $\operatorname{mod}%
p^{l}$, with $j,k\in G_{l}$, is only possible if $j=s+ap^{l-1}\in G_{l}$ and
$k=-s+\left(  p-a\right)  p^{l-1}\in G_{l}$, in this case, $j+k\boldsymbol{=}%
p^{l+1}\neq j_{0}$ in $G_{l+1}$ and consequently the last sum in
(\ref{Last_sum}) is zero.

(iv) We first notice that%
\begin{gather}
E_{l+1}\left(  \boldsymbol{v}_{l+1},\boldsymbol{h}_{l+1};\boldsymbol{\theta
}_{l+1}\right)  =%
{\displaystyle\sum\limits_{j,k\in G_{l+1}}}
w_{k}^{l+1}v_{j+k}^{l+1}h_{j}^{l+1}+%
{\displaystyle\sum\limits_{j\in G_{l}{{\bigsqcup}}T_{l+1}}}
a_{j}^{l+1}v_{j}^{l+1}\label{Formula_1}\\
+%
{\displaystyle\sum\limits_{j\in G_{l}{\bigsqcup}T_{l+1}}}
b_{j}^{l+1}h_{j}^{l+1}\nonumber\\
=:S_{l+1}^{\left(  0\right)  }\left(  \boldsymbol{v}_{l+1},\boldsymbol{h}%
_{l+1},\boldsymbol{w}_{l+1}\right)  +S_{l+1}^{\left(  1\right)  }\left(
\boldsymbol{v}_{l+1},\boldsymbol{a}_{l+1}\right)  +S_{l+1}^{\left(  2\right)
}\left(  \boldsymbol{h}_{l+1},\boldsymbol{b}_{l+1}\right)  .\nonumber
\end{gather}
Now%
\begin{align}
S_{l+1}^{\left(  0\right)  }\left(  \boldsymbol{v}_{l+1},\boldsymbol{h}%
_{l+1};\boldsymbol{\theta}_{l+1}\right)   &  :=%
{\displaystyle\sum\limits_{j\in G_{l+1}}}
\text{ }%
{\displaystyle\sum\limits_{k\in G_{l+1}}}
w_{k}^{l+1}v_{j}^{l+1}h_{j+k}^{l+1}=%
{\displaystyle\sum\limits_{a\ }}
{\displaystyle\sum\limits_{\text{ }b}}
S(a,b)\label{Formula_2A}\\
&  =ph_{j_{0}}^{l+1}%
{\displaystyle\sum\limits_{j\in G_{l}}}
w_{-j+\beta p^{l}}^{l+1}v_{j}^{l}+%
{\displaystyle\sum\limits_{j\in G_{l}}}
\text{ }%
{\displaystyle\sum\limits_{k\in G_{l}}}
w_{k}^{l}v_{j}^{l}h_{j+k}^{l}.\nonumber
\end{align}
It follows immediately that%
\begin{equation}
S_{l+1}^{\left(  1\right)  }\left(  \boldsymbol{v}_{l+1},\boldsymbol{a}%
_{l+1}\right)  =%
{\displaystyle\sum\limits_{j\in G_{l}{\bigsqcup}T_{l+1}}}
a_{j}^{l+1}v_{j}^{l+1}=%
{\displaystyle\sum\limits_{j\in G_{l}}}
a_{j}^{l}v_{j}^{l}, \label{Formula_2}%
\end{equation}
and that%
\begin{equation}
S_{l+1}^{\left(  2\right)  }\left(  \boldsymbol{h}_{l+1},\boldsymbol{b}%
_{l+1}\right)  =%
{\displaystyle\sum\limits_{j\in G_{l}}}
b_{j}^{l}h_{j}^{l}+b_{j_{0}}^{l+1}h_{j_{0}}^{l+1}. \label{Formula_3}%
\end{equation}
The announce formula follows from formula (\ref{Formula_1}), by using
(\ref{Formula_2A})-(\ref{Formula_3}).
\end{proof}

It is relevant to mention that the energy functional of
$DBN(p,l+1,\boldsymbol{\theta}_{l+1})$ is an extension of the energy
functional of $DBN(p,l,\boldsymbol{\theta}_{l})$. Furthermore, the key
construction is recursive. Starting with $DBN(p,l+1,\boldsymbol{\theta}%
_{l+1})$ there exists another $DBN(p,l+2,\boldsymbol{\theta}_{l+2})$ whose
energy functional is an extension of the energy functional of
$DBN(p,l+1,\boldsymbol{\theta}_{l+1})$.

\subsection{Better model with increasing number of levels}

In this section we show that the computational power of an $DBN(p,l,\theta)$
increases with the number of levels (or layers). More precisely, we show
$p$-adic counterparts of the main results in \cite[Theorems 1, 2]{Le roux et
al 1}.

On the other hand, $\left\{  -j+\beta p^{l};j\in G_{l}\right\}  \subset
G_{l+1}$ \ is a copy (more precisely a fixed lifting) of $G_{l}$ in $G_{l+1}$,
and since the $w_{-j+\beta p^{l}}^{l+1}$s are new parameters, we rename
$w_{-j+\beta p^{l}}^{l+1}$ as $w_{j}^{l+1}$, then%
\[%
{\displaystyle\sum\limits_{j\in G_{l}}}
w_{j}^{l+1}v_{j}^{l}=%
{\displaystyle\sum\limits_{j\in G_{l}}}
w_{-j+\beta p^{l}}^{l+1}v_{j}^{l}.
\]
We rescale $h_{j_{0}}^{l+1}$ to $ph_{j_{0}}^{l+1}$ and $b_{j_{0}}^{l+1}$ to
$p^{-1}b_{j_{0}}^{l+1}$. With this notation the energy functional
$E_{l+1}\left(  \boldsymbol{v}_{l},\boldsymbol{h}_{l+1};\boldsymbol{\theta
}_{l+1}\right)  $ becomes
\begin{equation}
E_{l+1}\left(  \boldsymbol{v}_{l},\boldsymbol{h}_{l+1};\boldsymbol{\theta
}_{l+1}\right)  =E_{l}\left(  \boldsymbol{v}_{l},\boldsymbol{h}_{l}%
;\boldsymbol{\theta}_{l}\right)  +h_{j_{0}}^{l+1}%
{\displaystyle\sum\limits_{k\in G_{l}}}
w_{k}^{l+1}v_{k}^{l}+b_{j_{0}}^{l+1}h_{j_{0}}^{l+1}, \label{Key_Formula}%
\end{equation}
where $\boldsymbol{h}_{l+1}=\left[
\begin{array}
[c]{c}%
\boldsymbol{h}_{l}\\
h_{\boldsymbol{j}_{0}}^{l+1}%
\end{array}
\right]  $, $\boldsymbol{\theta}_{l+1}=\left(  \boldsymbol{\theta}%
_{l},\boldsymbol{w}_{l+1},b_{j_{0}}^{l+1}\right)  $. Notice that
$DBN(p,l+1,\boldsymbol{\theta}_{l+1})=DBN(p,l+1,\boldsymbol{\theta}%
_{l},\boldsymbol{w}_{l+1},b_{j_{0}}^{l+1})$ has only one additional hidden
unit ($h_{j_{0}}^{l+1}$). The corresponding Boltzmann distribution is given by%
\[
\boldsymbol{P}_{l+1}\left(  \boldsymbol{v}_{l},\boldsymbol{h}_{l+1}%
;\boldsymbol{\theta}_{l+1}\right)  =\frac{\exp\left(  -E_{l+1}\left(
\boldsymbol{v}_{l},\boldsymbol{h}_{l+1};\boldsymbol{\theta}_{l+1}\right)
\right)  }{Z_{l+1}(\boldsymbol{\theta}_{l+1})},
\]
and the marginal distribution is given by
\[
\boldsymbol{P}_{l+1}\left(  \boldsymbol{v}_{l};\boldsymbol{\theta}%
_{l+1}\right)  =\frac{%
{\displaystyle\sum\limits_{\boldsymbol{h}_{l+1}}}
\exp\left(  -E_{l+1}\left(  \boldsymbol{v}_{l},\boldsymbol{h}_{l+1}%
;\boldsymbol{\theta}_{l+1}\right)  \right)  }{Z_{l+1}(\theta_{l+1})}.
\]

\begin{lemma}
\label{Lemma_2}Let $\boldsymbol{P}_{l}\left(  \boldsymbol{v}_{l}%
;\boldsymbol{\theta}_{l}\right)  $ be a probability distribution over binary
vectors $\left\{  0,1\right\}  ^{\#G_{l}}$ obtained with an
$DBN(p,l,\boldsymbol{\theta}_{l})$, and let $\boldsymbol{P}_{l+1}\left(
\boldsymbol{v}_{l};\boldsymbol{\theta}_{l+1}\right)  =\boldsymbol{P}%
_{l+1}\left(  \boldsymbol{v}_{l};\boldsymbol{\theta}_{l},\boldsymbol{w}%
_{l+1},b_{j_{0}}^{l+1}\right)  $ be the marginal probability distribution
corresponding to $DBN(p,l+1,\boldsymbol{\theta}_{l},\boldsymbol{w}%
_{l+1},b_{j_{0}}^{l+1})$, which is obtained from $DBN(p,l,\boldsymbol{\theta
}_{l})$ by adding one level and one hidden unit. Then $\boldsymbol{P}%
_{l+1}\left(  \boldsymbol{v}_{l+1};\boldsymbol{\theta}_{l},\boldsymbol{w}%
_{l+1},b_{j_{0}}^{l+1}\right)  $ is a probability distribution over binary
vectors $\left\{  0,1\right\}  ^{\#G_{l}}$ for any $b_{j_{0}}^{l+1}\in\left[
-\infty,\infty\right)  $, and $\boldsymbol{P}_{l+1}\left(  \boldsymbol{v}%
_{l};\boldsymbol{\theta}_{l},\boldsymbol{w}_{l+1},-\infty\right)
=\boldsymbol{P}_{l}\left(  \boldsymbol{v}_{l};\boldsymbol{\theta}_{l}\right)
$.
\end{lemma}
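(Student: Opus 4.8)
The plan is to integrate out the single extra hidden unit $h_{j_{0}}^{l+1}\in\{0,1\}$ of $DBN(p,l+1,\boldsymbol{\theta}_{l},\boldsymbol{w}_{l+1},b_{j_{0}}^{l+1})$ in closed form, express its visible marginal $\boldsymbol{P}_{l+1}(\boldsymbol{v}_{l};\boldsymbol{\theta}_{l+1})$ as an elementary reweighting of $\boldsymbol{P}_{l}(\boldsymbol{v}_{l};\boldsymbol{\theta}_{l})$, and then analyse the $b_{j_{0}}^{l+1}\to-\infty$ limit directly in that formula. First I would use (\ref{Key_Formula}): with the biases entering the Boltzmann factor with the sign convention of (\ref{Energy_lI}), the energy of the enlarged network splits as
\[
E_{l+1}\!\left(\boldsymbol{v}_{l},\boldsymbol{h}_{l+1};\boldsymbol{\theta}_{l+1}\right)=E_{l}\!\left(\boldsymbol{v}_{l},\boldsymbol{h}_{l};\boldsymbol{\theta}_{l}\right)-h_{j_{0}}^{l+1}\,\eta(\boldsymbol{v}_{l}),\qquad \eta(\boldsymbol{v}_{l}):=\sum_{k\in G_{l}}w_{k}^{l+1}v_{k}^{l}+b_{j_{0}}^{l+1},
\]
and the decisive feature is that the extra term does not involve $\boldsymbol{h}_{l}$. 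Summing $e^{-E_{l+1}}$ over $\boldsymbol{h}_{l+1}=(\boldsymbol{h}_{l},h_{j_{0}}^{l+1})$ then factorizes: the $h_{j_{0}}^{l+1}$-sum contributes $1+e^{\eta(\boldsymbol{v}_{l})}$ and the $\boldsymbol{h}_{l}$-sum contributes $Z_{l}\boldsymbol{P}_{l}(\boldsymbol{v}_{l};\boldsymbol{\theta}_{l})$. Dividing numerator and denominator by $Z_{l}$ I would obtain, for every finite $b_{j_{0}}^{l+1}$,
\[
\boldsymbol{P}_{l+1}\!\left(\boldsymbol{v}_{l};\boldsymbol{\theta}_{l+1}\right)=\frac{\boldsymbol{P}_{l}\!\left(\boldsymbol{v}_{l};\boldsymbol{\theta}_{l}\right)\bigl(1+e^{\eta(\boldsymbol{v}_{l})}\bigr)}{\sum_{\boldsymbol{v}_{l}'}\boldsymbol{P}_{l}\!\left(\boldsymbol{v}_{l}';\boldsymbol{\theta}_{l}\right)\bigl(1+e^{\eta(\boldsymbol{v}_{l}')}\bigr)},
\]
where $\boldsymbol{v}_{l+1}$ is identified with $\boldsymbol{v}_{l}$ as in the key construction and the denominator equals $Z_{l+1}/Z_{l}$.

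From this identity the first assertion is immediate when $b_{j_{0}}^{l+1}\in(-\infty,\infty)$: each summand is strictly positive (indeed $\boldsymbol{P}_{l}(\boldsymbol{v}_{l};\boldsymbol{\theta}_{l})>0$ for all $\boldsymbol{v}_{l}$), the denominator is a finite positive constant, and the values sum to $1$ over $\boldsymbol{v}_{l}\in\{0,1\}^{\#G_{l}}$; hence $\boldsymbol{P}_{l+1}(\cdot;\boldsymbol{\theta}_{l+1})$ is a probability distribution on $\{0,1\}^{\#G_{l}}$. For the endpoint $b_{j_{0}}^{l+1}=-\infty$, read as the limit, the point to record is that the pre-activation of the new unit is bounded uniformly on the (finite) configuration space: $\bigl|\sum_{k\in G_{l}}w_{k}^{l+1}v_{k}^{l}\bigr|\le\sum_{k\in G_{l}}|w_{k}^{l+1}|$ for all $\boldsymbol{v}_{l}\in\{0,1\}^{\#G_{l}}$. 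Therefore $\eta(\boldsymbol{v}_{l})\to-\infty$ uniformly, $e^{\eta(\boldsymbol{v}_{l})}\to0$ uniformly, and since all the sums in the closed form are finite I may pass to the limit term by term, getting $\boldsymbol{P}_{l+1}(\boldsymbol{v}_{l};\boldsymbol{\theta}_{l},\boldsymbol{w}_{l+1},-\infty)=\boldsymbol{P}_{l}(\boldsymbol{v}_{l};\boldsymbol{\theta}_{l})\big/\sum_{\boldsymbol{v}_{l}'}\boldsymbol{P}_{l}(\boldsymbol{v}_{l}';\boldsymbol{\theta}_{l})=\boldsymbol{P}_{l}(\boldsymbol{v}_{l};\boldsymbol{\theta}_{l})$, which is again a probability distribution. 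This proves both claims.

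I do not anticipate a genuine obstacle: the argument is essentially bookkeeping — isolating the $h_{j_{0}}^{l+1}$-dependent term in (\ref{Key_Formula}), keeping the two partition functions $Z_{l}$ and $Z_{l+1}$ straight, and recording the harmless uniform boundedness of the new unit's pre-activation on $\{0,1\}^{\#G_{l}}$, which is exactly what guarantees that the limit $b_{j_{0}}^{l+1}\to-\infty$ exists and equals $\boldsymbol{P}_{l}$. The one place where care is needed is the sign with which the bias $b_{j_{0}}^{l+1}$ enters the Boltzmann weight, so that $b_{j_{0}}^{l+1}\to-\infty$ switches the extra hidden unit off (making $\boldsymbol{P}(h_{j_{0}}^{l+1}=1\mid\boldsymbol{v}_{l})\to0$) rather than on; this is pinned down by consistency with (\ref{Energy_lI}).
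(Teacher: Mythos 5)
Your proposal is correct and follows essentially the same route as the paper: factor the Boltzmann sum over the extra hidden unit to get the multiplicative reweighting factor $1+\exp\left(\sum_{k\in G_{l}}w_{k}^{l+1}v_{k}^{l}+b_{j_{0}}^{l+1}\right)$ of $\boldsymbol{P}_{l}$, observe normalization for finite $b_{j_{0}}^{l+1}$, and send $b_{j_{0}}^{l+1}\to-\infty$ so the factor tends to $1$. Your explicit remarks on the sign convention (so that the limit switches the unit off, consistent with the computation rather than the literal sign in (\ref{Key_Formula})) and on the uniform boundedness of the pre-activation are slightly more careful than the paper's write-up but do not change the argument.
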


\begin{proof}
By using the formula (\ref{Key_Formula}) and the fact that $\boldsymbol{h}%
_{l+1}=\left(  \boldsymbol{h}_{l},h_{j_{0}}^{l+1}\right)  $, we have%
\begin{gather*}%
{\displaystyle\sum\limits_{\boldsymbol{h}_{l+1}}}
\exp\left(  -E_{l+1}\left(  \boldsymbol{v}_{l},\boldsymbol{h}_{l+1}%
;\boldsymbol{\theta}_{l+1}\right)  \right) \\
=%
{\displaystyle\sum\limits_{\boldsymbol{h}_{l},h_{j_{0}}^{l+1}}}
\exp\left(  -E_{l}\left(  \boldsymbol{v}_{l},\boldsymbol{h}_{l}%
;\boldsymbol{\theta}_{l}\right)  \right)  \exp\left(  h_{j_{0}}^{l+1}%
{\displaystyle\sum\limits_{k\in G_{l}}}
w_{k}^{l+1}v_{k}^{l}+b_{j_{0}}^{l+1}h_{j_{0}}^{l+1}\right) \\
=\left\{
{\displaystyle\sum\limits_{\boldsymbol{h}_{l}}}
\exp\left(  -E_{l}\left(  \boldsymbol{v}_{l},\boldsymbol{h}_{l}%
;\boldsymbol{\theta}_{l}\right)  \right)  \right\}  \left\{
{\displaystyle\sum\limits_{h_{j_{0}}^{l+1}\in\left\{  0,1\right\}  }}
\exp\left(  h_{j_{0}}^{l+1}%
{\displaystyle\sum\limits_{k\in G_{l}}}
w_{k}^{l+1}v_{k}^{l}+b_{j_{0}}^{l+1}h_{j_{0}}^{l+1}\right)  \right\} \\
=\left\{
{\displaystyle\sum\limits_{\boldsymbol{h}_{l}}}
\exp\left(  -E_{l}\left(  \boldsymbol{v}_{l},\boldsymbol{h}_{l}%
;\boldsymbol{\theta}_{l}\right)  \right)  \right\}  \left\{  1+\exp\left(
{\displaystyle\sum\limits_{k\in G_{l}}}
w_{k}^{l+1}v_{k}^{l}+b_{j_{0}}^{l+1}\right)  \right\}  .
\end{gather*}
Then%
\begin{align}
Z_{l+1}(\boldsymbol{\theta}_{l},\boldsymbol{w}_{l+1},b_{j_{0}}^{l+1})  &  =%
{\displaystyle\sum\limits_{\boldsymbol{v}_{l+1},\boldsymbol{h}_{l+1}}}
\exp\left(  -E_{l+1}\left(  \boldsymbol{v}_{l+1},\boldsymbol{h}_{l+1}%
;\boldsymbol{\theta}_{l+1}\right)  \right) \label{Formula_10}\\
&  =%
{\displaystyle\sum\limits_{\boldsymbol{v}_{l},\boldsymbol{h}_{l}}}
\left\{  1+\exp\left(
{\displaystyle\sum\limits_{k\in G_{l}}}
w_{k}^{l+1}v_{k}^{l}+b_{j_{0}}^{l+1}\right)  \right\}  \exp\left(
-E_{l}\left(  \boldsymbol{v}_{l},\boldsymbol{h}_{l};\boldsymbol{\theta}%
_{l}\right)  \right)  ,\nonumber
\end{align}
and%
\begin{gather}
\boldsymbol{P}_{l+1}\left(  \boldsymbol{v}_{l};\boldsymbol{\theta}%
_{l},\boldsymbol{w}_{l+1},b_{j_{0}}^{l+1}\right)  =\label{Formula_11}\\
\frac{\left\{  1+\exp\left(
{\displaystyle\sum\limits_{k\in G_{l}}}
w_{k}^{l+1}v_{k}^{l}+b_{j_{0}}^{l+1}\right)  \right\}
{\displaystyle\sum\limits_{\boldsymbol{h}_{l}}}
\exp\left(  -E_{l}\left(  \boldsymbol{v}_{l},\boldsymbol{h}_{l}%
;\boldsymbol{\theta}_{l}\right)  \right)  }{Z_{l+1}(\boldsymbol{\theta}%
_{l},\boldsymbol{w}_{l+1},b_{j_{0}}^{l+1})}.\nonumber
\end{gather}
Thus $\boldsymbol{P}_{l+1}\left(  \boldsymbol{v}_{l};\theta_{l},w_{l+1}%
,b_{j_{0}}^{l+1}\right)  $ is a well-defined probability distribution for any
$b_{j_{0}}^{l+1}\in\left[  -\infty,\infty\right)  $, and $\boldsymbol{P}%
_{l+1}\left(  \boldsymbol{v}_{l};\theta_{l},-\infty\right)  =\boldsymbol{P}%
_{l}\left(  \boldsymbol{v}_{l};\theta_{l}\right)  $.
\end{proof}

\begin{lemma}
\label{Lemma_3}Assume that $KL(\boldsymbol{Q}\left(  \boldsymbol{v}%
_{l}\right)  \mid\boldsymbol{P}_{l}\left(  \boldsymbol{v}_{l};\theta
_{l}\right)  )>0$. Then there exists $\widehat{\boldsymbol{w}}_{^{l+1}%
}=\left(  \widehat{w}_{k}^{l+1}\right)  _{k\in G_{l}}$ such that
\begin{equation}%
{\displaystyle\sum\limits_{\boldsymbol{v}_{l}}}
\exp\left(
{\displaystyle\sum\limits_{_{k\in G_{l}}}}
\widehat{w}_{k}^{l+1}v_{k}^{l}\right)  \left(  \boldsymbol{P}_{l}\left(
\boldsymbol{v}_{l};\boldsymbol{\theta}_{l}\right)  -\boldsymbol{Q}\left(
\boldsymbol{v}_{l}\right)  \right)  <0. \label{Inequality_1}%
\end{equation}

\end{lemma}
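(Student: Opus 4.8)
The plan is to exhibit $\widehat{\boldsymbol{w}}_{l+1}$ explicitly, by choosing a weight vector that makes the exponential factor $\exp\!\big(\sum_{k}\widehat{w}_{k}^{l+1}v_{k}^{l}\big)$ concentrate, as sharply as we like, on a single binary configuration at which $\boldsymbol{Q}$ strictly exceeds $\boldsymbol{P}_{l}$. This is the $p$-adic transcription of the corresponding step in \cite{Le roux et al 1}.

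First I would locate the distinguished configuration. Because $\boldsymbol{P}_{l}(\boldsymbol{v}_{l};\boldsymbol{\theta}_{l})$ is a Boltzmann distribution it is strictly positive on all of $\left\{0,1\right\}^{\#G_{l}}$, and $\sum_{\boldsymbol{v}_{l}}\boldsymbol{P}_{l}(\boldsymbol{v}_{l})=\sum_{\boldsymbol{v}_{l}}\boldsymbol{Q}(\boldsymbol{v}_{l})=1$. The hypothesis $KL(\boldsymbol{Q}\mid\boldsymbol{P}_{l})>0$ forces $\boldsymbol{Q}\neq\boldsymbol{P}_{l}$; since they have equal total mass, $\boldsymbol{Q}\le\boldsymbol{P}_{l}$ pointwise is impossible, so there exists $\boldsymbol{v}^{\ast}=\left[v_{k}^{\ast}\right]_{k\in G_{l}}\in\left\{0,1\right\}^{\#G_{l}}$ with $\boldsymbol{Q}(\boldsymbol{v}^{\ast})-\boldsymbol{P}_{l}(\boldsymbol{v}^{\ast})>0$, equivalently $\boldsymbol{P}_{l}(\boldsymbol{v}^{\ast})-\boldsymbol{Q}(\boldsymbol{v}^{\ast})<0$.

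Next I would set up a one-parameter family of candidates. For $c>0$ put $\widehat{w}_{k}^{l+1}(c)=c$ when $v_{k}^{\ast}=1$ and $\widehat{w}_{k}^{l+1}(c)=-c$ when $v_{k}^{\ast}=0$, and write $N^{\ast}=\#\{k\in G_{l}:v_{k}^{\ast}=1\}$. A direct count (agreements on $1$'s minus coordinates switched on) gives $\sum_{k\in G_{l}}\widehat{w}_{k}^{l+1}(c)\,v_{k}^{l}=cN^{\ast}$ when $\boldsymbol{v}_{l}=\boldsymbol{v}^{\ast}$, while any single-coordinate disagreement with $\boldsymbol{v}^{\ast}$ lowers this linear form by at least $c$, so $\sum_{k\in G_{l}}\widehat{w}_{k}^{l+1}(c)\,v_{k}^{l}\le c(N^{\ast}-1)$ for every $\boldsymbol{v}_{l}\neq\boldsymbol{v}^{\ast}$. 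Consequently $e^{-cN^{\ast}}\exp\!\big(\sum_{k}\widehat{w}_{k}^{l+1}(c)v_{k}^{l}\big)$ equals $1$ at $\boldsymbol{v}^{\ast}$ and is $\le e^{-c}$ at every other configuration. Dividing the left-hand side of (\ref{Inequality_1}) by $e^{cN^{\ast}}>0$ then yields
\[
e^{-cN^{\ast}}\sum_{\boldsymbol{v}_{l}}\exp\!\Big(\sum_{k\in G_{l}}\widehat{w}_{k}^{l+1}(c)\,v_{k}^{l}\Big)\big(\boldsymbol{P}_{l}(\boldsymbol{v}_{l})-\boldsymbol{Q}(\boldsymbol{v}_{l})\big)=\big(\boldsymbol{P}_{l}(\boldsymbol{v}^{\ast})-\boldsymbol{Q}(\boldsymbol{v}^{\ast})\big)+R(c),
\]
with $|R(c)|\le e^{-c}\sum_{\boldsymbol{v}_{l}\neq\boldsymbol{v}^{\ast}}|\boldsymbol{P}_{l}(\boldsymbol{v}_{l})-\boldsymbol{Q}(\boldsymbol{v}_{l})|\le 2e^{-c}$.

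To finish, since $\boldsymbol{P}_{l}(\boldsymbol{v}^{\ast})-\boldsymbol{Q}(\boldsymbol{v}^{\ast})<0$ and $R(c)\to 0$, I would fix any $c$ large enough that the right-hand side is negative, and take $\widehat{\boldsymbol{w}}_{l+1}=\big(\widehat{w}_{k}^{l+1}(c)\big)_{k\in G_{l}}$; multiplying back by $e^{cN^{\ast}}>0$ gives (\ref{Inequality_1}). I do not expect a real obstacle here: the only points needing care are the degenerate case $\boldsymbol{v}^{\ast}=\boldsymbol{0}$ (then $N^{\ast}=0$, the scaling factor is $1$, and the estimates still apply) and the elementary verification that a disagreement in any coordinate costs at least $c$ in the linear form. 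This lemma is meant to feed the next step, where the freedom in $b_{j_{0}}^{l+1}$ furnished by Lemma \ref{Lemma_2} (letting $b_{j_{0}}^{l+1}\to-\infty$) is used to upgrade (\ref{Inequality_1}) into a strict decrease of the $KL$ divergence.
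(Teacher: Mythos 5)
Your proposal is correct and follows essentially the same route as the paper: your weight vector $\widehat{w}_{k}^{l+1}(c)=\pm c$ is exactly the paper's choice $\alpha(\widehat{\boldsymbol{v}}_{l}-\tfrac{1}{2}\boldsymbol{1})$ with $\alpha=2c$, and the concentration-as-$c\to\infty$ argument with the explicit $O(e^{-c})$ remainder is the same asymptotic step, just carried out with a quantitative bound rather than by citation of Le Roux--Bengio. The one place you genuinely improve on the paper is the location of the distinguished configuration: you deduce the existence of $\boldsymbol{v}^{\ast}$ with $\boldsymbol{P}_{l}(\boldsymbol{v}^{\ast};\boldsymbol{\theta}_{l})<\boldsymbol{Q}(\boldsymbol{v}^{\ast})$ directly from $\boldsymbol{Q}\neq\boldsymbol{P}_{l}$ and equality of total mass, for the \emph{fixed} $\boldsymbol{\theta}_{l}$ of the hypothesis, whereas the paper's proof obtains the analogous inequality only after adjusting the bias parameters $\boldsymbol{a}_{l}$ to a new $\boldsymbol{\theta}_{l}^{\prime}$, a step your argument renders unnecessary.
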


\begin{proof}
Take $\widehat{\boldsymbol{v}}_{l}\neq\boldsymbol{0}$ such that
$\boldsymbol{Q}\left(  \widehat{\boldsymbol{v}}_{l}\right)  \in\left(
0,1\right)  $. Then for any $\boldsymbol{w}_{l}^{\prime}$, $\boldsymbol{b}%
_{l}^{\prime}$ given there exists $\boldsymbol{a}_{l}^{\prime}$ such that
\begin{equation}
\boldsymbol{P}_{l}\left(  \widehat{\boldsymbol{v}}_{l};\boldsymbol{\theta}%
_{l}^{\prime}\right)  <\boldsymbol{Q}\left(  \widehat{\boldsymbol{v}}%
_{l}\right)  , \label{Condition_3}%
\end{equation}
where $\boldsymbol{\theta}_{l}^{\prime}=\left(  \boldsymbol{w}_{l}^{\prime
},\boldsymbol{a}_{l}^{\prime},\boldsymbol{b}_{l}^{\prime}\right)  $. If such
$\widehat{\boldsymbol{v}}_{l}=\left(  \widehat{v}_{k}^{l}\right)  _{k\in
G_{l}}$ does not exist, then $\boldsymbol{Q}\left(  \boldsymbol{v}_{l}\right)
$ is concentrated in one point, i.e. $\boldsymbol{Q}\left(  \boldsymbol{v}%
_{0}\right)  =1$. In this case $KL(\boldsymbol{Q}\left(  \boldsymbol{v}%
_{l}\right)  \mid\boldsymbol{P}_{l}\left(  \boldsymbol{v}_{l}%
;\boldsymbol{\theta}_{l}\right)  )=0$. But this case is ruled out by the
hypothesis $KL(\boldsymbol{Q}\left(  \boldsymbol{v}_{l}\right)  \mid
\boldsymbol{P}_{l}\left(  \boldsymbol{v}_{l};\theta_{l}\right)  )>0$.

We now set
\[
\boldsymbol{1}=\underset{\#G_{l.}\text{-times}}{\underbrace{\left(
1,\ldots,1\right)  }\text{ }}\text{and \ }\widehat{\boldsymbol{w}}%
_{l+1}=\left(  \widehat{w}_{k}^{l+1}\right)  _{k\in G_{l}}=\alpha
(\widehat{\boldsymbol{v}}_{l}-\frac{1}{2}\boldsymbol{1}),
\]
where $\alpha$ is a positive number. Then, for $\boldsymbol{v}_{l}\neq
\widehat{\boldsymbol{v}}_{l}$,
\begin{equation}
\lim_{\alpha\rightarrow\infty}\frac{\exp\left(
{\displaystyle\sum\limits_{_{k\in G_{l}}}}
\widehat{w}_{k}^{l+1}v_{k}^{l}\right)  }{\exp\left(
{\displaystyle\sum\limits_{_{k\in G_{l}}}}
\widehat{w}_{k}^{l+1}\widehat{v}_{k}^{l}\right)  }=0. \label{Condition_4}%
\end{equation}
A detailed verification of this last inequality can be found in the
demonstration of Theorem 1 in \cite{Le roux et al 1}. Consequently,%
\begin{gather*}%
{\displaystyle\sum\limits_{\boldsymbol{v}_{l}}}
\exp\left(
{\displaystyle\sum\limits_{_{k\in G_{l}}}}
\widehat{w}_{k}^{l+1}v_{k}^{l}\right)  \left(  \boldsymbol{P}_{l}\left(
\boldsymbol{v}_{l};\boldsymbol{\theta}_{l}\right)  -\boldsymbol{Q}\left(
\boldsymbol{v}_{l}\right)  \right)  =\\
\exp\left(
{\displaystyle\sum\limits_{_{k\in G_{l}}}}
\widehat{w}_{k}^{l+1}\widehat{v}_{k}^{l}\right)  \left\{  \boldsymbol{P}%
_{l}\left(  \widehat{\boldsymbol{v}}_{l};\boldsymbol{\theta}_{l}\right)
-\boldsymbol{Q}\left(  \widehat{\boldsymbol{v}}_{l}\right)  +%
{\displaystyle\sum\limits_{\boldsymbol{v}_{l}\neq\widehat{\boldsymbol{v}}_{l}%
}}
\frac{\exp\left(
{\displaystyle\sum\limits_{_{k\in G_{l}}}}
\widehat{w}_{k}^{l+1}v_{k}^{l}\right)  }{\exp\left(
{\displaystyle\sum\limits_{_{k\in G_{l}}}}
\widehat{w}_{k}^{l+1}\widehat{v}_{k}^{l}\right)  }\right\}  ,
\end{gather*}
and by using (\ref{Condition_4}),%
\begin{multline*}%
{\displaystyle\sum\limits_{\boldsymbol{v}_{l}}}
\exp\left(
{\displaystyle\sum\limits_{_{k\in G_{l}}}}
\widehat{w}_{k}^{l+1}v_{k}^{l}\right)  \left(  \boldsymbol{P}_{l}\left(
\boldsymbol{v}_{l};\boldsymbol{\theta}_{l}\right)  -\boldsymbol{Q}\left(
\boldsymbol{v}_{l}\right)  \right) \\
\sim\exp\left(
{\displaystyle\sum\limits_{_{k\in G_{l}}}}
\widehat{w}_{k}^{l+1}\widehat{v}_{k}^{l}\right)  \left(  \boldsymbol{P}%
_{l}\left(  \widehat{\boldsymbol{v}}_{l};\boldsymbol{\theta}_{l}\right)
-\boldsymbol{Q}\left(  \widehat{\boldsymbol{v}}_{l}\right)  \right)
\end{multline*}
as $\alpha\rightarrow\infty$. Finally, by using (\ref{Condition_3}), there
exists $\alpha_{0}$ such that (\ref{Inequality_1}) holds true for
$\alpha>\alpha_{0}$.
\end{proof}

\begin{remark}
Given positive integers $l$, $l_{0}$, with $l\geq l_{0}$, we identify
$G_{l_{0}}$ with the \textit{subset} of $G_{l}$ consisting of integers having
the form $i_{0}+i_{1}p+\ldots+i_{l_{0}-1}p^{l_{0}-1}$, where the $i_{k}$s are
$p$-adic digits.
\end{remark}

\begin{theorem}
\label{Theorem_2} Let $\boldsymbol{Q}(\boldsymbol{v})$ be an arbitrary
probability distribution on $\left\{  0,1\right\}  ^{m}$. As discussed above,
we assume without loss of generality that $m=p^{l_{0}}$. We identify
$\boldsymbol{v}$ with $\boldsymbol{v}_{l}=\left(  v_{j}^{l}\right)  _{j\in
G_{l}}$, and $\boldsymbol{Q}(\boldsymbol{v}_{l})$ with a probability
distribution on the $\boldsymbol{v}_{l}$s. Let $DBN(p,l,\boldsymbol{\theta
}_{l})$ be a $p$-adic discrete DBN, with $l\geq l_{0}$, such that
$KL(\boldsymbol{Q}\left(  \boldsymbol{v}_{l}\right)  \mid\boldsymbol{P}%
_{l}\left(  \boldsymbol{v}_{l};\boldsymbol{\theta}_{l}\right)  )>0$. Then the
two following assertions hold true.

\noindent(i) There exists an $DBN(p,l+1,\boldsymbol{\theta}_{l},\boldsymbol{w}%
_{l+1},b_{j_{0}}^{l+1})$ constructed from $DBN(p,l,\boldsymbol{\theta}_{l})$
by adding one layer with marginal probability distribution $\boldsymbol{P}%
_{l+1}\left(  \boldsymbol{v}_{l};\boldsymbol{\theta}_{l},\boldsymbol{w}%
_{l+1},b_{j_{0}}^{l+1}\right)  $ satisfying%
\begin{equation}
KL(\boldsymbol{Q}\left(  \boldsymbol{v}_{l}\right)  \mid\boldsymbol{P}%
_{l+1}\left(  \boldsymbol{v}_{l};\boldsymbol{\theta}_{l},\boldsymbol{w}%
_{l+1},b_{j_{0}}^{l+1}\right)  )<KL(\boldsymbol{Q}\left(  \boldsymbol{v}%
_{l}\right)  \mid\boldsymbol{P}_{l}\left(  \boldsymbol{v}_{l}%
;\boldsymbol{\theta}_{l}\right)  ), \label{Part_2A}%
\end{equation}
for some $\boldsymbol{\theta}_{l},\boldsymbol{w}_{l+1},b_{j_{0}}^{l+1}$.

\noindent(ii) Given $\epsilon>0$ arbitrarily small, there exists an
\[
DBN(p,l+k,\boldsymbol{\theta}_{l},\boldsymbol{w}_{l+1},\ldots.,\boldsymbol{w}%
_{l+k},b_{j_{0}}^{l+1},\ldots,b_{j_{k-1}}^{l+k})
\]
with marginal probability distribution $\boldsymbol{P}_{l+k}\left(
\boldsymbol{v}_{l};\boldsymbol{\theta}_{l},\boldsymbol{w}_{l+1},\ldots
.,\boldsymbol{w}_{l+k},b_{j_{0}}^{l+1},\ldots,b_{j_{k-1}}^{l+k}\right)  $
satisfying%
\begin{equation}
KL(\boldsymbol{Q}\left(  \boldsymbol{v}_{l}\right)  \mid\boldsymbol{P}%
_{l+k}\left(  \boldsymbol{v}_{l};\boldsymbol{\theta}_{l},\boldsymbol{w}%
_{l+1},\ldots.,\boldsymbol{w}_{l+k},b_{j_{0}}^{l+1},\ldots,b_{j_{k-1}}%
^{l+k}\right)  )<\epsilon, \label{Part_2}%
\end{equation}
where $k$ is a positive integer depending on $\epsilon$, for some
\[
\boldsymbol{\theta}_{l},\boldsymbol{w}_{l+1},\ldots.,\boldsymbol{w}%
_{l+k},b_{j_{0}}^{l+1},\ldots,b_{j_{k-1}}^{l+k}.
\]

\end{theorem}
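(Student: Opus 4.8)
The plan is to derive (i) from Lemmas~\ref{Lemma_2} and~\ref{Lemma_3} by a one‑parameter calculus argument, and then to obtain (ii) by iterating (i) with a carefully chosen added weight vector at each stage.

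\textbf{Part (i).} By the hypothesis $KL(\boldsymbol{Q}(\boldsymbol{v}_l)\mid\boldsymbol{P}_l(\boldsymbol{v}_l;\boldsymbol{\theta}_l))>0$, Lemma~\ref{Lemma_3} (whose proof may require re‑choosing the bias vector $\boldsymbol{a}_l$, which the theorem permits through the clause ``for some $\boldsymbol{\theta}_l$'') produces a vector $\widehat{\boldsymbol{w}}_{l+1}=(\widehat{w}_k^{l+1})_{k\in G_l}$ satisfying~(\ref{Inequality_1}). I would set $\boldsymbol{w}_{l+1}=\widehat{\boldsymbol{w}}_{l+1}$ and view $\boldsymbol{P}_{l+1}$ as a function of the single remaining parameter $b:=b_{j_0}^{l+1}$. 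Writing $\lambda=e^{b}\in[0,\infty)$, $u(\boldsymbol{v}_l)=\exp\big(\textstyle\sum_{k\in G_l}\widehat{w}_k^{l+1}v_k^l\big)$ and $\mu=\sum_{\boldsymbol{v}_l}\boldsymbol{P}_l(\boldsymbol{v}_l)u(\boldsymbol{v}_l)$, formulas~(\ref{Formula_10})--(\ref{Formula_11}) of Lemma~\ref{Lemma_2} give
\[
\boldsymbol{P}_{l+1}(\boldsymbol{v}_l)=\boldsymbol{P}_l(\boldsymbol{v}_l)\,\frac{1+\lambda\,u(\boldsymbol{v}_l)}{1+\lambda\mu},
\]
a probability distribution on $\{0,1\}^{\#G_l}$ for all $\lambda\ge0$, equal to $\boldsymbol{P}_l$ at $\lambda=0$ (that is, $b=-\infty$), in accordance with Lemma~\ref{Lemma_2}. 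Since $\boldsymbol{P}_l$ is a strictly positive Gibbs measure, $KL(\boldsymbol{Q}\mid\boldsymbol{P}_l)<\infty$, and the displayed identity yields
\[
g(\lambda):=KL(\boldsymbol{Q}\mid\boldsymbol{P}_{l+1})=KL(\boldsymbol{Q}\mid\boldsymbol{P}_l)+\ln(1+\lambda\mu)-\sum_{\boldsymbol{v}_l}\boldsymbol{Q}(\boldsymbol{v}_l)\ln\big(1+\lambda\,u(\boldsymbol{v}_l)\big),
\]
which is smooth on $[0,\infty)$ with $g(0)=KL(\boldsymbol{Q}\mid\boldsymbol{P}_l)$ and
\[
g'(0)=\mu-\sum_{\boldsymbol{v}_l}\boldsymbol{Q}(\boldsymbol{v}_l)u(\boldsymbol{v}_l)=\sum_{\boldsymbol{v}_l}u(\boldsymbol{v}_l)\big(\boldsymbol{P}_l(\boldsymbol{v}_l)-\boldsymbol{Q}(\boldsymbol{v}_l)\big)<0
\]
by~(\ref{Inequality_1}). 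Hence $g(\lambda_0)<g(0)$ for some $\lambda_0>0$, and $b_{j_0}^{l+1}=\ln\lambda_0$ gives~(\ref{Part_2A}); the recursivity of the key construction (noted after Lemma~\ref{Lemma_1}) makes the output a legitimate $DBN(p,l+1,\boldsymbol{\theta}_l,\boldsymbol{w}_{l+1},b_{j_0}^{l+1})$.

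\textbf{Part (ii).} I would iterate (i): while the current divergence exceeds $0$, append one layer and one hidden unit. To reach a divergence $<\epsilon$ in finitely many steps, at the stage starting from $\boldsymbol{P}_l$ I would aim the sharpening construction of Lemma~\ref{Lemma_3} at a most underweighted support point $\boldsymbol{v}^{\ast}$ of $\boldsymbol{Q}$ (one exists with $\boldsymbol{Q}(\boldsymbol{v}^{\ast})\ln\tfrac{\boldsymbol{Q}(\boldsymbol{v}^{\ast})}{\boldsymbol{P}_l(\boldsymbol{v}^{\ast})}\ge KL(\boldsymbol{Q}\mid\boldsymbol{P}_l)/\#\operatorname{supp}\boldsymbol{Q}$, since the overweighted points contribute non‑positively), take $\widehat{\boldsymbol{w}}_{l+1}=\alpha(\boldsymbol{v}^{\ast}-\tfrac12\boldsymbol{1})$ with $\alpha$ large so that $u(\boldsymbol{v}_l)/u(\boldsymbol{v}^{\ast})\to0$ for $\boldsymbol{v}_l\neq\boldsymbol{v}^{\ast}$, and tune $b_{j_0}^{l+1}$ so that $\boldsymbol{P}_{l+1}(\boldsymbol{v}^{\ast})$ lies within $\eta$ of $\boldsymbol{Q}(\boldsymbol{v}^{\ast})$. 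Then~(\ref{Formula_11}) shows the remaining probabilities are multiplied by an essentially common factor, and a short computation identifies the decrease of the divergence, up to $O(\eta)$, with the binary relative entropy $d_{\mathrm{KL}}\big(\operatorname{Bern}(\boldsymbol{Q}(\boldsymbol{v}^{\ast}))\,\|\,\operatorname{Bern}(\boldsymbol{P}_l(\boldsymbol{v}^{\ast}))\big)$. Using $\min_{\operatorname{supp}}\boldsymbol{Q}>0$ together with a Pinsker‑type inequality, this quantity is bounded below by a positive constant $\delta=\delta(\epsilon,\boldsymbol{Q})$ whenever $KL(\boldsymbol{Q}\mid\boldsymbol{P}_l)\ge\epsilon$; with $\eta$ chosen small enough the divergence then drops by at least $\delta/2$ at each step, so it falls below $\epsilon$ after at most $k\le 2KL(\boldsymbol{Q}\mid\boldsymbol{P}_{l})/\delta$ steps, which is~(\ref{Part_2}).

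\textbf{Main obstacle.} Part (i) is routine once Lemmas~\ref{Lemma_2}--\ref{Lemma_3} are in hand. The real difficulty is Part (ii): a freshly appended layer rescales \emph{all} the probabilities at once, so improving one support point disturbs those handled before. The non‑exact concentration of $u$ at finite $\alpha$ and the non‑exact commonality of the rescaling factor must be kept under uniform control so that a per‑step gain bounded away from $0$ survives; making this quantitative is exactly where the estimates adapted from~\cite{Le roux et al 1} are needed.
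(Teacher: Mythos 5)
Your proposal is correct, and Part (i) is essentially the paper's argument in a cleaner package: the paper expands $\ln(1+x)=x+o(x)$ as $b_{j_0}^{l+1}\to-\infty$ to isolate the leading term $\sum_{\boldsymbol{v}_l}\exp\bigl(\sum_k w_k^{l+1}v_k^l+b_{j_0}^{l+1}\bigr)\bigl(\boldsymbol{P}_l-\boldsymbol{Q}\bigr)$ and then invokes Lemma \ref{Lemma_3}; your computation of $g'(0)=\sum_{\boldsymbol{v}_l}u(\boldsymbol{v}_l)\bigl(\boldsymbol{P}_l(\boldsymbol{v}_l)-\boldsymbol{Q}(\boldsymbol{v}_l)\bigr)<0$ in the variable $\lambda=e^{b_{j_0}^{l+1}}$ is exactly the same first-order information, obtained as an honest derivative rather than an $o(\exp(b_{j_0}^{l+1}))$ expansion, and your remark that Lemma \ref{Lemma_3} silently re-chooses $\boldsymbol{a}_l$ (absorbed by the ``for some $\boldsymbol{\theta}_l$'' clause) matches how the paper uses it. Where you genuinely depart from the paper is Part (ii): the paper's proof is a one-line recursion --- apply (i) repeatedly to get a strictly decreasing sequence of divergences and conclude that some $k(\epsilon)$ works --- which, taken literally, does not follow, since a strictly decreasing sequence need not drop below $\epsilon$; the quantitative content is really deferred to Theorem \ref{Theorem_3} and to Le Roux--Bengio. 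Your version supplies the missing per-step lower bound: targeting a maximally underweighted support point, identifying the ideal one-step gain with the binary relative entropy $d_{\mathrm{KL}}\bigl(\operatorname{Bern}(\boldsymbol{Q}(\boldsymbol{v}^{\ast}))\,\|\,\operatorname{Bern}(\boldsymbol{P}(\boldsymbol{v}^{\ast}))\bigr)$ (a computation that checks out exactly in the $\alpha\to\infty$ limit of (\ref{Key_Calculation_1})), and bounding it below via $\min_{\operatorname{supp}}\boldsymbol{Q}$ and Pinsker uniformly over all $\boldsymbol{P}$ with $KL\ge\epsilon$. Since $\alpha$ and $\eta$ may be chosen afresh at each step for the current $\boldsymbol{P}$, the finite-$\alpha$ error is not an obstruction, and your bound $k\le 2\,KL(\boldsymbol{Q}\mid\boldsymbol{P}_l)/\delta$ is a legitimate termination argument. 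In short: (i) matches the paper; your (ii) is more work than the paper's, but it is the argument the paper's statement actually needs.
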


\begin{proof}
(i) We first compute $KL(\boldsymbol{Q}\left(  \boldsymbol{v}_{l}\right)
\mid\boldsymbol{P}_{l+1}\left(  \boldsymbol{v}_{l};\boldsymbol{\theta}%
_{l+1}\right)  )$, $\boldsymbol{\theta}_{l+1}=\left(  \boldsymbol{\theta}%
_{l},\boldsymbol{w}_{l+1},b_{j_{0}}^{l+1}\right)  $, using formulas
(\ref{Formula_11})-(\ref{Formula_10}):%
\begin{gather}
KL(\boldsymbol{Q}\left(  \boldsymbol{v}_{l}\right)  \mid\boldsymbol{P}%
_{l+1}\left(  \boldsymbol{v}_{l};\boldsymbol{\theta}_{l+1}\right)  )=%
{\displaystyle\sum\limits_{\boldsymbol{v}_{l}}}
\boldsymbol{Q}\left(  \boldsymbol{v}_{l}\right)  \ln\boldsymbol{Q}\left(
\boldsymbol{v}_{l}\right)  -%
{\displaystyle\sum\limits_{\boldsymbol{v}_{l}}}
\boldsymbol{Q}\left(  \boldsymbol{v}_{l}\right)  \ln\boldsymbol{P}%
_{l+1}\left(  \boldsymbol{v}_{l};\boldsymbol{\theta}_{l+1}\right)
=\nonumber\\
-H(\boldsymbol{Q})-%
{\displaystyle\sum\limits_{\boldsymbol{v}_{l}}}
\boldsymbol{Q}\left(  \boldsymbol{v}_{l}\right)  \ln\frac{\left\{
1+\exp\left(
{\displaystyle\sum\limits_{k\in G_{l}}}
w_{k}^{l+1}v_{k}^{l}+b_{\boldsymbol{j}_{0}}^{l+1}\right)  \right\}
{\displaystyle\sum\limits_{\boldsymbol{h}_{l}}}
\exp\left(  -E_{l}\left(  \boldsymbol{v}_{l},\boldsymbol{h}_{l}%
;\boldsymbol{\theta}_{l}\right)  \right)  }{%
{\displaystyle\sum\limits_{\widetilde{\boldsymbol{v}}_{l},\widetilde
{\boldsymbol{h}}_{l}}}
\left\{  1+\exp\left(
{\displaystyle\sum\limits_{k\in G_{l}}}
w_{k}^{l+1}\widetilde{v}_{k}^{l}+b_{j_{0}}^{l+1}\right)  \right\}  \exp\left(
-E_{l}\left(  \widetilde{\boldsymbol{v}}_{l},\widetilde{\boldsymbol{h}}%
_{l};\boldsymbol{\theta}_{l}\right)  \right)  }\nonumber\\
=-H(\boldsymbol{Q})-%
{\displaystyle\sum\limits_{\boldsymbol{v}_{l}}}
\boldsymbol{Q}\left(  \boldsymbol{v}_{l}\right)  \ln\left(  1+\exp\left(
{\displaystyle\sum\limits_{k\in G_{l}}}
w_{k}^{l+1}v_{k}^{l}+b_{j_{0}}^{l+1}\right)  \right) \nonumber\\
-%
{\displaystyle\sum\limits_{\boldsymbol{v}_{l}}}
\boldsymbol{Q}\left(  \boldsymbol{v}_{l}\right)  \ln\left(
{\displaystyle\sum\limits_{\boldsymbol{h}_{l}}}
\exp\left(  -E_{l}\left(  \boldsymbol{v}_{l},\boldsymbol{h}_{l}%
;\boldsymbol{\theta}_{l}\right)  \right)  \right) \nonumber\\
+\left(
{\displaystyle\sum\limits_{\boldsymbol{v}_{l}}}
\boldsymbol{Q}\left(  \boldsymbol{v}_{l}\right)  \right)  \ln\left(
{\displaystyle\sum\limits_{\widetilde{\boldsymbol{v}}_{l},\widetilde
{\boldsymbol{h}}_{l}}}
\left\{  1+\exp\left(
{\displaystyle\sum\limits_{\boldsymbol{k}\in G_{l}}}
w_{k}^{l+1}\widetilde{v}_{k}^{l}+b_{j_{0}}^{l+1}\right)  \right\}  \exp\left(
-E_{l}\left(  \widetilde{\boldsymbol{v}}_{l},\widetilde{\boldsymbol{h}}%
_{l};\boldsymbol{\theta}_{l}\right)  \right)  \right) \nonumber\\
=:-H(\boldsymbol{Q})-KL_{1}(\boldsymbol{\theta}_{l})-KL_{2}(\boldsymbol{w}%
_{l+1},b_{j_{0}}^{l+1})+KL_{3}\left(  \boldsymbol{\theta}_{l},\boldsymbol{w}%
_{l+1},b_{j_{0}}^{l+1}\right)  . \label{Formula_ultima}%
\end{gather}
Given any $\boldsymbol{w}_{l+1}$, we may assume that $\exp\left(  \sum_{k\in
G_{l}}w_{k}^{l+1}v_{k}^{l}+b_{j_{0}}^{l+1}\right)  $ is very small for any
$\boldsymbol{v}$, by taking $-b_{j_{0}}^{l+1}$ sufficiently large, since
$v_{k}\in\left\{  0,1\right\}  $. Then by using $\ln\left(  1+x\right)
=x+o(x)$ as $x\rightarrow0$, we have%
\begin{equation}
\ln\left(  1+\exp\left(
{\displaystyle\sum\limits_{k\in G_{l}}}
w_{k}^{l+1}v_{k}^{l}+b_{j_{0}}^{l+1}\right)  \right)  =\exp\left(
{\displaystyle\sum\limits_{k\in G_{l}}}
w_{k}^{l+1}v_{k}^{l}+b_{j_{0}}^{l+1}\right)  +o\left(  \exp\left(  b_{j_{0}%
}^{l+1}\right)  \right)  \text{, } \label{A_key_Obs}%
\end{equation}
as $b_{j_{0}}^{l+1}\rightarrow-\infty$. Then, the term $KL_{1}%
(\boldsymbol{\theta}_{l})$ becomes%
\begin{equation}
KL_{1}(\boldsymbol{\theta}_{l})=%
{\displaystyle\sum\limits_{\boldsymbol{v}_{l}}}
\boldsymbol{Q}\left(  \boldsymbol{v}_{l}\right)  \exp\left(
{\displaystyle\sum\limits_{k\in G_{l}}}
w_{k}^{l+1}v_{k}^{l}+b_{j_{0}}^{l+1}\right)  +o\left(  \exp\left(
b_{\boldsymbol{j}_{0}}^{l+1}\right)  \right)  \text{ as }b_{j_{0}}%
^{l+1}\rightarrow-\infty, \label{Formula_ultima_1}%
\end{equation}
and the term $KL_{3}\left(  \boldsymbol{\theta}_{l},\boldsymbol{w}%
_{l+1},b_{j_{0}}^{l+1}\right)  $ becomes%
\begin{multline*}
KL_{3}\left(  \boldsymbol{\theta}_{l},\boldsymbol{w}_{l+1},b_{j_{0}}%
^{l+1}\right)  =\\
\ln\left(
{\displaystyle\sum\limits_{\widetilde{\boldsymbol{v}}_{l},\widetilde
{\boldsymbol{h}}_{l}}}
\left\{  1+\exp\left(
{\displaystyle\sum\limits_{k\in G_{l}}}
w_{k}^{l+1}\widetilde{v}_{k}^{l}+b_{j_{0}}^{l+1}\right)  \right\}  \exp\left(
-E_{l}\left(  \widetilde{\boldsymbol{v}}_{l},\widetilde{\boldsymbol{h}}%
_{l};\boldsymbol{\theta}_{l}\right)  \right)  \right) \\
=\ln\left(
{\displaystyle\sum\limits_{\widetilde{\boldsymbol{v}}_{l},\widetilde
{\boldsymbol{h}}_{l}}}
\exp\left(  -E_{l}\left(  \widetilde{\boldsymbol{v}}_{l},\widetilde
{\boldsymbol{h}}_{l};\boldsymbol{\theta}_{l}\right)  \right)  \right)  +\\
\ln\left(  1+\frac{%
{\displaystyle\sum\limits_{\widetilde{\boldsymbol{v}}_{l},\widetilde
{\boldsymbol{h}}_{l}}}
\exp\left(
{\displaystyle\sum\limits_{k\in G_{l}}}
w_{k}^{l+1}\widetilde{v}_{k}^{l}+b_{j_{0}}^{l+1}\right)  \exp\left(
-E_{l}\left(  \widetilde{\boldsymbol{v}}_{l},\widetilde{\boldsymbol{h}}%
_{l};\boldsymbol{\theta}_{l}\right)  \right)  }{%
{\displaystyle\sum\limits_{\widetilde{\boldsymbol{v}}_{l},\widetilde
{\boldsymbol{h}}_{l}}}
\exp\left(  -E_{l}\left(  \widetilde{\boldsymbol{v}}_{l},\widetilde
{\boldsymbol{h}}_{l};\boldsymbol{\theta}_{l}\right)  \right)  }\right)  .
\end{multline*}
Now, by using (\ref{A_key_Obs}), we have for $b_{j_{0}}^{l+1}\rightarrow
-\infty$ that
\begin{gather}
KL_{3}\left(  \boldsymbol{\theta}_{l},\boldsymbol{w}_{l+1},b_{j_{0}}%
^{l+1}\right)  =\ln\left(
{\displaystyle\sum\limits_{\widetilde{\boldsymbol{v}}_{l},\widetilde
{\boldsymbol{h}}_{l}}}
\exp\left(  -E_{l}\left(  \widetilde{\boldsymbol{v}}_{l},\widetilde
{\boldsymbol{h}}_{l};\boldsymbol{\theta}_{l}\right)  \right)  \right)
\label{Formula_ultima_2}\\
+\frac{%
{\displaystyle\sum\limits_{\widetilde{\boldsymbol{v}}_{l},\widetilde
{\boldsymbol{h}}_{l}}}
\exp\left(
{\displaystyle\sum\limits_{k\in G_{l}}}
w_{k}^{l+1}\widetilde{v}_{k}^{l}+b_{j_{0}}^{l+1}\right)  \exp\left(
-E_{l}\left(  \widetilde{\boldsymbol{v}}_{l},\widetilde{\boldsymbol{h}}%
_{l};\boldsymbol{\theta}_{l}\right)  \right)  }{%
{\displaystyle\sum\limits_{\widetilde{\boldsymbol{v}}_{l},\widetilde
{\boldsymbol{h}}_{l}}}
\exp\left(  -E_{l}\left(  \widetilde{\boldsymbol{v}}_{l},\widetilde
{\boldsymbol{h}}_{l};\boldsymbol{\theta}_{l}\right)  \right)  }+o\left(
\exp(b_{j_{0}}^{l+1})\right) \nonumber\\
=\ln\left(
{\displaystyle\sum\limits_{\widetilde{\boldsymbol{v}}_{l},\widetilde
{\boldsymbol{h}}_{l}}}
\exp\left(  -E_{l}\left(  \widetilde{\boldsymbol{v}}_{l},\widetilde
{\boldsymbol{h}}_{l};\boldsymbol{\theta}_{l}\right)  \right)  \right)  +%
{\displaystyle\sum\limits_{\widetilde{\boldsymbol{v}}_{l}}}
\exp\left(
{\displaystyle\sum\limits_{k\in G_{l}}}
w_{k}^{l+1}\widetilde{v}_{k}^{l}+b_{j_{0}}^{l+1}\right)  P_{l}\left(
\widetilde{\boldsymbol{v}}_{l};\boldsymbol{\theta}_{l}\right) \nonumber\\
+o\left(  \exp(b_{j_{0}}^{l+1})\right)  .\nonumber
\end{gather}
Finally, from formulas (\ref{Formula_ultima})-(\ref{Formula_ultima_2}), we
obtain that%
\begin{gather*}
KL(\boldsymbol{Q}\left(  \boldsymbol{v}_{l}\right)  \mid\boldsymbol{P}%
_{l+1}\left(  \boldsymbol{v}_{l};\boldsymbol{\theta}_{l+1}\right)
)-KL(\boldsymbol{Q}\left(  \boldsymbol{v}_{l}\right)  \mid\boldsymbol{P}%
_{l}\left(  \boldsymbol{v}_{l};\boldsymbol{\theta}_{l}\right)  )=\\%
{\displaystyle\sum\limits_{\boldsymbol{v}_{l}}}
\exp\left(
{\displaystyle\sum\limits_{k\in G_{l}}}
w_{k}^{l+1}v_{k}^{l}+b_{j_{0}}^{l+1}\right)  \left(  P_{l}\left(
\boldsymbol{v}_{l};\theta_{l}\right)  -\boldsymbol{Q}\left(  \boldsymbol{v}%
_{l}\right)  \right)  +o\left(  \exp(b_{j_{0}}^{l+1})\right)
\end{gather*}
as $o\left(  \exp(b_{j_{0}}^{l+1})\right)  \rightarrow-\infty$. By Applying
Lemma \ref{Lemma_3}, there exist $\widetilde{\boldsymbol{w}}_{l+1}%
$,\ $\widetilde{b}_{j_{0}}^{l+1}$ such that
\[
KL(\boldsymbol{Q}\left(  \boldsymbol{v}_{l}\right)  \mid\boldsymbol{P}%
_{l+1}\left(  \boldsymbol{v}_{l};\boldsymbol{\theta}_{l},\widetilde
{\boldsymbol{w}}_{l+1},\widetilde{b}_{j_{0}}^{l+1}\right)  )-KL(\boldsymbol{Q}%
\left(  \boldsymbol{v}_{l}\right)  \mid\boldsymbol{P}_{l}\left(
\boldsymbol{v}_{l};\boldsymbol{\theta}_{l}\right)  )<0.
\]

(ii) We proceed recursively. If $KL(\boldsymbol{Q}\left(  \boldsymbol{v}%
_{l}\right)  \mid\boldsymbol{P}_{l+1}\left(  \boldsymbol{v}_{l}%
;\boldsymbol{\theta}_{l},\boldsymbol{w}_{l+1},b_{j_{0}}^{l+1}\right)  )=0$,
for some $\boldsymbol{\theta}_{l},\boldsymbol{w}_{l+1},b_{j_{0}}^{l+1}$, then
the $DBN(p,l+1,\boldsymbol{\theta}_{l},\boldsymbol{w}_{l+1},b_{j_{0}}^{l+1})$
satisfies the condition required. Otherwise, by using the fact that the key
construction can be used in a recursive way, we use the part (i) to construct
a Boltzmann machine
\[
DBN(p,l+2,\boldsymbol{\theta}_{l+1},\boldsymbol{w}_{l+1},\boldsymbol{w}%
_{l+2},b_{j_{0}}^{l+1},b_{j_{1}}^{l+2}),
\]
which satisfies%
\begin{multline*}
KL(\boldsymbol{Q}\left(  \boldsymbol{v}_{l}\right)  \mid\boldsymbol{P}%
_{l+2}\left(  \boldsymbol{v}_{l};\boldsymbol{\theta}_{l+1},\boldsymbol{w}%
_{l+1},\boldsymbol{w}_{l+2},b_{j_{0}}^{l+1},b_{j_{1}}^{l+2}\right)  )\\
<KL(\boldsymbol{Q}\left(  \boldsymbol{v}_{l}\right)  \mid\boldsymbol{P}%
_{l+1}\left(  \boldsymbol{v}_{l};\boldsymbol{\theta}_{l},\boldsymbol{w}%
_{l+1},b_{j_{0}}^{l+1}\right)  )<KL(\boldsymbol{Q}\left(  \boldsymbol{v}%
_{l}\right)  \mid\boldsymbol{P}_{l}\left(  \boldsymbol{v}_{l}%
;\boldsymbol{\theta}_{l}\right)  ).
\end{multline*}
Therefore there exists $k(\epsilon)$ such that (\ref{Part_2}) holds true.
\end{proof}

\begin{theorem}
\label{Theorem_3}Let $\boldsymbol{Q}(\boldsymbol{v})$ be an arbitrary
probability distribution on $\left\{  0,1\right\}  ^{m}$. As discussed above,
we assume without loss of generality that $m=p^{l_{0}}$. We identify
$\boldsymbol{v}$ with $\boldsymbol{v}_{l}=\left(  v_{j}^{l}\right)  _{j\in
G_{l}}$, and $\boldsymbol{Q}(\boldsymbol{v}_{l})$ with a probability
distribution on the $\boldsymbol{v}_{l}$s. Then $\boldsymbol{Q}\left(
\boldsymbol{v}_{l}\right)  $ can be approximated arbitrarily well, in the
sense of the $KL$ divergence, by an $DBN(p,l_{0}+k,\boldsymbol{\theta}%
_{l_{0}+k})$, where $k$ is the number of input vectors whose probability in
not zero.
\end{theorem}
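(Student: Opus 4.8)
The plan is to adapt the proof of Theorem~2 in \cite{Le roux et al 1} to the present architecture, using the key construction of Section~\ref{SEction-Key-Construction} together with Lemma~\ref{Lemma_2} and formula (\ref{Formula_11}) as the only inputs. First I would take the base network $DBN(p,l_{0},\boldsymbol{\theta}_{l_{0}})$ with all parameters equal to zero, so that $E_{l_{0}}\equiv 0$ and its visible marginal is the uniform distribution $\boldsymbol{U}$ on $\{0,1\}^{m}$ (indeed $\boldsymbol{P}_{l_{0}}(\boldsymbol{v})=2^{\#G_{l_{0}}}/2^{2\#G_{l_{0}}}=2^{-m}$). By (\ref{Formula_11}), one application of the key construction replaces a visible marginal $\boldsymbol{P}$ by
\[
\boldsymbol{P}(\boldsymbol{v})\ \longmapsto\ \frac{\left(1+e^{\langle w,\boldsymbol{v}\rangle+b}\right)\boldsymbol{P}(\boldsymbol{v})}{\sum_{\boldsymbol{v}'}\left(1+e^{\langle w,\boldsymbol{v}'\rangle+b}\right)\boldsymbol{P}(\boldsymbol{v}')},
\]
where, after the routine identification of the free weight vector with an arbitrary element of $\mathbb{R}^{m}$ via the copies of $G_{l_{0}}$ inside $G_{l}$, the parameters $w\in\mathbb{R}^{m}$ and $b\in\mathbb{R}$ may be prescribed at will. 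Enumerating the support of $\boldsymbol{Q}$ as $\widehat{\boldsymbol{v}}_{1},\dots,\widehat{\boldsymbol{v}}_{k}$, at the $i$-th step I would use the selector weight $w^{(i)}=\alpha_{i}(\widehat{\boldsymbol{v}}_{i}-\tfrac12\boldsymbol{1})$ of Lemma~\ref{Lemma_3} together with $b^{(i)}=c_{i}-\tfrac{\alpha_{i}}{2}\,\#\{j:(\widehat{\boldsymbol{v}}_{i})_{j}=1\}$; then $\langle w^{(i)},\boldsymbol{v}\rangle+b^{(i)}-c_{i}=-\tfrac{\alpha_{i}}{2}\,\mathrm{Ham}(\widehat{\boldsymbol{v}}_{i},\boldsymbol{v})\to-\infty$ as $\alpha_{i}\to\infty$ for every $\boldsymbol{v}\neq\widehat{\boldsymbol{v}}_{i}$ (as in the estimate (\ref{Condition_4})), so in the limit $\alpha_{i}\to\infty$ this step multiplies the unnormalized weight of $\widehat{\boldsymbol{v}}_{i}$ by $1+e^{c_{i}}$ and fixes the weights of all other configurations.

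Applying this for $i=1,\dots,k$ and letting $\alpha_{1}\to\infty$, then $\alpha_{2}\to\infty$, \dots, then $\alpha_{k}\to\infty$ in this order (each layer map is jointly continuous in the incoming distribution and in $(\alpha_{i},c_{i})$, and all probabilities appearing in denominators stay bounded away from $0$, so the iterated limits are legitimate and a single tuple $(\alpha_{1},\dots,\alpha_{k})$ approximates the limit as closely as desired), the resulting ``ideal'' visible marginal $\boldsymbol{P}^{\mathrm{id}}$ has unnormalized weight $1+d_{i}$ at $\widehat{\boldsymbol{v}}_{i}$, with $d_{i}:=e^{c_{i}}>0$, and weight $1$ at each of the remaining $2^{m}-k$ configurations; hence $\boldsymbol{P}^{\mathrm{id}}(\widehat{\boldsymbol{v}}_{i})=(1+d_{i})/(2^{m}+\sum_{j}d_{j})$ and $\boldsymbol{P}^{\mathrm{id}}(\boldsymbol{v})=1/(2^{m}+\sum_{j}d_{j})$ off $\mathrm{supp}(\boldsymbol{Q})$. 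Given $\delta\in(0,1)$, I would then solve for the $d_{i}$ so that $\boldsymbol{P}^{\mathrm{id}}(\widehat{\boldsymbol{v}}_{i})=(1-\delta)\boldsymbol{Q}(\widehat{\boldsymbol{v}}_{i})$ for all $i$: if $k<2^{m}$ this forces $2^{m}+\sum_{j}d_{j}=(2^{m}-k)/\delta$ and $d_{i}=(1-\delta)\boldsymbol{Q}(\widehat{\boldsymbol{v}}_{i})(2^{m}-k)/\delta-1$, which is positive once $\delta$ is small; if $k=2^{m}$ one takes $\delta=0$ and $d_{i}=\boldsymbol{Q}(\widehat{\boldsymbol{v}}_{i})(2^{m}+D)-1$ with $D$ large enough that all $d_{i}>0$, giving $\boldsymbol{P}^{\mathrm{id}}=\boldsymbol{Q}$ exactly. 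In either case $KL(\boldsymbol{Q}\mid\boldsymbol{P}^{\mathrm{id}})=\sum_{i}\boldsymbol{Q}(\widehat{\boldsymbol{v}}_{i})\ln\frac{1}{1-\delta}=\ln\frac{1}{1-\delta}$, the sum running only over $\mathrm{supp}(\boldsymbol{Q})$ so that there is no divergence.

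To conclude, given $\epsilon>0$ I would first fix $\delta$ small enough that $KL(\boldsymbol{Q}\mid\boldsymbol{P}^{\mathrm{id}})<\epsilon/2$; this fixes the finite numbers $c_{i}=\ln d_{i}$, hence the biases $b^{(i)}$ up to the still-free parameters $\alpha_{i}$. Then, using the continuity of $\boldsymbol{\theta}\mapsto\boldsymbol{P}_{l_{0}+k}(\boldsymbol{v};\boldsymbol{\theta})$ and the convergence of the finite-$\alpha_{i}$ composition to $\boldsymbol{P}^{\mathrm{id}}$ noted above, I would pick $\alpha_{1},\dots,\alpha_{k}$, one after another, large enough that the genuine visible marginal $\boldsymbol{P}_{l_{0}+k}$ of the network $DBN(p,l_{0}+k,\boldsymbol{\theta}_{l_{0}},\boldsymbol{w}_{l_{0}+1},\dots,\boldsymbol{w}_{l_{0}+k},b^{(1)},\dots,b^{(k)})$ satisfies $KL(\boldsymbol{Q}\mid\boldsymbol{P}_{l_{0}+k})<\epsilon$. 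Since exactly $k$ layers were added, this is the required $DBN(p,l_{0}+k,\boldsymbol{\theta}_{l_{0}+k})$. The main obstacle I anticipate is the bookkeeping of the two nested approximations — the $\delta\to0$ limit bringing $\boldsymbol{P}^{\mathrm{id}}$ to $\boldsymbol{Q}$, and the $\alpha_{i}\to\infty$ limits bringing the genuine finite-parameter DBN to $\boldsymbol{P}^{\mathrm{id}}$ — and in particular the verification that the $\alpha_{i}$-limits may be taken one factor at a time, uniformly enough that a single tuple $(\alpha_{1},\dots,\alpha_{k})$ works; this is exactly where the quantitative content of Lemma~\ref{Lemma_3} and the lower bounds on the intermediate probabilities of the $\widehat{\boldsymbol{v}}_{i}$ (which hold because, starting from $\boldsymbol{U}$, each step only rescales weights by bounded factors) are needed.
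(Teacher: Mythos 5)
Your proposal is correct and follows essentially the same route as the paper: both adapt Le Roux--Bengio's Theorem 2 by starting from the uniform marginal of a trivial $DBN(p,l_{0},\boldsymbol{\theta}_{l_{0}})$, adding one layer per element of $\operatorname{supp}(\boldsymbol{Q})$, and using the selector weights $\alpha(\widehat{\boldsymbol{v}}_{i}-\tfrac12\boldsymbol{1})$ of Lemma \ref{Lemma_3} together with formula (\ref{Formula_11}) so that each layer multiplies the unnormalized weight of one chosen input by $1+e^{\lambda}$ in the $\alpha\to\infty$ limit. The only (harmless) difference is bookkeeping: the paper matches successive ratios $\boldsymbol{P}(\boldsymbol{u}_{i})/\boldsymbol{P}(\boldsymbol{u}_{i-1})=\boldsymbol{Q}(\boldsymbol{u}_{i})/\boldsymbol{Q}(\boldsymbol{u}_{i-1})$ as in Le Roux--Bengio, whereas you solve directly for the factors $d_{i}$ so that the limit distribution equals $(1-\delta)\boldsymbol{Q}$ on the support, and your treatment of the nested $\delta\to 0$ and $\alpha_{i}\to\infty$ limits is if anything more explicit than the paper's.
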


\begin{proof}
The argument is an adaptation of the one given in \cite{Le roux et al 1} for
Theorem 2. The key is observation is that adding a hidden unit in
\ \cite[Theorem 2]{Le roux et al 1} corresponds to add a level in our
construction. Furthermore, in \cite[Theorem 2]{Le roux et al 1} the marginal
distribution with an extra hidden unit $p_{w,c}(\boldsymbol{v})$ agrees with
\begin{multline*}
\boldsymbol{P}_{l+1}\left(  \boldsymbol{v}_{l};\boldsymbol{\theta}%
_{l},\boldsymbol{w}_{l+1},b_{j_{0}}^{l+1}\right)  =\\
\frac{\left\{  1+\exp\left(
{\displaystyle\sum\limits_{k\in G_{l}}}
w_{k}^{l+1}v_{k}^{l}+b_{j_{0}}^{l+1}\right)  \right\}
{\displaystyle\sum\limits_{\boldsymbol{h}_{l}}}
\exp\left(  -E_{l}\left(  \boldsymbol{v}_{l},\boldsymbol{h}_{l}%
;\boldsymbol{\theta}_{l}\right)  \right)  }{%
{\displaystyle\sum\limits_{\boldsymbol{v}_{l},\boldsymbol{h}_{l}}}
\left\{  1+\exp\left(
{\displaystyle\sum\limits_{k\in G_{l}}}
w_{k}^{l+1}v_{k}^{l}+b_{j_{0}}^{l+1}\right)  \right\}  \exp\left(
-E_{l}\left(  \boldsymbol{v}_{l},\boldsymbol{h}_{l};\boldsymbol{\theta}%
_{l}\right)  \right)  },
\end{multline*}
where $\boldsymbol{\theta}_{l}=\left(  \boldsymbol{w}_{l},\boldsymbol{a}%
_{l},\boldsymbol{b}_{l}\right)  $, up to the function $E_{l}\left(
\boldsymbol{v}_{l},\boldsymbol{h}_{l};\boldsymbol{\theta}_{l}\right)  $. Let
$\widetilde{\boldsymbol{v}}_{l}=\left(  \widetilde{v}_{k}^{l}\right)  _{k\in
G_{l}}$ be an arbitrary input vector and let $\widehat{\boldsymbol{w}}_{l+1}$
be the vector defined as the proof of Lemma \ref{Lemma_3}:
\[
\widehat{\boldsymbol{w}}_{l+1}=\left[  \widehat{w}_{k}^{l+1}\right]  _{_{k\in
G_{l}}}\text{, \ }\widehat{w}_{k}^{l+1}=\alpha(\widetilde{v}_{k}^{l}-\frac
{1}{2})\text{ for }k\in G_{l},
\]
where $\alpha$ is a positive number. We define $\widehat{b}_{j_{0}}%
^{l+1}=-\sum_{_{k\in G_{l}}}\widehat{w}_{k}^{l+1}\widetilde{v}_{k}^{l}%
+\lambda$, with $\lambda\in\mathbb{R}$. Then%
\[
\lim_{\alpha\rightarrow\infty}1+\exp\left(  \sum_{_{k\in G_{l}}}\widehat
{w}_{k}^{l+1}\widetilde{v}_{k}^{l}+\widehat{b}_{j_{0}}^{l+1}\right)  =\left\{
\begin{array}
[c]{lll}%
1 & \text{if} & \boldsymbol{v}_{l}\neq\widetilde{\boldsymbol{v}}_{l}\\
&  & \\
1+\exp\lambda & \text{if} & \boldsymbol{v}_{l}=\widetilde{\boldsymbol{v}}_{l},
\end{array}
\right.
\]
and by using formula (\ref{Formula_11}), we have%
\begin{equation}
\lim_{\alpha\rightarrow\infty}\boldsymbol{P}_{l+1}\left(  \boldsymbol{v}%
;\boldsymbol{\theta}_{l},\widehat{\boldsymbol{w}}_{l+1},\widehat{b}_{j_{0}%
}^{l+1}\right)  =\left\{
\begin{array}
[c]{ccc}%
\frac{\boldsymbol{P}_{l}\left(  \boldsymbol{v}_{l};\boldsymbol{\theta}%
_{l}\right)  }{1+\exp\left(  \lambda\right)  \boldsymbol{P}_{l}\left(
\widetilde{\boldsymbol{v}}_{l};\boldsymbol{\theta}_{l}\right)  } & \text{if} &
\boldsymbol{v}_{l}\neq\widetilde{\boldsymbol{v}}_{l}\\
&  & \\
\frac{\left(  1+\exp\left(  \lambda\right)  \right)  \boldsymbol{P}_{l}\left(
\widetilde{\boldsymbol{v}}_{l};\boldsymbol{\theta}_{l}\right)  }{1+\exp\left(
\lambda\right)  \boldsymbol{P}_{l}\left(  \widetilde{\boldsymbol{v}}%
_{l};\boldsymbol{\theta}_{l}\right)  } & \text{if} & \boldsymbol{v}%
_{l}=\widetilde{\boldsymbol{v}}_{l}.
\end{array}
\right.  \label{Key_Calculation_1}%
\end{equation}
By choosing a suitable value of $\lambda$, and by adding an extra level to an
$DBN(p,l,\boldsymbol{\theta}_{l})$, the probability of an arbitrary input
$\widetilde{\boldsymbol{v}}_{l}$ can be increased, while the probability of
any other input $\boldsymbol{v}_{l}\neq\widetilde{\boldsymbol{v}}_{l}$ can be
uniformly decreased by a multiplicative factor. Now, the required DBN can be
constructed recursively using the technique given in the proof of Theorem 2
in\ \cite{Le roux et al 1}. We index the input vectors as $\boldsymbol{u}_{i}$
where $i$ is an integer from $1$ to $2^{m}$, $m=p^{l_{0}}$, and sort them such
that%
\[
0=\boldsymbol{Q}(\boldsymbol{u}_{k+1})=\ldots=\boldsymbol{Q}(\boldsymbol{u}%
_{2^{m}})<\boldsymbol{Q}(\boldsymbol{u}_{1})\leq\boldsymbol{Q}(\boldsymbol{u}%
_{2})\leq\ldots\leq\boldsymbol{Q}(\boldsymbol{u}_{k}).
\]
We denote by
\[
\boldsymbol{P}_{l_{0}+r}\left(  \boldsymbol{v}_{l_{0}}\right)  =\boldsymbol{P}%
_{l_{0}+r}\left(  \boldsymbol{v}_{l_{0}};\boldsymbol{\theta}_{l_{0}%
+r-1},\boldsymbol{w}_{l_{0}+1},\ldots,\boldsymbol{w}_{l_{0}+r},b_{j_{0}%
}^{l_{0}+1},\ldots,b_{j_{r-1}}^{l_{0}+r}\right)  \text{,}%
\]
for $r=1,2,\ldots$, the marginal distribution of an RBM constructed from
\[
DBN\left(  p,l_{0},\boldsymbol{\theta}_{l_{0}},\boldsymbol{w}_{l_{0}%
+1},b_{j_{0}}^{l_{0}+1}\right)
\]
by using the the key construction $r$ times. The $\boldsymbol{P}_{l_{0}+r}$s
are defined inductively as follows. If $r=0$, we take $\boldsymbol{P}_{l_{0}%
}\left(  \boldsymbol{v}_{l_{0}}\right)  =2^{-m}$, $\boldsymbol{v}\in G_{l_{0}%
}$, is the uniform distribution. We now set%
\[
\widehat{\boldsymbol{w}}_{l_{0}+1}=\alpha\left(  \boldsymbol{u}_{1}-\frac
{1}{2}\right)  \text{ and }\widehat{b}_{j_{0}}^{l_{0}+1}=-\left\langle
\widehat{\boldsymbol{w}}_{l_{0}+1},\boldsymbol{v}_{1}\right\rangle
+\lambda_{1},
\]
where
\[
\left\langle \boldsymbol{w}_{_{l_{0}}},\boldsymbol{v}_{_{l_{0}}}\right\rangle
:=%
{\displaystyle\sum\limits_{j\in G_{l_{0}}}}
w_{j}^{_{l_{0}}}v_{j}^{_{l_{0}}}.
\]
By (\ref{Key_Calculation_1}),%
\[
\lim_{\alpha\rightarrow\infty}\boldsymbol{P}_{l_{0}+1}\left(  \boldsymbol{v}%
;\boldsymbol{\theta}_{l_{0}},\widehat{\boldsymbol{w}}_{l_{0}+1},\widehat
{b}_{j_{0}}^{l_{0}+1}\right)  =\left\{
\begin{array}
[c]{ccc}%
\frac{\left(  1+\exp\left(  \lambda_{1}\right)  \right)  \boldsymbol{2}^{-m}%
}{1+\exp\left(  \lambda_{1}\right)  \boldsymbol{2}^{-m}} & \text{if} &
\boldsymbol{v}_{l_{0}}=\boldsymbol{u}_{1}\\
&  & \\
\frac{\boldsymbol{2}^{-m}}{1+\exp\left(  \lambda_{1}\right)  \boldsymbol{2}%
^{-m}} & \text{if} & \boldsymbol{v}_{l_{0}}=\boldsymbol{u}_{i}\text{, }%
i\geq2\text{.}%
\end{array}
\right.
\]
We now add an extra level to $DBN\left(  p,l_{0},\boldsymbol{\theta}_{l_{0}%
},\widehat{\boldsymbol{w}}_{l_{0}+1},\widehat{b}_{j_{0}}^{l_{0}+1}\right)  $
using the key construction. By choosing $\lambda_{2}$, $\boldsymbol{P}%
_{l_{0}+2}\left(  \boldsymbol{v}_{l_{0}}\right)  $ satisfies%
\[
\frac{\boldsymbol{P}_{l_{0}+2}\left(  \boldsymbol{u}_{2}\right)
}{\boldsymbol{P}_{l_{0}+2}\left(  \boldsymbol{u}_{1}\right)  }=\frac
{\boldsymbol{Q}\left(  \boldsymbol{u}_{2}\right)  }{\boldsymbol{Q}\left(
\boldsymbol{u}_{1}\right)  }.
\]
By using this construction recursively, one constructs a probability
distribution $\boldsymbol{P}_{l_{0}+k}\left(  \boldsymbol{v}_{l_{0}}\right)  $
satisfying%
\begin{align*}
\frac{\boldsymbol{P}_{l_{0}+k}\left(  \boldsymbol{u}_{k}\right)
}{\boldsymbol{P}_{l_{0}+k}\left(  \boldsymbol{u}_{k-1}\right)  }  &
=\frac{\boldsymbol{Q}\left(  \boldsymbol{u}_{k}\right)  }{\boldsymbol{Q}%
\left(  \boldsymbol{u}_{k-1}\right)  }\text{, \ldots,}\frac{\boldsymbol{P}%
_{l_{0}+2}\left(  \boldsymbol{u}_{2}\right)  }{\boldsymbol{P}_{l_{0}+2}\left(
\boldsymbol{u}_{1}\right)  }=\frac{\boldsymbol{Q}\left(  \boldsymbol{u}%
_{2}\right)  }{\boldsymbol{Q}\left(  \boldsymbol{u}_{1}\right)  },\\
\boldsymbol{P}_{l_{0}+k}\left(  \boldsymbol{u}_{k+1}\right)   &
=\ldots=\boldsymbol{P}_{l_{0}+k}\left(  \boldsymbol{u}_{2^{m}}\right)  .
\end{align*}
The solution of the above recursive system is given in the proof of Theorem 2
in \cite{Le roux et al 1}:%
\[
\boldsymbol{P}_{l_{0}+k}\left(  \boldsymbol{u}_{i}\right)  =\left\{
\begin{array}
[c]{lll}%
\frac{\boldsymbol{Q}\left(  \boldsymbol{u}_{1}\right)  }{1+\exp\left(
\lambda_{1}\right)  +\left(  2^{m}-k\right)  \boldsymbol{Q}\left(
\boldsymbol{v}_{1}\right)  } & \text{if} & i>k\\
&  & \\
\boldsymbol{Q}\left(  \boldsymbol{u}_{i}\right)  \frac{1+\exp\left(
\lambda_{1}\right)  }{1+\exp\left(  \lambda_{1}\right)  +\left(
2^{m}-k\right)  \boldsymbol{Q}\left(  \boldsymbol{v}_{1}\right)  } & \text{if}
& i\leq k.
\end{array}
\right.
\]
Finally,%
\[
KL(\boldsymbol{Q}\mid\boldsymbol{P}_{l_{0}+k})=%
{\displaystyle\sum\limits_{i}}
\boldsymbol{Q}\left(  \boldsymbol{u}_{i}\right)  \frac{\left(  2^{m}-k\right)
\boldsymbol{Q}\left(  \boldsymbol{u}_{i}\right)  }{1+\exp\left(  \lambda
_{1}\right)  }+o(\exp\left(  -\lambda_{1}\right)  )\rightarrow0
\]
as $\lambda_{1}\rightarrow\infty.$
\end{proof}

\section{\label{Section4}Discussion}

\subsection{Euclidean QFTs and NNs}

The literature about the connections between QFTs with NNs and brain activity
is extremely large. In this section we compare our results and our approach
with some recent works. We also propose several new open problems.

In \cite{Halverson et al}, the authors propose a correspondence between QFTs
and NNs. Many modern network architectures admits a Gaussian limit as the
number of neurons per layer tends to infinity. In the limit, these networks
can be described by Gaussian processes which naturally correspond to
non-interacting field theories. Moving away from the asymptotic limit yields
to non-Gaussian processes \ which are connected with interacting fields
theories. In our approach we work exclusively with interacting field theories:
a continuous version and a discrete version. See Table 1.
\begin{align*}
&
\begin{tabular}
[t]{|c|c|c|}\hline
$%
\begin{array}
[c]{c}%
p\text{-adic discrete DBN}\\
\text{with }l\text{ layers}%
\end{array}
$ & $\underleftarrow{\text{Discretization}}$ & $%
\begin{array}
[c]{c}%
p\text{-adic continuous DBN}\\
\text{with infinitely many layers}%
\end{array}
$\\\hline
$\Updownarrow$ &  & $\Updownarrow$\\\hline
$%
\begin{array}
[c]{c}%
\text{Discrete SFT determined}\\
\text{ by }E_{l}(\boldsymbol{v}_{l},\boldsymbol{h}_{l})
\end{array}
$ & $\underleftarrow{\text{Discretization}}$ & $%
\begin{array}
[c]{c}%
\text{SFT determined by}\\
\text{ }E(\boldsymbol{v},\boldsymbol{h})
\end{array}
$\\\hline
$\Updownarrow$ &  & $\Updownarrow$\\\hline
\multicolumn{1}{|l|}{$%
\begin{array}
[c]{c}%
\text{Probability measure}\\
\boldsymbol{P}_{l}(\boldsymbol{v}_{l},\boldsymbol{h}_{l})\text{ }d^{\#G_{l}%
}\boldsymbol{v}\text{ }d^{\#G_{l}}\boldsymbol{h}\text{ }\\
\text{on }\mathcal{D}^{l}(\mathbb{Z}_{p})\times\mathcal{D}^{l}(\mathbb{Z}_{p})
\end{array}
$} & $\underrightarrow{\text{Limit}}$ & \multicolumn{1}{|l|}{$%
\begin{array}
[c]{c}%
\text{Probability measure}\\
\frac{e^{-E(\boldsymbol{v},\boldsymbol{h})}}{Z^{\text{phys}}}d\boldsymbol{v}%
d\boldsymbol{h}\text{ }\\
\text{on }\mathcal{D}(\mathbb{Z}_{p})\times\mathcal{D}(\mathbb{Z}_{p})
\end{array}
$}\\\hline
\multicolumn{3}{|c|}{}\\\hline
$DBN(p,l,\boldsymbol{\theta}_{l})$ & $\underrightarrow{\text{Scaling, }m>l}$ &
$DBN(p,m,\boldsymbol{\theta}_{m})$\\\hline
\multicolumn{3}{|c|}{}\\\hline
\end{tabular}
\\
&  \text{{\small Table 1.\ The table provides a basic dictionary bewteen
Euclidean QFTs and NNs. The}}\\
&  \text{{\small last line in the table means that }}{\small DBN(p,m,\theta
}_{m}{\small )}\text{ {\small is a larger and computationally more}}\\
&  \text{{\small powerful version of }}{\small DBN(p,l,\theta}_{l}%
{\small )}\text{.}%
\end{align*}

A rigorous mathematical study of the following problem plays a central role in
the understanding the neural networks using statistical field theory:

\begin{problem}
Determine all the energy functionals $E(\boldsymbol{v},\boldsymbol{h})$ such
that
\begin{equation}
\frac{e^{-E(\boldsymbol{v},\boldsymbol{h})}}{Z^{\text{phys}}}d\boldsymbol{v}%
d\boldsymbol{h}\overset{\text{def}}{=}\lim_{l\rightarrow\infty}\boldsymbol{P}%
_{l}(\boldsymbol{v}_{l},\boldsymbol{h}_{l})\text{ }d^{\#G_{l}}\boldsymbol{v}%
\text{ }d^{\#G_{l}}\boldsymbol{h} \label{limit}%
\end{equation}
exists in some sense.
\end{problem}

In \cite{Zunifa-RMP-2022}, the author establishes, in a rigorous mathematical
way, the existence of $\phi^{4}$-interacting Euclidean quantum field theories
on a $p$-adic spacetime for which the limit (\ref{limit}) exists. In a
forthcoming publication we plan to expand the results given in
\cite{Zunifa-RMP-2022} to case of two fields and find the energy functionals
$E(\boldsymbol{v},\boldsymbol{h})$ for which the limit (\ref{limit}) exists.
The mentioned limit suggest that the correlation functions of the continuous
STF can be very well-approximated by the correlation functions of the
corresponding discrete SFT.

In \cite{Batchits et al}, authors study a generalization of the RBMs
associated with energy functionals of type:
\begin{gather}
S\left(  \boldsymbol{v},\boldsymbol{h};\boldsymbol{\theta}\right)  =-%
{\displaystyle\sum\limits_{j\in\mathcal{G}}}
{\displaystyle\sum\limits_{k\in\mathcal{G}}}
w_{k,j}v_{j}h_{j}+%
{\displaystyle\sum\limits_{j\in\mathcal{G}}}
a_{j}v_{j}+%
{\displaystyle\sum\limits_{j\in\mathcal{G}}}
b_{j}h_{j}\label{Type_I}\\
+%
{\displaystyle\sum\limits_{j\in\mathcal{G}}}
c_{j}v_{j}^{2}+%
{\displaystyle\sum\limits_{j\in\mathcal{G}}}
d_{j}h_{j}^{2}+%
{\displaystyle\sum\limits_{j\in\mathcal{G}}}
f_{j}v_{j}^{4}+%
{\displaystyle\sum\limits_{j\in\mathcal{G}}}
g_{j}h_{j}^{4}\text{,}\nonumber
\end{gather}
where $\mathcal{G}$ is a square lattice. These generalizations are not DBNs
due to the topology of $\mathcal{G}$. Also, the authors assume that and
$w_{k,j}\neq0\Leftrightarrow$ $i$ and $j$ are connected by one edge. This
condition implies that the functional $S\left(  \boldsymbol{v},\boldsymbol{h}%
;\boldsymbol{\theta}\right)  $ is local. Our action $E_{l}\left(
\boldsymbol{v},\boldsymbol{h};\boldsymbol{\theta}\right)  $ is non local,
which means that (in general) $w_{i-j}\neq0$ for any $i$, $j\in G_{l}$. In
\cite{Batchits et al}, the authors also discussed the implementation of
several learning algorithms. In forthcoming article, we will discuss the
implementation of $p$-adic discrete DBNs based on energy functionals of type
(\ref{Type_I}) with $\mathcal{G}=G_{l}$.

In \cite{Grosvenor}, \cite{Moritz Dahmen} a completely different approach for
the correspondence between Euclidean QFTs and NNs is presented. Starting with
a stochastic differential equation, which plays the role of a master equation
for the neural network, the authors construct an action and a path integral,
which provides the QFT attached to the network. The non-Archimedean
counterpart of this construction is an open problem. Before considering this
problem, it is necessary to study non-Archimedean versions of stochastic
recurrent neural networks (SRNNs), see, e.g., \cite{Lim}-\cite{Lim et al} and
the references therein. Based on \cite{Lim}-\cite{Lim et al},
\cite{Zambrano-Zuniga}, \cite{Zuniga-JFAA-2015}, we propose the following
non-Archimedean version of the SRNNs:

\begin{problem}
Let $t\in\left[  0,T\right]  $ and let $\boldsymbol{v}\in C\left(  \left[
0,T\right]  ,L^{2}\left(  \mathbb{Q}_{p},dx\right)  \right)  $ be a
deterministic input signal. A $p$-adic temporal and spatially continuous SRNN
is described \ the following state-space model:%
\begin{align}
\frac{d\boldsymbol{h}\left(  x,t\right)  }{dt}  &  =-a\boldsymbol{h}\left(
x,t\right)  +\alpha\left(
{\displaystyle\int\limits_{\mathbb{Q}_{p}}}
A(x,y)\boldsymbol{h}\left(  y,t\right)  dy+%
{\displaystyle\int\limits_{\mathbb{Q}_{p}}}
B(x,y)\boldsymbol{v}\left(  y,t\right)  dy+B(x)\right) \nonumber\\
&  +\beta\left(  \boldsymbol{h}\left(  x,t\right)  ,\boldsymbol{v}\left(
x,t\right)  \right)  \overset{\cdot}{W}\left(  x,t\right)  \label{Equation_1}%
\\
& \nonumber\\
\boldsymbol{y}\left(  x,t\right)   &  =\sigma\left(  \boldsymbol{h}\left(
x,t\right)  \right)  . \label{Equation_2}%
\end{align}
Where (\ref{Equation_1}) is a stochastic equation for the hidden state
$\boldsymbol{h}\in C\left(  \left[  0,T\right]  ,L^{2}\left(  \mathbb{Q}%
_{p},dx\right)  \right)  $, $a>0$, $\alpha,\sigma:\mathbb{R}\rightarrow
\mathbb{R}$ are Lipschitz continuous and bounded functions, $A$, $B$ $\in
L^{1}\left(  \mathbb{Q}_{p}^{2},d^{2}x\right)  $, $\beta:\mathbb{R}%
\rightarrow\mathbb{R}$, and $\overset{\cdot}{W}\left(  x,t\right)  $ is the
formal notation for a Gaussian random perturbation defined on some probability
space, and (\ref{Equation_2}) defines the response of the network,
$\boldsymbol{y}\in C\left(  \left[  0,T\right]  ,L^{2}\left(  \mathbb{Q}%
_{p},dx\right)  \right)  $. A relevant problem is to study the response of\ SRNNs.
\end{problem}

\subsection{The non-Archimedean counterpart of the Buice-Cowan theory}

Buice and Cowan formulated a theory of fluctuating activity of cortical
networks in the language of stochastic fields, see, e.g., \cite{Buice2},
\cite{Buice-cowan-chow}, see also \cite{Coombes et al}. A relevant observation
is that the discrete master equation of the spike model can be formulated on
$G_{l}$. Consider a network of $N=p^{l}$ neurons. The configuration of each
neuron is given by the number of \ effective spikes $n_{i}$ that neuron $i\in
G_{l}$ has emitted. There is a weight function $w_{i,j}$ describing the
relative innervation of neuron $i$ by neuron $j$. We assume that $w_{i,j}$ is
a function of $\left\vert i-j\right\vert _{p}$, which is exactly the
hypothesis used by Buice and Cowan. The probability per unit of time \ that a
neuron \ will emit another spike is given by $f\left(  \sum_{j\in G_{l}%
}w_{i,j}n_{j}\right)  $. The state of the system is given by the probability
distribution $P_{\boldsymbol{n}}(t)$, which is the probability that the
network is in configuration \ $\boldsymbol{n}=\left(  n_{i}\right)  _{i\in
G_{l}}$ at the time $t$. The master equation of the network has the form%
\begin{gather}
\frac{\partial}{\partial t}P_{n_{i}}(t)=-\alpha n_{i}P_{n_{i}}(t)+\alpha
\left(  n_{i}+1\right)  P_{n_{i}+1}(t)\nonumber\\
-f\left(
{\displaystyle\sum\limits_{\substack{j\neq i\\j,i\in G_{l}}}}
w_{i,j}n_{j}\right)  \left(  P_{n_{i}}(t)-P_{n_{i}+1}(t)\right)  ,
\label{Master equation}%
\end{gather}
for $i\in G_{l}$. Here $\alpha$ represents a decay rate, which is used to
account the fact that spikes are effective only for a time interval of
approximately $\frac{1}{\alpha}$. By using Doi techniques, see, e.g.,
\cite{Buice2}, \cite{Chow}, \cite{Moritz Dahmen}, involving
creation-annihilation operator formalism, the dynamics of the network can be
described \ by a vacuum ket $\left\vert 0\right\rangle $ using a pair
creation-annihilation operators at each site: $\left[  \Phi_{i},\Phi_{j}%
^{\dag}\right]  =\delta_{i,j}$. The state of the system is described by
\[
\left\vert \phi\left(  t\right)  \right\rangle =%
{\displaystyle\sum\limits_{\boldsymbol{n}}}
P_{\boldsymbol{n}}(t)%
{\displaystyle\prod\limits_{i\in G_{l}}}
\Phi_{i}^{n_{i}\dagger}\left\vert 0\right\rangle ,
\]
where the summation is taken over all configurations $\boldsymbol{n}$. In this
operator formalism the master equation takes the form%
\[
\frac{\partial}{\partial t}\left\vert \phi\left(  t\right)  \right\rangle
=-\widehat{H}\left\vert \phi\left(  t\right)  \right\rangle ,
\]
where%
\[
\widehat{H}=%
{\displaystyle\sum\limits_{i\in G_{l}}}
\alpha\Phi_{i}^{\dag}\Phi_{i}-%
{\displaystyle\sum\limits_{i\in G_{l}}}
\Phi_{i}^{\dag}f\left(
{\displaystyle\sum\limits_{j\in G_{l}}}
w_{i,j}\left[  \Phi_{j}^{\dag}\Phi_{j}+\Phi_{j}\right]  \right)  .
\]
Now by using the work of Peliti, see e.g. \cite{Buice2}, \cite{Chow},
\cite{Moritz Dahmen}, the Hamiltonian $\widehat{H}$ can be studied using a
path integral. The corresponding action takes the following form (in the
\ continuous time limit):%
\begin{gather}
S\left[  \phi_{i}\left(  t\right)  ,\widetilde{\phi}_{i}\left(  t\right)
\right]  =\label{Action_1}\\%
{\displaystyle\int\nolimits_{0}^{t}}
dt\left\{
{\displaystyle\sum\limits_{i\in G_{l}}}
\widetilde{\phi}_{i}\partial_{t}\phi_{i}+\alpha\phi_{i}\widetilde{\phi}%
_{i}-\widetilde{\phi}_{i}f\left(
{\displaystyle\sum\limits_{j\in G_{l}}}
w_{i,j}\left[  \widetilde{\phi}_{j}\phi_{j}+\phi_{j}\right]  \right)
\right\}  ,\nonumber
\end{gather}
where $dt$ denotes the Lebesgue measure of the real line. To obtain the
continuous limit in the spatial variable $i\rightarrow x$ in (\ref{Action_1}),
Buice and Cowan assumed that the sum in (\ref{Action_1}) runs over a square
lattice and thus by a limit process a Riemann-type integral is obtained. In
our case, this sum gives rise to and integral with respect to the Haar measure
of $\mathbb{Q}_{p}$:%
\[
S\left[  \phi\left(  x,t\right)  ,\widetilde{\phi}\left(  x,t\right)  \right]
=%
{\displaystyle\int\nolimits_{0}^{t}}
dt%
{\displaystyle\int\nolimits_{\mathbb{Q}_{p}}}
dx\left\{
{\displaystyle\sum\limits_{i\in G_{l}}}
\widetilde{\phi}\partial_{t}\phi+\alpha\phi_{i}\widetilde{\phi}_{i}%
-\widetilde{\phi}_{i}f\left(  w\ast\left[  \widetilde{\phi}\phi+\phi\right]
\right)  \right\}  .
\]

\begin{problem}
To develop a non-Archimedean counterpart of the Buice-Cowan theory.
\end{problem}

\subsection{Final comments}

The connections between statistical mechanics and deep learning has been
studied intensively in the last ten years, see, e.g., \cite{Advani et al},
\cite{Bahri et al}, \cite{Carleo et al}, \cite{Engel et al}, \cite{Katsnelson
et al}, \cite{Metha et al}, \cite{Mezard et al}, \cite{Sohl-Dickstein et al},
\cite{Vanchurin}. To the best of our knowledge the theoretical results
presented here about the $p$-adic DBNs are new. Here we should mention that
$p$-adic neural networks have been considered before in
\cite{Albeverio-Khrennikov-Tirozzi}, \cite{Krennikov-Nilson}%
-\cite{Krennikov-tirozzi}, \cite{Zambrano-Zuniga}, but these computational
models are completely different to the ones considered here. We finally
mention that Khrennikov et al. have developed hierarchical models (based in
$p$-adic numbers) for brain activity and EEG analysis with applications in the
diagnosis of mental diseases see e.g. \cite{Khrennikov2}-\cite{Khrennikov4},
\cite{Shor et al}.

\section{\label{Section5}Conclusions}

In this work we initiated the study of the correspondence between $p$-adic
SFTs and NNs. An important advantage of the $p$-adic SFTS over the classical
ones is that discretization process that produces discrete SFTs can be carried
out in rigorous mathematical way in many relevant cases, for instant in the
$\phi^{4}$-theories \cite{Zunifa-RMP-2022}. A $p$-adic discrete SFT
corresponds to an energy functional defined on a tree $G_{l}$, this functional
defines a NN whose neurons are organized hierarchically in a tree-like
structure. This type of networks are a new particular class of the deep belief
networks introduced by Hinton et al. \cite{Hinton et al}-\cite{Honglak et al}.
A DBN is constructed by stacking several RBMs, the goal of this construction
is to get a network where the neurons are organized hierarchically in large
tree-like structure (a deep learning architecture). A classical RBM correspond
naturally to a certain spin glass, we argue a DBN should correspond to an
ultrametric spin glass.

A $p$-adic continuous DBN is a SFT, in this case the neurons correspond to the
points of the $p$-adic unit ball $\mathbb{Z}_{p}$, and thus, the neurons are
organized in an infinite rooted tree. A discrete version of this theory
corresponds to a $p$-adic discrete DBN. Intuitively, the discrete version is
obtained by cutting the tree $\mathbb{Z}_{p}$ at level $l\geq1$. It is
expected that in the limit $l$ tends infinite the discrete theories approach
to the continuous ones. This behavior is radically different to the one
presented in \cite{Halverson et al}. In this work, in the limit when the
number of neurons tend to infinity the network corresponds to a
non-interacting QFT, while in the finite case corresponds a interacting QFT.
Here, in both cases we have interacting QFTs. The $p$-adic discrete DBNs are
universal approximators, Theorems \ref{Theorem_2}, \ref{Theorem_3}. To
establish this result we adapt the techniques developed by Le Roux and Bengio
in \cite{Le roux et al 1}.

It is important emphasize that the correspondence between SFTs and NNs takes
different forms depending on the architecture of the networks. The case in
which the architecture is embedded in a stochastic differential equation or in
a discrete mater equation has been studied intensively lately, see e.g.
\cite{Batchits et al}, \cite{Grosvenor}, \cite{Halverson et al}, see also
\cite{Buice2}, \cite{Coombes et al}. The $p$-adic counterpart of this
correspondence is an open problem. It is widely accepted that the brain
activity is organized hierarchically. Here we pointed out that the Buice-Cowan
theory of fluctuating activity of cortical networks has a $p$-adic
counterpart, where the neurons are organized in an infinite tree-like
structure. In our view, the fully development of the $p$-adic counterpart of
the Buice-Cowan theory is a relevant matter.


\begin{thebibliography}{999}                                                                                              %


\bibitem {Abdesselam}A. Abdesselam, A. Chandra, G. Guadagni,\textit{ Rigorous
quantum field theory functional integrals over the }$p-$\textit{adics I}:
\textit{Anomalous dimensions, }(2013)\textit{, }https://arxiv.org/abs/1302.5971.

\bibitem {Advani et al}Madhu Advani, Subhaneil Lahiri, Surya Ganguli,
\textit{Statistical mechanics of complex neural systems and high dimensional
data}, J. Stat. Mech. Theory Exp. 2013, no. 3, P03014, 66 pp., https://doi.org/10.1088/1742-5468/2013/03/P03014.

\bibitem {A-K-S}S. Albeverio, A. Yu. Khrennikov, V. M. Shelkovich,
\textit{Theory of }$p$\textit{-adicdistributions: linear and nonlinear
models}. Cambridge University Press, Cambridge, ISBN 9781139119283 (2010).

\bibitem {Albeverio-Khrennikov-Tirozzi}Sergio Albeverio, Andrei Khrennikov,
Brunello Tirozzi, $p$\textit{-adic dynamical systems and neural networks},
Math. Models Methods Appl. Sci. \textbf{9}, no. 9, 1417--1437, (1999), https://doi.org/10.1142/S0218202599000634.

\bibitem {Arroyo-Zuniga}Edilberto Arroyo-Ortiz, W. A. Z\'{u}\~{n}iga-Galindo,
\textit{Construction of }$p$\textit{-adic covariant quantum fields in the
framework of white noise analysis}, Rep. Math. Phys. \textbf{84}(1), 1--34
(2019), https://doi.org/10.1016/S0034-4877(19)30066-7.

\bibitem {Av-4}V. A. Avetisov, A. Kh. Bikulov, V. A. Osipov, $p$\textit{-Adic
description of characteristic relaxation in complex systems}, \ J. Phys. A
\textbf{36}(15), 4239--4246 (2003), https://doi.org/10.1088/0305-4470/36/15/301.

\bibitem {Av-5}V. A. Avetisov, A. H. Bikulov, S. V. Kozyrev, V. A. Osipov,
$p$\textit{-Adic models of ultrametric diffusion constrained by hierarchical
energy landscapes},\ J. Phys. A \textbf{35}(2), 177--189 (2002), https://doi.org/10.1088/0305-4470/35/2/301.

\bibitem {Bahri et al}Yasaman Bahri, Jonathan Kadmon, Jeffrey Pennington, Sam
S. Schoenholz, Jascha Sohl-Dickstein, Surya Ganguli, \textit{Statistical
Mechanics of Deep Learning,} Annual Review of Condensed Matter Physics
\textbf{11}, 501-528 (2020), https://doi.org/10.1146/annurev-conmatphys-031119-050745.

\bibitem {Becker  et al}O. M. Becker, M. Karplus, \textit{The topology of
multidimensional protein energy surfaces: theory and application to peptide
structure and kinetics}, J. Chem.Phys. 106, 1495--1517 (1997).

\bibitem {Batchits et al}D. Bachtis, G. Aarts, B. Lucini, \textit{Quantum
field-theoretic machine learning}, Physical Review D, \textbf{103}(7) (2021). https://doi.org/10.1103/physrevd.103.074510

\bibitem {Benigio}Bengio Yoshua, \textit{Learning Deep Architectures for AI,
Foundations and Trends} in Machine Learning: Vol. 2: No. 1, pp 1-127 (2009). http://dx.doi.org/10.1561/2200000006.

\bibitem {Bocardo-Zuniga-1}M. Bocardo-Gaspar, Willem Veys, W. A.
Z\'{u}\~{n}iga-Galindo, \textit{Meromorphic continuation of Koba-Nielsen
string amplitudes,} J. High Energy Phys. \textbf{9}, 138, 43 pp. (2020), https://doi.org/10.1007/JHEP09(2020)138.

\bibitem {Bocardo-Zuniga-2}Miriam Bocardo-Gaspar, H. Garc\'{\i}a-Compe\'{a}n,
W. A. Z\'{u}\~{n}iga-Galindo, \textit{Regularization of p-adic string
amplitudes, and multivariate local zeta functions}, Lett. Math. Phys.
\textbf{109}, no. 5, 1167--1204 (2019), https://doi.org/10.1007/s11005-018-1137-1.

\bibitem {Bocardo-Zuniga-3}M. Bocardo-Gaspar, H. Garc\'{\i}a-Compe\'{a}n, W.
A. Z\'{u}\~{n}iga-Galindo, \textit{On }$p$\textit{-adic string amplitudes in
the limit p approaches to one}, J. High Energy Phys. \textbf{8}, 043, front
matter+22 pp. (2018), https://doi.org/10.1007/JHEP08(2018)043.

\bibitem {B-F-O-W}L. Brekke, P. G. O. Freund, M. Olson, E. Witten\textit{,
Non-Archimedean string dynamics}, Nuclear Phys. B \textbf{302}(3), 365--402
(1988), https://doi.org/10.1016/0550-3213(88)90207-6.

\bibitem {B-F}L. Brekke, P. G. O. Freund, $p-$\textit{adic numbers in
Physics}, Phys. Rep. \textbf{233}(1), 1--66 (1993), https://doi.org/10.1016/0370-1573(93)90043-D.

\bibitem {Buice}Michael A. Buice, Carson C. Chow, \textit{Beyond mean field
theory: statistical field theory for neural networks}, J. Stat. Mech. Theory
Exp. \textbf{3}, P03003, 21 pp. (2013), https://doi.org/10.1088/1742-5468/2013/03/P03003.

\bibitem {Buice2}Michael A. Buice, Jack D. Cowan, \textit{Field-theoretic
approach to fluctuation effects in neural networks}, Phys. Rev. E \textbf{75},
no. 5, 051919, 14 pp. (2007), https://link.aps.org/doi/10.1103/PhysRevE.75.051919.

\bibitem {Buice-cowan-chow}Michael A. Buice, Jack D. Cowan, Carson C. Chow,
\textit{Systematic Fluctuation Expansion for Neural Network Activity
Equations}, Neural Comput \textbf{22} (2), 377--426 (2010), https://doi.org/10.1162/neco.2009.02-09-960

\bibitem {Carleo et al}G. Carleo, I. Cirac, K. Cranmer, L. Daudet, M. Schuld
et al., \textit{Machine learning and the physical sciences}, Rev. Mod. Phys.
\textbf{91}, 045002 (2019), https://link.aps.org/doi/10.1103/RevModPhys.91.045002.

\bibitem {Chow}Carson C. Chow, Michael A. Buice, \textit{Path integral methods
for stochastic differential equations}, J. Math. Neurosci. \textbf{5}, Art. 8,
35 pp. (2015), https://doi.org/10.1186/s13408-015-0018-5.

\bibitem {Coombes et al}Stephen Coombes, Peter beim Graben, Roland Potthast,
James Wright, Editors, \textit{Neural fields. Theory and applications}.
Springer, Heidelberg, (2014).

\bibitem {Decelle et al}Aur\'{e}lien Decelle, Cyril Furtlehner,
\textit{Restricted Boltzmann machine: Recent advances and mean-field theory},
Chin. Phys. B, 2021, \textbf{30}(4), 040202-040202, DOI: 10.1088/1674-1056/abd160

\bibitem {DKVK-Parisi}B. Dragovich, A. Yu. Khrennikov, S. V. Kozyrev, I. V.
Volovich, \textit{Giorgio Parisi: the Nobel Prize in physics 2021}, $p$-Adic
Numbers Ultrametric Anal. Appl. \textbf{14} , no. 1, 81--83 (2022), https://doi.org/10.1134/S207004662201006X.

\bibitem {Dra01}B. Dragovich, \textit{On }$p-$\textit{adic and Adelic
generalization of quantum field theory}, Nucl. Phy. B Proc. Suppl.
\textbf{102-103}, 150-155 (2001), https://doi.org/10.1016/S0920-5632(01)01550-X.

\bibitem {D-K-K-V}B. Dragovich, A. Yu. Khrennikov, S. V Kozyrev, I. V.
Volovich, \textit{On }$p-$\textit{adic mathematical physics}, $p-$Adic Numbers
Ultrametric Anal. Appl. \textbf{1} (1), 1--17 (2009), https://doi.org/10.1134/S2070046609010014.

\bibitem {DD97}G. S. Djordjevi\'{c}, B. Dragovich, $p-$\textit{Adic Path
Integrals for Quadratic Actions}, Mod. Phys. Lett. A \textbf{12}, 1455-1463
(1997), https://doi.org/10.1142/S0217732397001485.

\bibitem {Dong et al}Shi Dong, Ping Wang, Khushnood Abbas, \textit{A survey on
deep learning and its applications}, Comput. Sci. Rev. \textbf{40}, Paper No.
100379, 22 pp. (2021), https://doi.org/10.1016/j.cosrev.2021.100379.

\bibitem {Dyer et al}E. Dyer and G. Gur-Ari, \textit{Asymptotics of wide
networks from Feynman diagrams}, https://arxiv.org/abs/1909.11304.

\bibitem {DysonFreeman}F. J. Dyson, \textit{An Ising ferromagnet with
discontinuous long-range order},\ Comm. Math. Phys\textit{.} \textbf{21},
269--283 (1971), https://doi.org/10.1007/BF01645749.

\bibitem {Engel et al}Engel A, den Broeck C.V., \textit{Statistical Mechanics
of Learning. Cambridge},UK, Cambridge Univ. Press (2001), https://doi.org/10.1017/CBO9781139164542.

\bibitem {Erbin}H. Erbin, V. Lahoche, D. Ousmane Samary,
\textit{Nonperturbative renormalization for the neural network-QFT
correspondence}, Mach. Learn. Sci. Tech. (2022), https://iopscience.iop.org/article/10.1088/2632-2153/ac4f69.

\bibitem {Fischer-Igel}A. Fischer, C. Igel, \textit{ An Introduction to
Restricted Boltzmann Machines}. In: Alvarez, L., Mejail, M., Gomez, L.,
Jacobo, J. (eds) Progress in Pattern Recognition, Image Analysis, Computer
Vision, and Applications. CIARP 2012. Lecture Notes in Computer Science, vol
7441, Springer, Berlin, Heidelberg, (2012), https://doi.org/10.1007/978-3-642-33275-3\_2.

\bibitem {Fraunfelder et al}H. Frauenfelder, S. S. Chan, W. S. Chan (eds),
\textit{The Physics of Proteins}, Springer-Verlag, 2010.

\bibitem {Fuquen et al}A. R. Fuquen-Tibat\'{a}, H. Garc\'{\i}a-Compe\'{a}n, W.
A. Z\'{u}\~{n}iga-Galindo, \textit{Euclidean quantum field formulation of }%
$p$\textit{-adic open string amplitudes}, Nuclear Phys. B \textbf{975}, Paper
No. 115684, 27 pp. (2022), https://doi.org/10.1016/j.nuclphysb.2022.115684.

\bibitem {GC-Zuniga}H. Garc\'{\i}a-Compe\'{a}n, Edgar Y. L\'{o}pez, W. A.
Z\'{u}\~{n}iga-Galindo, $p$\textit{-Adic open string amplitudes with
Chan-Paton factors coupled to a constant B-field}, Nuclear Phys. B
\textbf{951}, 114904, 33 pp. (2020), https://doi.org/10.1016/j.nuclphysb.2019.114904.

\bibitem {Glimm-Jaffe}James Glimm, Arthur Jaffe, \textit{Quantum physics. A
functional integral point of view}. Second edition. New York, Springer-Verlag
(1987), ISBN: 978-1-4612-4728-9.

\bibitem {Grosvenor}Kevin T. Grosvenor, Ro Jefferson, \textit{The edge of
chaos: quantum field theory and deep neural networks}, SciPost Phys.
\textbf{12} (2022), no. 3, Paper No. 081, 65 pp., DOI: 10.21468/SciPostPhys.12.3.081.

\bibitem {Gubser}Steven S. Gubser, Christian B. Jepsen, Ziming Ji, Brian
Trundy, Amos Yarom, \textit{Non-local non-linear sigma models}, J. High Energy
Phys. \textbf{9}, 005, 36 pp. (2019), https://doi.org/10.1007/JHEP09(2019)005.

\bibitem {Gubser et al.}S. S. Gubser, J. Knaute, S. Parikh, A. Samberg, P.
Witaszczyk, \textit{ }$p$\textit{-adic AdS/CFT}, Commun. Math. Phys.
\textbf{352}(3), 1019 (2017), https://doi.org/10.1007/s00220-016-2813-6.

\bibitem {Gubser et al-1}S. S. Gubser, Ch. Jepsen, Z. Ji and B. Trundy,
\textit{Continuum limits of sparse coupling patterns},\ Phys. Rev. D
\textbf{98}(4), 045009, (2018), https://doi.org/10.1103/PhysRevD.98.045009.

\bibitem {Halverson et al}J. Halverson, A. Maiti and K. Stoner,\textit{ Neural
networks and quantum field theory}, Mach. Learn. Sci. Technol. \textbf{2},
035002 (2021), doi:10.1088/2632-2153/abeca3.

\bibitem {Harlow et al}D. Harlow, S. H. Shenker, D. Stanford, L. Susskind,
\textit{Tree-like structure of eternal inflation: A solvable model}, Physical
Review D \textbf{85}, 063516 (2012), https://link.aps.org/doi/10.1103/PhysRevD.85.063516.

\bibitem {Hida et al}Takeyuki Hida ,Hui-Hsiung Kuo, J\"{u}rgen Potthoff,
Ludwig Streit, \textit{ White noise. An infinite dimensional calculus},
Dordrecht, Kluwer Academic Publishers (1993), ISBN 9789401736817, 9401736812.

\bibitem {Hinton et al}G. E. Hinton, R.R. Salakhutdinov, \textit{Reducing the
dimensionality of data with neural networks}, Science, \textbf{313}, 5786
(2006), doi: 10.1126/science.112764.

\bibitem {Honglak et al}Honglak Lee, Roger Grosse, Rajesh Ranganath, Andrew Y.
Ng, \textit{Unsupervised Learning of Hierarchical Representations with
Convolutional Deep Belief Networks}, Communications of the ACM, \textbf{54},
no. 10, pp. 95-103, (2011), https://doi.org/10.1145/2001269.2001295.

\bibitem {Katsnelson et al}Mikhail I. Katsnelson, Vitaly Vanchurin, Tom
Westerhout, \textit{Self-organized criticality in neural networks},
arXiv:2107.03402 (2021).

\bibitem {Kleinert et al}Hagen Kleinert, V. Schulte-Frohlinde,
C\textit{ritical properties of }$\phi^{4}$\textit{-theories}, Singapore, World
Scientific (2001), doi:10.1142/9789812799944.

\bibitem {Kh1}A. Yu. Khrennikov, \textit{Representation of second quantization
over non-Archimedean number fields}, Soviet Phys. Dokl. \textbf{35}(10),
867--869 (1990), .

\bibitem {Kh2}A. Yu Khrennikov, \textit{The Schr\"{o}dinger and Bargmann-Fock
representations in non-Archimedean quantum mechanics}, Sov. Phys., Dokl.
\textbf{35}(7), 638-640 \textbf{\ }(1990).

\bibitem {Khrennikov2}A. Yu. Khrennikov, \textit{Non-Archimedean Analysis:
Quantum Paradoxes, Dynamical Systems and Biological Models}, Kluwer Academic
Publishers, (1997), ISBN 978-94-009-1483-4.

\bibitem {Khrenikov2A}A. Khrennikov, \textit{Information Dynamics in
Cognitive, Psychological, Social and Anomalous Phenomena, }Springer, (2004),
ISBN 978-94-017-0479-3.

\bibitem {Khrennikov4}Andrei Khrennikov, \textit{Probabilistic pathway
representation of cognitive information,} Journal of Theoretical Biology,
\textbf{231}, Issue 4, 2004, Pages 597-613, https://doi.org/10.1016/j.jtbi.2004.07.015.

\bibitem {Khrennikov1}A. Yu. Khrennikov, \textit{\ }$p$\textit{-Adic Valued
Distributions in Mathematical Physics}, Dordrecht, Kluwer Academic Publishers
(1994), ISBN 978-94-015-8356-5.

\bibitem {KKZuniga}Andrei Khrennikov, Sergei Kozyrev, W. A. Z{\'{u}}%
{\~{n}}iga-Galindo, \textit{ Ultrametric Equations and its Applications.
Encyclopedia of Mathematics and its Applications 168}, Cambridge, Cambridge
University Press (2018), ISBN 9781107188822.

\bibitem {Khrennikov et al}A. Yu. Khrennikov, F. M. Mukhamedov, J. F. F.
Mendes, \textit{On }$p$\textit{-adic Gibbs measures of the countable state
Potts model on the Cayley tree},\ Nonlinearity \textbf{20}(12), 2923--2937
(2007), https://doi.org/10.1088/0951-7715/20/12/010.

\bibitem {Khrennikov-Kozyrev}A. Yu. Khrennikov, S.V. Kozyrev, \textit{Replica
symmetry breaking related to a general ultrametric space I: Replica matrices
and functionals},\ Physica A: Statistical Mechanics and its Applications,
\textbf{359}, 222-240 (2006), https://doi.org/10.1016/j.physa.2005.05.077.

\bibitem {Krennikov-Nilson}A.Y. Khrennikov, M. Nilson, \textit{ }%
$p$\textit{-Adic Deterministic and Random Dynamics. Mathematics and Its
Applications}, vol 574, Springer, Dordrecht (2004), https://doi.org/10.1007/978-1-4020-2660-7\_8.

\bibitem {Krennikov-tirozzi}Andrei Khrennikov, Brunello Tirozzi, \textit{
Learning of }$p$\textit{-adic neural networks. }Stochastic processes, physics
and geometry: new interplays, II (Leipzig, 1999), 395--401, CMS Conf. Proc.,
29, Amer. Math. Soc., Providence, RI, (2000).

\bibitem {Koblitz}Neal Koblitz,\textit{\ }$p$\textit{-Adic Numbers, }%
$p$\textit{-adic Analysis, and Zeta-Functions. Graduate Texts in Mathematics
No. 58,} New York, Springer-Verlag (1984), ISBN 978-1-4612-1112-9.

\bibitem {Koch}Anatoly N. Kochubei, \textit{Pseudo-differential equations and
stochastics over non-Archimedean fields}, New York, Marcel Dekker, Inc.
(2001), ISBN 0824706552.

\bibitem {Kochubei et al}Anatoly N. Kochubei, Mustafa R. Sait-Ametov,
\textit{Interaction measures on the space of distributions over the field of
}$p$\textit{-adic numbers}, Infin. Dimens. Anal. Quantum Probab. Relat. Top.
\textbf{6}(3), 389--411 (2003), https://doi.org/10.1142/S0219025703001353.

\bibitem {Kozyrev  SV}S. V. Kozyrev, \textit{Methods and Applications of
Ultrametric and }$p$\textit{-Adic Analysis: From Wavelet Theory to
Biophysics}, Sovrem. Probl. Mat., 12, Steklov Math. Inst., RAS, Moscow, 2008, 3--168.

\bibitem {Lee et al}Honglak Lee, Roger Grosse, Rajesh Ranganath, Andrew Y. Ng,
\textit{Convolutional deep belief networks for scalable unsupervised learning
of hierarchical representations.}ICML '09: Proceedings of the 26th Annual
International Conference on Machine Learning 2009, Pages 609--616, https://doi.org/10.1145/1553374.1553453.

\bibitem {Leon-Zuniga}Edwin Le\'{o}n-Cardenal, W. A. Z\'{u}\~{n}iga-Galindo,
\textit{An introduction to the theory of local zeta functions from scratch},
Rev. Integr. Temas Mat. 3\textbf{7} no. 1, 45--76 (2019), https://doi.org/10.18273/revint.v37n12019004.

\bibitem {LM89}E. Y. Lerner, M. D. Misarov, \textit{Scalar models in }%
$p-$\textit{adic quantum field theory and hierarchical models}, Theor. Math.
Phys. \textbf{78}, 177--184 (1989), https://doi.org/10.1007/BF01018683.

\bibitem {Le roux et al 1}Nicolas Le Roux, Yoshua Bengio,
\textit{Representational power of restricted Boltzmann machines and deep
belief networks}, Neural Comput. \textbf{20} , no. 6, 1631--1649 (2008), doi: https://doi.org/10.1162/neco.2008.04-07-510.

\bibitem {Le roux et al 2}Nicolas Le Roux, Yoshua Bengio, \textit{Deep belief
networks are compact universal approximators}, Neural Comput. \textbf{22}, no.
8, 2192--2207 (2010), https://doi.org/10.1162/neco.2010.08-09-1081.

\bibitem {Lim}Soon Hoe Lim, \textit{Understanding recurrent neural networks
using nonequilibrium response theory}, J. Mach. Learn. Res. \textbf{22}, Paper
No. 47, 48 pp. (2021), https://www.jmlr.org/papers/volume22/20-620/20-620.pdf.

\bibitem {Lim et al}Soon Hoe Lim, N. Benjamin Erichson, Liam Hodgkinson,
Michael W. Mahoney, \textit{Noisy Recurrent Neural Networks in Advances }in
Neural Information Processing Systems 34 (NeurIPS 2021), Edited by: M. Ranzato
and A. Beygelzimer and Y. Dauphin and P.S. Liang and J. Wortman Vaughan. Pages
5124--5137. Curran Associates, Inc., (2021).

\bibitem {Maiti}A. Maiti, K. Stoner and J. Halverson,
\textit{Symmetry-via-duality: Invariant neural network densities from
parameter-space correlators}, https://arxiv.org/abs/2106.00694.

\bibitem {Mendoza-Zuniga}M. L. Mendoza-Mart\'{\i}nez, J. A. Vallejo, W. A.
Z\'{u}\~{n}iga-Galindo, \textit{Acausal quantum theory for non-Archimedean
scalar fields}, Rev. Math. Phys.\textit{\ }\textbf{31}(4), 1950011, 46 pp.
(2019), https://doi.org/10.1142/S0129055X19500119.

\bibitem {Metha et al}P. Mehta, M. Bukov, C-H. Wang, A. G. R. Day, C.
Richardson et al., \textit{A high-bias, low-variance introduction to Machine
Learning for physicists}, Phys. Rep. \textbf{810}, 1--124 (2019), doi: 10.1016/j.physrep.2019.03.001.

\bibitem {Mezard et al}M. M\'{e}zard, A. Montanari, \textit{Information,
Physics, and Computation}, New York: Oxford Univ. Press, (2009), doi 10.1093/acprof:oso/9780198570837.001.0001.

\bibitem {Mis-1}M. D. Missarov, \textit{Random fields on the adele ring and
Wilson's renormalization group}, Annales de l'institut Henri Poincar\'{e} (A):
Physique Theorique\textit{\ }\textbf{50}(3) , 357-- 367 (1989), http://www.numdam.org/item/AIHPA\_1989\_\_50\_3\_357\_0/.

\bibitem {Mis-2}M. D. Missarov, \textit{ }$p-$\textit{Adic }$\varphi^{4}%
-$\textit{theory as a functional equation problem}, Lett. Math. Phys.
\textbf{39}, 253-260\textbf{\ }(1997), https://doi.org/10.1023/A:1007303725761.

\bibitem {Mis-3}M. D. Missarov, $p-$\textit{Adic renormalization group
solutions and the Euclidean renormalization group conjectures,} $p$-Adic
Numbers Ultrametric Anal. Appl. \textbf{4}, 109-114 \textbf{\ }(2012), https://doi.org/10.1134/S2070046612020033.

\bibitem {Mis-4}M. D. Missarov, \textit{The continuum limit in the fermionic
hierarchical model},\ Theoret. and Math. Phys. \textbf{118}, 32--40 (1999), https://doi.org/10.1007/BF02557193.

\bibitem {M-P-V}M. M\'{e}zard, G. Parisi, M. A. Virasoro, \textit{Spin glass
theory and beyond}, World Scientific, (1987), https://doi.org/10.1142/0271.

\bibitem {Moritz Dahmen}Moritz Helias, DavidDahmen, \textit{Statistical field
theory for neural networks. Lecture Notes in Physics, 970}, Springer, Cham
(2020), https://doi.org/10.1007/978-3-030-46444-8.

\bibitem {Mukhamedov-1}F. Mukhamedov and H. Ak\i n, Phase transitions for
$p$-adic Potts model on the Cayley tree of order three,\ \textit{J. Stat.
Mech. Theory Exp.} \textbf{7}, P07014, (2013).

\bibitem {Mukhamedov-3}F. Mukhamedov, M. Saburov, O. Khakimov, \textit{On }%
$p$\textit{-adic Ising-Vannimenus model on an arbitrary order Cayley
tree},\ J. Stat. Mech. Theory Exp. \textbf{5}, P05032, (2015), https://doi.org/10.1088/1742-5468/2015/05/P05032.

\bibitem {Parisi-Sourlas}G. Parisi and N. Sourlas, $p$\textit{-Adic numbers
and replica symmetry breaking},\ Eur. Phys. J. B \textbf{14}, 535--542 (2000), https://doi.org/10.1007/s100510051063.

\bibitem {R-T-V}R. Rammal, G. Toulouse, M. A. Virasoro, \textit{Ultrametricity
for physicists},\ Rev. Modern Phys. \textbf{58}, 765--788 (1986), https://link.aps.org/doi/10.1103/RevModPhys.58.765.

\bibitem {Roberts-Yaida}D. A. Roberts, S. Yaida and B. Hanin, \textit{The
principles of deep learning theory}, https://arxiv.org/abs/2106.10165.

\bibitem {Simon-0}Barry Simon, \textit{The }$P\left(  \phi\right)  _{2}%
$\textit{ Euclidean (quantum) field theory}, Princeton University Press
(1974), ISBN 9780691645490.

\bibitem {Sinai}Ya. G. Sina\u{\i} , \textit{Theory of phase transitions:
rigorous results},Pergamon Press (1982), ISBN 1483126072.

\bibitem {Sohl-Dickstein et al}Jascha Sohl-Dickstein, Eric Weiss, Niru
Maheswaranathan, Surya Ganguli, \textit{Deep Unsupervised Learning using
Nonequilibrium Thermodynamics}, Proceedings of the 32nd International
Conference on Machine Learning, PMLR \textbf{37}, 2256-2265, (2015), http://proceedings.mlr.press/v37/sohl-dickstein15.pdf.

\bibitem {Shor et al}Oded Shor, Amir Glik, Amit Yaniv-Rosenfeld, Avi Valevski,
Abraham Weizman, Andrei Khrennikov, Felix Benninger, \textit{EEG p-adic
quantum potential accurately identifies depression, schizophrenia and
cognitive decline}, PLoS ONE 16(8): e0255529 (2021), https://doi.org/10.1371/journal.pone.0255529.

\bibitem {Taibleson}M. H. Taibleson, \textit{Fourier analysis on local
fields}, Princeton University Press (1975), ISBN 9780691645162.

\bibitem {V-V-Z}V. S. Vladimirov, I. V. Volovich, E. I. Zelenov,
$p$\textit{-Adic analysis and mathematical physics,} Singapore, World
Scientific (1994), http://dx.doi.org/10.1142/1581.

\bibitem {Vanchurin}Vitaly Vanchurin, \textit{The world as a neural network},
Entropy \textbf{22}, 1210 (2020), https://doi.org/10.3390/e22111210.

\bibitem {Volovich1}I. V. Volovich, \textit{Number theory as the ultimate
physical theory}, $p$-Adic Numbers Ultrametric Anal. Appl.\textit{\ }%
\textbf{2}, 77--87 (2010), https://doi.org/10.1134/S2070046610010061.

\bibitem {Yaida}S. Yaida, \textit{Non-Gaussian processes and neural networks
at finite widths}, https://arxiv.org/abs/1910.00019.

\bibitem {Zabrodin}A. V. Zabrodin, \textit{Non-Archimedean strings and
Bruhat-Tits trees}, Comm. Math. Phys. \textbf{123}, 463--483 (1989), https://doi.org/10.1007/BF01238811.

\bibitem {Zambrano-Zuniga}B.A. Zambrano-Luna, W.A. Z\'{u}\~{n}iga-Galindo,
$p$\textit{-adic Cellular Neural Networks}, J Nonlinear Math Phys (2022), https://doi.org/10.1007/s44198-022-00071-8.

\bibitem {Zunifa-RMP-2022}W. A. Z\'{u}\~{n}iga-Galindo,
\textit{Non-Archimedean statistical field theory}, Rev. Math. Phys. 34 (2022),
no. 8, Paper No. 2250022, 41 pp. https://doi.org/10.1142/S0129055X22500222.

\bibitem {Zuniga-PhysA-1}W. A. Z\'{u}\~{n}iga-Galindo, \textit{Eigen's paradox
and the quasispecies model in a non-Archimedean framework}, Physica A:
Statistical Mechanics and its Applications, \textbf{602}, 127648 (2022), https://doi.org/10.1016/j.physa.2022.127648.

\bibitem {Zuniga-PhysA-2}W. A. Z\'{u}\~{n}iga-Galindo , \textit{Ultrametric
diffusion, rugged energy landscapes and transition networks}, Physica A:
Statistical Mechanics and its Applications \textbf{597}, 127221 (2022), https://doi.org/10.1016/j.physa.2022.127221.

\bibitem {Zuniga-JMP-2020}W. A. Z\'{u}\~{n}iga-Galindo, B. A. Zambrano-Luna,
E. Le\'{o}n-Cardenal, \textit{Graphs, local zeta functions, log-Coulomb gases,
and phase transitions at finite temperature}, J. Math. Phys. \textbf{63},
013506 (2022), https://doi.org/10.1063/5.0070683.

\bibitem {Zuniga-JMP-2020-1}W. A. Z\'{u}\~{n}iga-Galindo, Sergii M., Torba,
\textit{Non-Archimedean Coulomb gases}, J. Math. Phys. \textbf{61}, 013504
(2020) https://doi.org/10.1063/1.5127191.

\bibitem {Zuniga-JFAA}W. A. Z\'{u}\~{n}iga-Galindo, \textit{ Non-Archimedean
white noise, pseudodifferential stochastic equations, and massive Euclidean
fields,} J. Fourier Anal. Appl. \textbf{23}, 288--323 (2017), https://doi.org/10.1007/s00041-016-9470-1.

\bibitem {Zuniga-LNM-2016}W. A. Z\'{u}\~{n}iga-Galindo,
\textit{Pseudodifferential equations over non-Archimedean spaces. Lectures
Notes in Mathematics 2174,} Switzerland, Springer, (2016), https://doi.org/10.1007/978-3-319-46738-2.

\bibitem {Zuniga-JFAA-2015}W. A. Z\'{u}\~{n}iga-Galindo, \textit{The
non-Archimedean stochastic heat equation driven by Gaussian noise}, J. Fourier
Anal. Appl. \textbf{21}, 600--627 (2015), https://doi.org/10.1007/s00041-014-9383-9.

\bibitem {Zuniga et al}W. A. Z\'{u}\~{n}iga-Galindo, Cuiyu He, B. A.
Zambrano-Luna, $p$\textit{-Adic statistical field Theory and convolutional
depp Boltzmann machines}. In progress.
\end{thebibliography}
\end{document}